\begin{document}

\newcommand\relatedversion{}
\renewcommand\relatedversion{\thanks{The full version of the paper can be accessed at \protect\url{https://arxiv.org/abs/1902.09310}}} 

\title{\Large Faster Algorithms for Bounded Knapsack and Bounded Subset Sum Via Fine-Grained Proximity Results}
\author{
  Lin Chen\thanks{ chenlin198662@gmail.com. Texas Tech University. Partially supported by NSF CCF [No. 2004096]}
  \and 
  Jiayi Lian\thanks{jiayilian@zju.edu.cn. Zhejiang University.}
  \and
  Yuchen Mao\thanks{maoyc@zju.edu.cn. Zhejiang University. Partially supported by NSFC [No. 12271477]}
  \and
  Guochuan Zhang\thanks{zgc@zju.edu.cn. Zhejiang University. Partially supported by NSFC [No. 12131003]}
}

\date{}

\maketitle

\begin{abstract} \small\baselineskip=12pt We investigate pseudopolynomial-time algorithms for Bounded Knapsack and Bounded Subset Sum. Recent years have seen a growing interest in settling their fine-grained complexity with respect to various parameters. For Bounded Knapsack, the number of items $n$ and the maximum item weight $w_{\max}$ are two of the most natural parameters that have been studied extensively in the literature. The previous best running time in terms of $n$ and $w_{\max}$ is $O(n + w^3_{\max})$ [Polak, Rohwedder, W\k{e}grzycki '21]. 
There is a conditional lower bound of $(n + w_{\max})^{2-o(1)}$ based on $(\min,+)$-convolution hypothesis [Cygan, Mucha, Węgrzycki, Włodarczyk '17]. We narrow the gap significantly by proposing an $\widetilde{O}(n + w^{12/5}_{\max})$-time algorithm. Our algorithm works for both $0$-$1$ Knapsack and Bounded Knapsack. 
   Note that in the regime where $w_{\max} \approx n$, our algorithm runs in $\widetilde{O}(n^{12/5})$ time, while all the previous algorithms require $\Omega(n^3)$ time in the worst case.

   For Bounded Subset Sum, we give two algorithms running in $\widetilde{O}(nw_{\max})$ and $\widetilde{O}(n + w^{3/2}_{\max})$ time, respectively. These results match the currently best running time for 0-1 Subset Sum. Prior to our work, the best running times (in terms of $n$ and $w_{\max}$) for Bounded Subset Sum are $\widetilde{O}(n + w^{5/3}_{\max})$ [Polak, Rohwedder, Węgrzycki '21]  and $\widetilde{O}(n + \mu_{\max}^{1/2}w_{\max}^{3/2})$ [implied by Bringmann '19 and Bringmann, Wellnitz '21], where $\mu_{\max}$ refers to the maximum multiplicity of item weights.\end{abstract}

\section{Introduction}
Knapsack and Subset Sum are two of the most fundamental problems in combinatorial optimization.  In 0-1 Knapsack, we are given a set of $n$ items and a knapsack of capacity $t$.  Each item $i$ has a weight $w_i$ and a profit $p_i$.  Assume $t, p_i, w_i \in \mathbb{N}$.  We should select items so as to maximize the total profit subject to the capacity constraint.  Subset Sum is a special case of Knapsack where every item has its profit equal to its weight. In 0-1 Subset Sum, given a set of $n$ items with weight $\{w_i\}_{i \in {[n]}}$ and a target $t$, we are asked whether there is some subset of items whose total weight is $t$.  The two problems can be naturally generalized to the bounded case where each item $i$ can be selected up to $u_i$ times. We investigate Bounded Knapsack and Bounded Subset Sum. Throughout the rest of the paper, Knapsack and Subset Sum always refer to the bounded case unless stated otherwise.



Knapsack and Subset Sum are both weakly NP-hard, and can be solved in pseudopolynomial time using standard dynamic programming~\cite{Bel57}. In recent years, there has been a series of works trying to settle the best possible pseudopolynomial running time for these problems with respect to various parameters, including $n$, $t$, the total number of item copies $N = \sum_i u_i$, the maximum weight $w_{\max}$, the maximum profit $p_{\max}$, and the optimal total profit $OPT$. Table \ref{table:Knapsack} and Table \ref{table:Subset Sum} list the known pseudopolynomial-time algorithms for Knapsack and Subset Sum, respectively. Of particular interest are those algorithms that have only $n$ and $w_{\max}$ as parameters, because $n$ and $w_{\max}$ can be much smaller than $N$ and $t$, but not vice versa. 

\begin{table}[!ht]
    \centering
    \caption{Pseudopolynomial-time algorithms for Bounded Knapsack.}
    \begin{tabular}{cc}
        \toprule 
        Bounded Knapsack & Reference \\
        \hline \specialrule{0em}{0pt}{2pt}
        $\widetilde{O}(n\cdot\min\{t,OPT\})$ & Bellman \cite{Bel57} \\
        $O(N\cdot w_{\max}\cdot p_{\max})$ & Pisinger \cite{Pis99} \\
        $O(n^3\cdot w_{\max}^2)$ & Tamir \cite{Tam09} \\
        $\widetilde{O}(n+w_{\max}\cdot t)$ & Kellerer and Pferschy \cite{KP04}, also \cite{BHSS18,AT19} \\
        $\widetilde{O}(n+p_{\max}\cdot t)$ & Bateni et al. \cite{BHSS18} \\
        $\widetilde{O}(n\cdot w_{\max}^2\cdot\min\{n, w_{\max}\})$ & Bateni et al. \cite{BHSS18} \\
        $O(N\cdot \min\{w_{\max}^2,p_{\max}^2\})$ & Axiotis and Tzamos \cite{AT19} \\
        $\widetilde{O}(n\cdot\min\{w_{\max}^2,p_{\max}^2\})$ & Eisenbrand and Weismantel \cite{EW19} \\
        ${O}(n+\min\{w_{\max}^3,p_{\max}^3\})$ & Polak et al. \cite{PRW21} \\
        $\widetilde{O}(n+(t+OPT)^{1.5})$ & Bringmann and Cassis \cite{BC22} \\
        $\widetilde{O}(N\cdot \min\{w_{\max}\cdot p_{\max}^{2/3}, p_{\max}\cdot w_{\max}^{2/3}\})$ & Bringmann and Cassis \cite{BC23} \\
        $\widetilde{O}(n +\min\{w_{\max}^{12/5}, p_{\max}^{12/5}\})$ & This Paper \\
        \bottomrule 
    \end{tabular}
    \label{table:Knapsack}
\end{table}

\begin{table}[!ht]
    \centering
    \captionsetup{width=0.72\textwidth}
    \caption{Pseudopolynomial-time algorithms for Bounded Subset Sum. In (*), $\mu_{\max} = \max_{w}\{\sum_{i: w_i = w} u_i\}$.}
    \begin{tabular}{cc}
        \toprule 
        Bounded Subset Sum & Reference \\
        \hline \specialrule{0em}{0pt}{2pt}
        $\widetilde{O}(n\cdot\min\{t,OPT\})$ & Bellman \cite{Bel57} \\
        $O(N\cdot w_{\max})$ & Pisinger \cite{Pis99} \\
        $\widetilde{O}(n+\min\{\sqrt{n}\cdot t, t^{5/4}\})$ & Koiliaris and Xu \cite{KX19} \\
        $\widetilde{O}(n+t)$ & Bringmann \cite{Bri17}, Jin and Wu \cite{JW18} \\
        $\widetilde{O}(N+\mu_{\max}^{1/2}\cdot w_{\max}^{3/2})${*} & Bringmann and Wellnitz \cite{BW21} + \cite{Bri17}\\
        $\widetilde{O}(n+w_{\max}^{5/3})$ & Polak et al. \cite{PRW21}\\
        $\widetilde{O}(n\cdot w_{\max})$ & This Paper\\
        $\widetilde{O}(n+w_{\max}^{3/2})$ & This Paper\\
        \bottomrule 
    \end{tabular}
    \label{table:Subset Sum}
\end{table}

For Knapsack, the standard dynamic programming has a running time of $O(N^2w_{\max})$ as $t \leq Nw_{\max}$.  Tamir~\cite{Tam09} gave an $O(n^3w^2_{\max})$-time algorithm, which is the first algorithm with running time depending only on $n$ and $w_{\max}$. The results via fast $(\min,+)$-convolution in Reference~\cite{AT19,BHSS18,KP04} imply an $\widetilde{O}(Nw^2_{\max})$-time\footnote{In the paper we use an $\widetilde{O}(\cdot)$ notation to hide polylogarithmic factors.} algorithm.  Bateni et al.~\cite{BHSS18} also gave an improved $\widetilde{O}(n\cdot w_{\max}^2\cdot\min\{n, w_{\max}\})$-time algorithm. Eisenbrand and Weismantel~\cite{EW19} proved a proximity result, with which an $\widetilde{O}(nw^2_{\max})$-time algorithm for Knapsack can be obtained. Combining the proximity result~\cite{EW19} and the convolution framework~\cite{AT19,BHSS18,KP04}, Polak et al.~\cite{PRW21} showed that Knapsack can be solved in $O(n + w^3_{\max})$ time. There is also a conditional lower bound (for 0-1 Knapsack) of $(n + w_{\max})^{2-o(1)}$ based on the $(\min,+)$-convolution conjecture~\cite{CMWW19,KPS17}. It remains open whether Knapsack can be solved in $O(n + w^2_{\max})$ time~\cite{BC22}.

For Subset Sum, Pisinger~\cite{Pis99} improved the standard dynamic programming and obtained a running time of $O(N w_{\max})$. We remark that the dense Subset Sum result by Galil and Margalit~\cite{GM91} implies an $\widetilde{O}(n + w^{3/2}_{\max})$-time algorithm, but their algorithm requires all $w_i$ to be distinct and $u_{\max} = 1$ and is not applicable for Bounded Subset Sum. Hence, we do not include it in Table~\ref{table:Subset Sum}. Nevertheless, Bringmann and Wellnitz~\cite{BW21} later refined and extended the dense Subset Sum result, and their result implies an $\widetilde{O}(N + \mu_{\max}^{1/2}w^{3/2}_{\max})$-time algorithm where $\mu_{\max} = \max_w \sum_{i : w_i = w}u_i$. (Note that $\sum_{i : w_i = w}u_i$ is the total number of item copies whose weight is $w$). Polak et al.~\cite{PRW21} gave the first algorithm for Bounded Subset Sum that depends only on $n$ and $w_{\max}$. Their algorithm runs in $\widetilde{O}(n + w^{5/3}_{\max})$-time. There is a conditional lower bound of $\Omega(n + w^{1-o(1)}_{\max})$ implied by the Set Cover Hypothesis~\cite{Bri17} and Strong Exponential Time Hypothesis~\cite{ABHS22}.


\subsection{Our Results}
\begin{theorem}\label{thm:knapsack}
    There is an $\widetilde{O}(n+w_{\max}^{12/5})$-time algorithm for Bounded Knapsack.
\end{theorem}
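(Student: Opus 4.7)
The plan is to sharpen the proximity-plus-$(\max,+)$-convolution pipeline of Polak, Rohwedder and W\k{e}grzycki~\cite{PRW21}, whose $O(n+w_{\max}^3)$ bound I would push down to $\widetilde{O}(n+w_{\max}^{12/5})$ by establishing a finer proximity result and re-balancing the resulting dynamic program.

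The pipeline begins by solving the LP relaxation of Bounded Knapsack in $\widetilde{O}(n)$ time, obtaining a basic feasible solution $x^{LP}$ supported on at most two fractional items. The Eisenbrand--Weismantel proximity theorem then guarantees an integer optimum $x^\star$ with $\|x^\star-x^{LP}\|_1 = O(w_{\max}^2)$, reducing the task to computing an integer correction $\Delta x$ of $\ell_1$-norm $O(w_{\max}^2)$ on items of weight at most $w_{\max}$. The additive $\widetilde{O}(n)$ term is paid exactly once, here; every subsequent quantity depends only on $w_{\max}$. One then groups the remaining items by common weight into at most $w_{\max}$ classes and absorbs each class into a running DP array via the Axiotis--Tzamos concave $(\max,+)$-convolution; each convolution runs in time linear in the length $L$ of the DP array, so with $L=O(w_{\max}^2)$ the total cost is $O(w_{\max}\cdot L)=O(w_{\max}^3)$, which recovers~\cite{PRW21}.

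The key new step is a fine-grained proximity result that lets $L$ shrink below $w_{\max}^2$ once items are split at a threshold $k$. I would try to prove, via a refined Steinitz/exchange argument that tracks per-class information, that heavy-item corrections ($w_i\geq k$) lie in a weight window of length $\widetilde{O}(w_{\max}^2/k^{\alpha})$ for some $\alpha>0$, while the light-item corrections ($w_i<k$) fit in a window of length $\widetilde{O}(k\cdot w_{\max})$ and can be handled using dense-subset-sum / additive-combinatorics tools in the spirit of Bringmann and Wellnitz~\cite{BW21}. Processing the two halves separately, combining them via one final concave convolution, and balancing with $k=w_{\max}^{2/5}$ should yield the claimed $\widetilde{O}(w_{\max}^{12/5})$ running time. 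The same scheme, run on profits instead of weights, gives the symmetric $p_{\max}^{12/5}$ bound.

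The main obstacle is the fine-grained proximity result itself. The plain Eisenbrand--Weismantel bound is tight as a single-shot $\ell_1$ statement, so any improvement must extract structure at the level of individual weight classes. What makes this harder than in the Subset Sum setting is that items carry arbitrary profits, blocking a direct density-based argument; one needs a \emph{profit-aware} refinement of the exchange lemma, or a reduction to a small family of profit-sliced Boolean instances. Once that lemma is in hand, the two-regime DP and the parameter balance are routine.
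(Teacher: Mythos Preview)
Your high-level framework is right: proximity plus concave $(\max,+)$-convolution, with the goal of replacing the single $O(w_{\max}^2)$ window by something finer. But the decomposition you propose --- splitting by \emph{weight magnitude} at a threshold $k$ --- does not give the proximity refinement you need. Your light-item bound $\widetilde{O}(k\cdot w_{\max})$ is trivially true (at most $O(w_{\max})$ items change, each of weight $<k$). The heavy-item claim, however, is false as stated: take an instance in which \emph{every} item has weight $\geq k$ (e.g.\ all weights in $[w_{\max}/2,w_{\max}]$). Then the entire correction lives on the heavy side, and its weight can be $\Theta(w_{\max}^2)$; there is no $\alpha>0$. Nothing in Steinitz-type arguments distinguishes heavy from light weights in this way.

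The paper's decomposition is along a different axis: it partitions the weight classes by \emph{efficiency}, i.e.\ by how close the items sit to the break item in the $p_i/w_i$ ordering. One identifies a small set $W^*$ of ``median-efficiency'' weight classes --- those appearing among the items immediately left and right of the break item --- and proves via an additive-combinatorics exchange argument (Bringmann--Wellnitz density, applied to the \emph{weight multisets} of items kept/deleted near the break) that all remaining weight classes together contribute only $\widetilde{O}(w_{\max}^{3/2})$ to $\Delta$. This already gives $\widetilde{O}(n+w_{\max}^{5/2})$. The step down to $12/5$ requires a \emph{second} partition of $W^*$ itself, this time exploiting multiplicities: a variant of the density lemma in which the set becomes dense not by having $\widetilde{\Theta}(w_{\max}^{1/2})$ distinct values but by having $\widetilde{\Theta}(w_{\max}^{2/5})$ distinct values each with multiplicity $\geq w_{\max}^{1/5}$. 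The resulting three-way partition $(W^+,\overline{W^+},\overline{W^*})$ has sizes $\widetilde{O}(w_{\max}^{2/5}),\widetilde{O}(w_{\max}^{3/5}),w_{\max}$ and $\Delta$-bounds $O(w_{\max}^2),O(w_{\max}^{9/5}),O(w_{\max}^{7/5})$, each product being $\widetilde{O}(w_{\max}^{12/5})$.

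So the missing idea is twofold: (i) the correct partition is by efficiency-distance to the break item, not by weight magnitude; and (ii) the exchange argument is not a per-class Steinitz refinement but a dense-subset-sum statement (if the kept items near the break have enough distinct weights, any large-volume deletion on the far side can be swapped for a same-weight subset of the near side, contradicting optimality/minimality of $\|\mathbf{z}-\mathbf{g}\|_1$). Your instinct that Bringmann--Wellnitz is relevant is correct, but it enters through this exchange, not through the light items.
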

Our result significantly narrows the gap between the previous upper bound of $O(n + w^3_{\max})$ and the lower bound of $\Omega(n + w_{\max}^{2-o(1)})$. Moreover, in the regime where $w_{\max}$ is $O(n)$, our algorithm outperforms all the previous algorithms that have only $n$ and $w_{\max}$ as parameters. In particular, when $w_{\max}\approx n$ and $t\approx n^2$, our algorithm is the first to guarantee a subcubic running time in this regime, while all previous algorithms require $\Omega(n^3)$ time\footnote{The only exception is the $\widetilde{O}(N w_{\max}p^{2/3}_{\max})$-time algorithm by Bringmann~\cite{BC23}, but they additionally require $p_{\max} \leq O(n^{3/2})$ and $N \approx n$.}.  

Due to the symmetry of weights and profits, we can also obtain an $\widetilde{O}(n + p^{12/5}_{\max})$-time algorithm by exchanging the roles of weights and profits~\cite{AT19,PRW21}.
\begin{corollary}\label{coro:knapsack}
    There is an $\widetilde{O}(n+p_{\max}^{12/5})$-time algorithm for Bounded Knapsack.
\end{corollary}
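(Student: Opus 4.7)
The plan is a direct reduction that exploits the weight-profit symmetry of Bounded Knapsack, a trick previously used in \cite{AT19, PRW21}. Given an input instance with items $(w_i, p_i, u_i)$ and capacity $t$, the first step is to define the auxiliary function
\[
W(P) \;:=\; \min\Bigl\{\sum_{i=1}^n w_i x_i \;:\; \sum_{i=1}^n p_i x_i = P,\ 0 \le x_i \le u_i,\ x_i \in \mathbb{Z}\Bigr\}
\]
for every integer $P$ in $\{0, 1, \ldots, \sum_{i} p_i u_i\}$. The optimal profit of the original instance equals $\max\{P : W(P) \le t\}$, so once the entire table $W(\cdot)$ is known, a single linear scan retrieves the answer.

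The second step is to observe that computing the table $W(\cdot)$ is itself a Bounded Knapsack problem in which the roles of $w_i$ and $p_i$ have been swapped: the ``weights'' in the new instance are the original profits $p_i$ (with the same multiplicities $u_i$), and the ``profits'' are the original weights $w_i$, which we minimize rather than maximize -- a cosmetic change handled by using the $(\min,+)$-convolution in place of $(\max,+)$. Consequently, the parameter that appears as $w_{\max}$ in the running time of Theorem~\ref{thm:knapsack} is precisely $p_{\max}$ for the swapped instance. Applying the algorithm of Theorem~\ref{thm:knapsack} therefore produces the full table $W(\cdot)$ in $\widetilde{O}(n + p_{\max}^{12/5})$ time, establishing the corollary.

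The only point to double-check, and the closest thing to an obstacle, is that the algorithm underlying Theorem~\ref{thm:knapsack} outputs the complete optimum-versus-capacity table rather than a single optimal value. This is automatic in the $(\min,+)$-convolution framework on which Theorem~\ref{thm:knapsack} rests -- the table is exactly what such a convolution computes -- and it matches the way the symmetry argument is used in~\cite{AT19, PRW21}. I therefore do not anticipate any substantive difficulty; the entire content of Corollary~\ref{coro:knapsack} is carried by Theorem~\ref{thm:knapsack}.
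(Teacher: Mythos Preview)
Your proposal is correct and matches the paper's approach: the paper simply states that the corollary follows ``due to the symmetry of weights and profits'' by ``exchanging the roles of weights and profits~\cite{AT19,PRW21},'' and you have spelled out precisely that reduction. The one caveat you flag (needing the full weight-versus-profit table rather than a single value) is indeed handled by the $(\min,+)$-convolution framework of Lemma~\ref{lem:conv-by-group}, exactly as in~\cite{PRW21}.
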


\begin{theorem}\label{thm:subset-sum-a}
    There is an $\widetilde{O}(nw_{\max})$-time algorithm for Bounded Subset Sum.
\end{theorem}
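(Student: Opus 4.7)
The plan is to compute a residue-based dynamic program modulo the pivot $w^\star := w_{\max}$, and to update it one item at a time via a sliding-window maximum that runs in $\widetilde{O}(w_{\max})$ per item. I would separate out the $\mu_{w^\star}$ items of weight $w^\star$ and, for each residue $r \in \{0, \ldots, w^\star - 1\}$, define $M[r]$ to be the maximum $s \in [0, t]$ with $s \equiv r \pmod{w^\star}$ achievable using only items of weight $\neq w^\star$ (or $-\infty$ if none). Since every copy of a $w^\star$-item preserves the residue modulo $w^\star$ while adding $w^\star$ to the sum, $t$ admits a bounded subset-sum representation if and only if $M[t \bmod w^\star] \geq t - \mu_{w^\star} w^\star$: the shortfall $t - M[\,\cdot\,]$ is then automatically a non-negative multiple of $w^\star$ at most $\mu_{w^\star} w^\star$, absorbable by the right number of $w^\star$-items, while in the other direction the maximality of $M$ rules out any other feasible decomposition.

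To compute $M$, I would initialize $M[0] = 0$ and $M[r] = -\infty$ otherwise, and process each item $i$ (weight $w_i \neq w^\star$, multiplicity $u_i$) with the update $M_{\mathrm{new}}[r'] := \max\{M[(r' - k w_i) \bmod w^\star] + k w_i : 0 \leq k \leq u_i\}$ capped at $t$. With $g := \gcd(w_i, w^\star)$ and $\ell := w^\star/g$, the mapping $k \mapsto (r' - k w_i) \bmod w^\star$ traces a cyclic orbit of length $\ell$, and the $w^\star$ residues partition into $g$ such orbits. Along an orbit $q_0, q_1, \ldots, q_{\ell-1}$ (with $q_j = (q_0 + j w_i) \bmod w^\star$), setting $\tilde M[j] := M[q_j] - j w_i$ reduces the update at target $q_m$ to $m w_i$ plus the largest $\tilde M[j]$ over a sliding window of width $\min(u_i, \ell - 1) + 1$ ending at $m$, subject to the cap $\tilde M[j] \leq t - m w_i$. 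I would maintain this filtered sliding-window maximum with a balanced BST (or segment tree) over the current window, supporting inserts, deletes, and ``largest value $\leq T$'' queries in $O(\log \ell)$ time each; summing over $\ell$ positions and $g$ orbits gives $\widetilde{O}(w^\star) = \widetilde{O}(w_{\max})$ per item and $\widetilde{O}(n w_{\max})$ overall.

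The main obstacle is correctly handling the cap $\leq t$ in the sliding-window maximum: a plain monotone deque is inadequate because a value discarded from the back under domination may later become the best valid entry once the dominator is ruled out by the cap, so a richer structure such as the balanced BST above is needed (hence the polylogarithmic factor in $\widetilde{O}$). A secondary subtlety is the cyclic wrap when $u_i \geq \ell$, which I would handle uniformly by running the same procedure on a duplicated orbit with the offset shifted by $\ell w_i$ and the cap adjusted accordingly; the monotone decrease of the threshold $t - m w_i$ in $m$ ensures that entries pruned from the filtered structure stay pruned. Once these details are pinned down, correctness follows directly from the definition of $M$ and the reduction to sliding-window maxima, and the claimed time bound is immediate.
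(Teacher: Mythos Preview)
Your dynamic program for $M$ is unsound: the state ``maximum achievable sum $\le t$ at residue $r$'' does not carry enough information to support the update you write. When $M[(r'-kw_i)\bmod w^\star]+kw_i$ overshoots $t$, your cap discards that $k$; but a \emph{smaller} achievable sum at the same source residue --- which $M$ no longer records --- could still give a valid (and in fact maximal) value at $r'$. Your balanced-BST trick faithfully implements the stated rule, but the rule itself loses information.

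Concretely, take $w^\star=w_{\max}=10$, items of weight $5$ (multiplicity $4$), weight $7$ (multiplicity $2$), weight $10$ (multiplicity $1$), and $t=14$. Processing weight $5$ first yields $M[0]=10$ (both $0$ and $10$ are achievable and $\le 14$; only the maximum survives) and $M[5]=5$. Then processing weight $7$, the only nontrivial candidate for residue $4$ is $k=2$, giving $M[0]+14=24>14$, which the cap discards; hence $M[4]=-\infty$ and the final test $M[4]\ge 14-1\cdot 10=4$ fails. Your algorithm outputs NO, yet $14=2\cdot 7$ is achievable. The fact that the outcome here depends on the processing order (the other order happens to work on this instance) already shows that what you are maintaining is not a genuine invariant of the DP; you neither fix an order nor argue that one avoiding this defect exists.

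The paper's proof is entirely different. After reducing to $u_i\le 4w_{\max}$ via proximity, it writes each $u_j$ in a base-$2$ form $u_j=\sum_i u_j[i]\cdot 2^i$ with every digit $u_j[i]<4$, producing $\ell=O(\log w_{\max})$ multisets $X_0,\ldots,X_\ell$, each of size $O(n)$ with bounded multiplicity. A combinatorial lemma shows $\mathcal{S}(X)=\sum_i 2^i\mathcal{S}(X_i)$. Each $\mathcal{S}(X_i)$ is computed by FFT sumsets in $\widetilde{O}(nw_{\max})$ time, and membership of $t$ in $\sum_i 2^i\mathcal{S}(X_i)$ is decided by a residue-peeling FFT argument (Lemma~\ref{lemma:k-sum}) in time $\widetilde{O}\bigl(\sum_i \texttt{m}_{\mathcal{S}(X_i)}\bigr)=\widetilde{O}(nw_{\max})$, without ever materialising the scaled sets $2^i\mathcal{S}(X_i)$.
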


\begin{theorem}\label{thm:subset-sum-b}
    There is an $\widetilde{O}(n + w^{3/2}_{\max})$-time randomized algorithm for Bounded Subset Sum.
\end{theorem}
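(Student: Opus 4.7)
The plan is to follow the two-phase template used by Polak, Rohwedder, and W\k{e}grzycki~\cite{PRW21} for their $\widetilde O(n+w_{\max}^{5/3})$ algorithm --- first use a proximity lemma to reduce to a ``few-copy'' residual instance, then solve that residual by FFT-based subset-sum techniques --- but to sharpen the solver down to $\widetilde O(w_{\max}^{3/2})$ by combining it with the dense subset-sum machinery of Bringmann and Wellnitz~\cite{BW21}. The hope is that the final bound should match the best known for $0$-$1$ Subset Sum because, after the proximity reduction, the residual instance really does behave like a $0$-$1$ instance of the same size.

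For the proximity step I would prove that whenever the target $t$ is feasible, some achieving multiplicity vector $x$ satisfies $\|x-x^G\|_1 = \widetilde O(w_{\max})$, where $x^G$ is a fixed canonical greedy packing. The proof is a cycle-exchange on the signed difference $x-x^G$, which is a weight-sum-zero signed multiset: if its $\ell_1$-norm exceeds $w_{\max}$, pigeonholing the prefix sums of its weights modulo $w_{\max}$ extracts a shorter zero-sum sub-multiset that can be swapped without losing feasibility. This is the Eisenbrand-Weismantel-style proximity specialized to Subset Sum, and is simpler than the Knapsack version because no profit must be preserved. Fixing $x^G$ then leaves a signed residual problem: find $\delta$ with $\|\delta\|_1=\widetilde O(w_{\max})$, coordinatewise inside the multiplicity box, and $\langle\delta,w\rangle=t-\langle x^G,w\rangle$. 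Splitting coordinates into positive and negative parts gives a Bounded Subset Sum instance with total copy count $N'=\widetilde O(w_{\max})$ and weights at most $w_{\max}$.

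On this residual I would bucket items by weight class and separate ``heavy'' classes (multiplicity $>\sqrt{w_{\max}}$) from ``light'' ones. Since total multiplicity is $\widetilde O(w_{\max})$, there are only $\widetilde O(\sqrt{w_{\max}})$ heavy classes; each contributes an arithmetic progression of achievable sums, which is folded into the running set of achievable sums by a single windowed FFT convolution. The light classes, having $\mu_{\max}\le\sqrt{w_{\max}}$, are then fed into~\cite{BW21}, whose running time $\widetilde O(N'+\mu_{\max}^{1/2}w_{\max}^{3/2})$ fits within the target. Randomization enters both through the color-coding inside~\cite{BW21} and through Bringmann-style~\cite{Bri17} hash-based convolution.

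The main obstacle is keeping the heavy-class FFT convolutions within the $\widetilde O(w_{\max}^{3/2})$ budget, since a black-box merge would cost $\widetilde O(w_{\max}^{2})$ per convolution. The saving I intend to exploit is the Step~1 proximity bound: at every intermediate stage only the partial sums inside an active window of radius $\widetilde O(w_{\max})$ around the still-reachable residual target need to be tracked, because the final $\delta$ uses only $\widetilde O(w_{\max})$ signed copies in total. Proving that this ``active window'' is stable under sequential convolution, and that the windows for different heavy classes can be aligned so their total FFT cost is $\widetilde O(\sqrt{w_{\max}}\cdot w_{\max}) = \widetilde O(w_{\max}^{3/2})$, is the technical heart of the argument and the step I expect to be the most delicate.
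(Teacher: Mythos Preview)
Your plan has a genuine gap: the arithmetic in both the light-class and heavy-class steps does not close, and the missing ingredient is a \emph{stronger} proximity bound than the one you invoke.

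For the light classes you set $\mu_{\max}\le\sqrt{w_{\max}}$ and then quote the $\widetilde O(N'+\mu_{\max}^{1/2}w_{\max}^{3/2})$ bound from~\cite{BW21}. Plugging in gives $\widetilde O(w_{\max}+w_{\max}^{1/4}\cdot w_{\max}^{3/2})=\widetilde O(w_{\max}^{7/4})$, already over budget. For the heavy classes, your ``active window of radius $\widetilde O(w_{\max})$'' is not justified by the proximity you proved: $\|\delta\|_1=\widetilde O(w_{\max})$ bounds the \emph{number of copies}, so the total signed weight $\langle|\delta|,w\rangle$ is only bounded by $\widetilde O(w_{\max}^2)$, and the intermediate partial sums can range over a window of that radius. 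With $\widetilde O(\sqrt{w_{\max}})$ heavy classes and a window of size $\widetilde O(w_{\max}^2)$ per convolution you are back to $\widetilde O(w_{\max}^{5/2})$.

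What the paper does instead is to prove a \emph{fine-grained} proximity bound directly on the weighted distance: it shows there is a greedy $\veg$ and an optimal $\vez$ with $\sum_i w_i|g_i-z_i|\le\widetilde O(w_{\max}^{3/2})$, not merely $\widetilde O(w_{\max}^2)$. This stronger bound is obtained via the additive-combinatorics exchange argument (Lemma~\ref{lem:hit-a}) and requires the set $W$ of distinct weights to be large, $|W|\ge\widetilde\Omega(w_{\max}^{1/2})$. The algorithm therefore branches: if $|W|\le\widetilde O(w_{\max}^{1/2})$, the paper's own $\widetilde O(n+|W|w_{\max})$ algorithm (Theorem~\ref{thm:subset-sum-a}) already runs in $\widetilde O(n+w_{\max}^{3/2})$; otherwise the stronger proximity holds, and the residual problem becomes two ordinary subset-sum instances with target $t'=\widetilde O(w_{\max}^{3/2})$, solved by Bringmann's $\widetilde O(n+t')$ algorithm~\cite{Bri17}. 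So~\cite{BW21} is used not as a black-box solver on the residual, but as the structural tool that yields the sharper proximity in the dense case.
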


The best algorithms (in terms of $n$ and $w_{\max}$) for 0-1 Subset Sum run in $O(nw_{\max})$ and $\widetilde{O}(n + w^{3/2}_{\max})$ time, but they cannot be generalized to Bounded Subset Sum. Indeed, prior to our work, the best-known algorithm for Bounded Subset Sum in terms of these two parameters is that of Polak et al.~\cite{PRW21} running in $\widetilde{O}(n+w_{\max}^{5/3})$ time. Pisinger's algorithm~\cite{Pis99} runs in $O(n w_{\max})$ time on 0-1 Subset Sum but requires $O(Nw_{\max})$ on Bounded Subset Sum\footnote{Note that the idea of binary encoding each $u_i$ as $1 + 2 + 4 + \ldots +2^{\log u_i}$ (bundling item copies) does not help to reduce the running time to $O(n w_{\max})$, because although it reduces $u_i$ to a constant, it increases $w_{\max}$ by a factor of $u_{i}$.}. Meanwhile, the works of Galil and Margalit~\cite{GM91} and Bringmann and Wellnitz~\cite{BW21} imply a running time of $\widetilde{O}(n + w^{3/2}_{\max})$ only when all $\{w_1, \ldots, w_n\}$ are distinct and all $u_i = 1$.  For Bounded Subset Sum, their algorithms either do not work or have additional factors in the running time. It is natural to ask whether Bounded Subset Sum can be solved as fast as 0-1 Subset Sum.  Our results provide an affirmative answer to this question (at least with respect to the known algorithms).

\subsection{Technique Overview}
\paragraph{Knapsack} A proximity result~\cite{EW19} states that a certain greedy solution $\ve g$ and an optimal solution $\ve z$ differ in $O(w_{\max})$ items. It directly follows that the total weight $\Delta$ of these items is $O(w^2_{\max})$. Therefore, it is possible to obtain ${\ve z}$ from ${\veg}$ by deleting and adding only $O(w^2_{\max})$ volume of items.  
The upper bound on $\Delta$ plays an important role in accelerating the $(\min, +)$-convolution framework~\cite{AT19,BHSS18,KP04} for Knapsack. The $(\min, +)$-convolution-based algorithm first partitions items by their weights, and for each group, computes an all-capacities solution which is a vector of length $t+1$. These vectors are combined one by one using a linear-time concave $(\min, +)$-convolution algorithm~\cite{AKM+87}, and every intermediate vector is also truncated to be of length $t$. 
Since the concave $(\min, +)$-convolution are performed for at most $w_{\max}$ times, each taking $O(t)$ time, the total running time is $O(n+ w_{\max}t)$. Applying the proximity result, every solution vector and intermediate vector can be truncated to be of length roughly $w^2_{\max}$. Therefore, a running time of $O(n + w^3_{\max})$ can be obtained in~\cite{PRW21}.

The upper bound of $O(w^2_{\max})$ on $\Delta$ is tight in the sense that there are instances on which $\Delta = \Theta(w^2_{\max})$. Nevertheless, $\Delta$ only characterizes the total difference of ${\ve g}$ and ${\ve z}$ on all groups, and it does not provide satisfactory answers to questions like: \emph{how much can a particular group (or a particular set of groups) contribute to $\Delta$?} Intuitively, since the total contribution of all groups is $O(w^2_{\max})$, only a few groups can have large contribution. Our first fine-grained proximity result states that only $\widetilde{O}(w_{\max}^{1/2})$ groups can contribute as much as $O(w^2_{\max})$, and rest of the groups together can contribute at most $O(w^{3/2}_{\max})$. Then except for $\widetilde{O}(w_{\max}^{1/2})$ convolutions, every convolution can be done in $O(w^{3/2}_{\max})$ time, which leads to an $\widetilde{O}(n + w^{5/2}_{\max})$-time algorithm. Then we further improve the proximity result and obtain an $\widetilde{O}(n + w^{12/5}_{\max})$ algorithm.

We define the efficiency of an item to be the ratio of its profit to weight. Our proximity result is based on a natural intuition that only the choice of items with median efficiency can differ a lot in ${\ve g}$ and ${\ve z}$, while the items with high efficiency or low efficiency are so good or so bad that both ${\ve g}$ and ${\ve z}$ tend to make the same decision. Formal proof of this idea requires additive combinatorics tools developed by a series of works including \cite{Sar94, GM91, Lev03, Bri17}. Basically, when we select enough items with median efficiency, the multiset of weights of these items becomes ``dense''. Then we can use tools from dense subset sum due to Bringmann and Wellnitz~\cite{BW21} to show that items with high and low efficiency cannot differ too much in ${\ve g}$ and ${\ve z}$.

We remark that a recent work of Deng et al.~\cite{DJM23} also utilized the additive combinatorics result in Subset Sum~\cite{BW21} to show a proximity result in the context of approximation algorithms for Knapsack. Both our approach and theirs share a similar high-level idea, that is, if the optimal solution selects many low-efficiency items, then the high-efficiency items not selected by the optimal solution cannot form a dense set (in terms of their weights or profits), for otherwise it is possible to utilize additive combinatorics results to conduct an exchange argument on low- and high-efficiency items. There are, however, two major differences in techniques: (i). The density requirement on the high-efficiency item set in Deng et al.'s work~\cite{DJM23} is very strong. In particular, they require any large subset of this set to be dense. This works in approximation algorithms where the input instance can be reduced to a bounded instance in which item profits differ by a constant factor. For exact algorithms, such kind of density requirement is not achievable. Nevertheless, we give a similar exchange argument that only requires a weaker density condition.
 (ii). Multiplicity of item weights or profits has not been considered in Deng et al.'s work~\cite{DJM23}. We remark that the additive combinatorics result in Subset Sum by Bringmann and Wellnitz~\cite{BW21} is for multisets, which actually provides a trade-off between ``dense threshold'' and item multiplicity. Very roughly speaking, for a multiset $X$ that consists of integers up to $w$, it can become dense if it contains $\widetilde{\Theta}(\sqrt{w})$ or more distinct integers, but it can also become dense if it only contains $\widetilde{\Theta}(w^{0.4})$ distinct integers where each integer has $\widetilde{\Theta}(w^{0.2})$ multiplicities. Deng et al.'s work only utilized the former result. We achieve an $\widetilde{O}(n+w_{\max}^{5/2})$-time algorithm using the former result, and show that it can be further improved to $\widetilde{O}(n+w_{\max}^{12/5})$-time by exploiting the multiplicity to build a stronger proximity result.


\paragraph{Subset Sum} Observe that the gap between $O(nw_{\max})$ and $O(Nw_{\max})$ arises from $u_{\max}$ as $N \leq nu_{\max}$. A standard method for reducing $u_{\max}$ is to bundle item copies. For example, by bundling every $k$ copies of each item, we reduce $u_{\max}$ by a factor of $k$.  Doing this, however, increases $w_{\max}$ by the same factor.  As a consequence, any trivial treatment of the resulting instance will not improve the running time. Let $X$ be the input instance. Let ${\cal S}(X)$ be the set of all subset sums of $X$. The effect of bundling is essentially decomposing $X$ into $X = X_0 + kX_1$ where $X_0$ stands for the set of the items that are not bundled (called residual items), and $kX_1$ stands for the set of the bundled items. We can prove that ${\cal S}(X) = {\cal S}(X_0) + k {\cal S}(X_1)$. Since both the residual items and the bundled items have smaller multiplicities, computing ${\cal S}(X_0)$ and ${\cal S}(X_1)$ is faster than directly computing ${\cal S}(X)$. Computing ${\cal S}(X_0) + k{\cal S}(X_1)$ using standard FFT, however, still requires $\widetilde{O}(\texttt{m}_{{\cal S}(X_0)} + k\cdot \texttt{m}_{{\cal S}(X_1)})$ time where $\texttt{m}_{{\cal S}(X_i)}$ is the maximum element in ${\cal S}(X_i)$. Note that $\texttt{m}_{{\cal S}(X_0)} + k\cdot \texttt{m}_{{\cal S}(X_1)}$ can be as large as $Nw_{\max}$, so it leads to the failure of the bundling idea.
We propose an FFT-based algorithm that can determine whether $t\in {\cal S}(X_0) + k{\cal S}(X_1)$ in $\widetilde{O}(\texttt{m}_{{\cal S}(X_0)} +  \texttt{m}_{{\cal S}(X_1)})$ time for arbitrary $t$.  This algorithm can be extended to arbitrary many levels of bundling. After recursively applying the bundling idea for logarithmic levels, we can reduce the multiplicities of items in each level to a constant. Therefore, the ${\cal S}(X_i)$ for each level can be computed in $\widetilde{O}(nw_{\max})$ time. Then we can determine whether  $t \in \sum_i k^i{\cal S}(X_i)$ using our FFT-based algorithm in $\widetilde{O}(nw_{\max})$ total time.

Then we observe that when the number of distinct weights in the input is bounded by $\widetilde{O}(w^{1/2}_{\max})$, after $O(n)$-time preprocessing, Subset sum can be solved in $\widetilde{O}(w^{3/2}_{\max})$ time using our $\widetilde{O}(nw_{\max})$-time algorithm. It remains to tackle the case where the number of distinct weights is at least $\widetilde{\Theta}(w^{1/2}_{\max})$. In this case, the set of item weights is ``dense'', so we can apply additive combinatorics tools as we did for Knapsack.  Moreover, since every item has the same efficiency in Subset Sum, we can get stronger proximity result. Indeed, we show that there is a greedy solution $\veg$ and an optimal solution $\vez$ that differ by at most $\Delta = \widetilde{O}(w^{3/2}_{\max})$ volume of items.  Obtaining $\vez$ from $\veg$ can be basically reduced to two Subset Sum problems with $t = \widetilde{O}(w^{3/2}_{\max})$. Then using Bringmann's $\widetilde{O}(n+t)$-time algorithm \cite{Bri17} for Subset Sum, the dense case can be solved in $\widetilde{O}(n + w^{3/2}_{\max})$ time.

\subsection{Further Related Work}
For Knapsack, $N, t, p_{\max}, OPT$ are alternative parameters that have been used in the literature. The standard dynamic programming due to Bellman together with the idea of bundling item copies gives a running time of $\widetilde{O}(nt)$~\cite{Bel57,Pis99}. Using a balancing technique, Pisinger obtained a dynamic programming algorithm that runs in $O(Nw_{\max}p_{\max})$ time~\cite{Pis99}.  Several recent advances in Knapsack build upon $(\min, +)$-convolution~\cite{Cha18, AT19,BHSS18,BC22,KP04}. In particular, Bringmann and Cassis used the partition and convolve paradigm to develop an algorithm that runs in  $\widetilde{O}(N w_{\max} p_{\max}^{2/3}\})$ time~\cite{BC23}. The additive combinatorics techniques used in these papers also inspired progress in approximation algorithms of Knapsack~\cite{DJM23}.  Pseudopolynomial-time algorithms have also been studied extensively for Knapsack and Subset Sum in the unbounded case where $u_i=\infty$. See Reference~\cite{BHSS18, Bri17, BC22, CH22, JR19, Tam09}.

For Subset Sum, when taking the target $t$ as a parameter, the best-known deterministic algorithm is due to Koiliaris and Xu~\cite{KX19}, which runs in $\widetilde{O}(n+\min\{\sqrt{n}t, t^{5/4}\})$ time. The best-known randomized algorithm is due to Bringmann~\cite{Bri17}, which runs in $\widetilde{O}(n+t)$ time. An alternative randomized algorithm with almost the same running time is given by Jin and Wu~\cite{JW18}. 

\subsection{Paper Outline}
In Section~\ref{sec:Preliminaries}, we introduce necessary terminology and preliminaries.  In Section~\ref{sec:exchange}, we prove two exchange arguments that are crucial to our algorithms. In Section~\ref{sec:Knapsack-2.5}, we present a simpler algorithm to illustrate our main idea. The $\widetilde{O}(n + w^{12/5}_{\max})$-time algorithm for Knapsack is presented in Section~\ref{sec:Knapsack-2.4}. In Section~\ref{sec:Subset Sum}, we give two algorithms for Subset Sum. Omitted proofs can be found in the appendix.

\section{Preliminaries}\label{sec:Preliminaries}

\subsection{Notation} 

We use ${\cal I} = \{1, \ldots, n\}$ to denote the set of all items, each with weight $w_i$ and profit $p_i$. We assume that items are labeled in decreasing order of efficiency. That is, $p_1 / w_1 \geq \ldots \geq p_n/ w_n$. Item $i$ has $u_i$ copies.  Let $W= \{w_i : i \in {\cal I}\}$ be the set of all item weights. The maximum weight is denoted by $w_{\max}$.

Let $i$ be some item. We denote the set of item $\{1, ... , i-1\}$ as ${\cal I}_{< i}$ and $\{i, \ldots, n\}$ as ${\cal I}_{\geq i}$. Let $i$ and $j$ be two items with $i < j$, we say $i$ is to the left of $j$, and $j$ is to the right.

For $w \in W$, we denote the set of items with weight $w$ as 
\[
     {\cal I}^{w} = \{i \in {\cal I}: w_i = w\}.   
\]
For any subset $W'$ of $W$, we write $\bigcup_{w\in W'} {\cal I}^w$ as ${\cal I}^{W'}$. We also write ${\cal I}^{W'} \cap {\cal I}_{< i}$ as ${\cal I}^{W'}_{< i}$, and ${\cal I}^{W'} \cap {\cal I}_{\geq  i}$ as ${
\cal I}^{W'}_{\geq i}$. Basically, the superscript restricts the weights of items in the set, while the subscript restricts the indices.

For any subset ${\cal I}'$ of items, we refer to the multiset $\{w_i : i \in {\cal I}'\}$ as the weight multiset of ${\cal I}'$.

A solution to Bounded Knapsack can be represented by a vector $\vex$ where $x_i$ is the number of selected copies of item $i$.

\subsection{The Previous Proximity Result} 

Let $\veg$ be a maximal prefix solution that can be obtained greedily as follows. Starting with an empty knapsack, we process items in increasing order of index (i.e.,~in decreasing order of efficiency). If the knapsack has enough room for all copies of the current item, we select all these copies, and process the next item. Otherwise, we select as many copies as possible, and then stop immediately. The item at which we stop is called the \emph{break item}, denoted by $b$. Without loss of generality, we assume that no copies of the break item $b$ are selected by ${\ve g}$ (i.e., $g_b = 0$). If $g_{b} \neq 0$, we can view those copies selected by ${\ve g}$ as a new item $i'$ with $u_{i'} = g_{b}$, and set $u_{b} = u_{b} - g_{b}$. It is easy to observe that $g_i = u_i$ for all $i < b$ and that $g_i = 0$ for all $i \geq b$.

The following proximity result states that there is an optimal solution ${\ve z}$ that differs from ${\ve g}$ only in a few items. 

\begin{lemma}[\cite{EW19, PRW21}]\label{lem:proximity}
  Let $\veg$ be a maximal prefix solution of Knapsack. There exists an optimal solution $\vez$ such that 
  \[
          \|\vez - \veg\|_1 = \sum_{i=1}^{n} |z_i - g_i| \leq 2w_{\max}.
    \]
\end{lemma}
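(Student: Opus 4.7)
The plan is to establish the lemma via an exchange argument, starting from an optimal solution that minimizes its $\ell_1$-distance to $\veg$ and deriving a contradiction from the assumption that this minimum distance exceeds $2 w_{\max}$. Let $\vez$ be such a minimizer and set $d_i = z_i - g_i$. Because $g_i = u_i$ for $i < b$ we get $d_i \leq 0$ for all $i < b$, and because $g_i = 0$ for $i \geq b$ we get $d_i \geq 0$ for all $i \geq b$. It is convenient to regard the difference as two item multisets: a ``removed'' multiset $D^- \subseteq {\cal I}_{< b}$ with multiplicity $|d_i|$ and an ``added'' multiset $D^+ \subseteq {\cal I}_{\geq b}$ with multiplicity $d_i$, so that $|D^-| + |D^+| = \|\vez - \veg\|_1$.

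The core of the argument is a Steinitz-style balancing step that, whenever $|D^+| + |D^-| > 2 w_{\max}$, produces non-empty sub-multisets $A \subseteq D^+$ and $B \subseteq D^-$ with $\sum_{i \in A} w_i = \sum_{i \in B} w_i$. To construct them, I process the signed weights $\{+w_i : i \in D^+\} \cup \{-w_i : i \in D^-\}$ one at a time starting from $S_0 = 0$, always consuming an unused element from $D^-$ when the current partial sum is non-negative and an element from $D^+$ otherwise. A short induction using $|w_i| \leq w_{\max}$ confines every partial sum to the interval $[-w_{\max}, w_{\max} - 1]$, which contains only $2 w_{\max}$ integers, so generating strictly more than $2 w_{\max}$ partial sums forces a collision $S_j = S_k$ with $j < k$. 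The elements consumed strictly between steps $j$ and $k$ then sum to zero, and both $A$ and $B$ are non-empty because signed weights of one sign alone cannot cancel.

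Given $A$ and $B$, I would form $\vez'$ from $\vez$ by decreasing each $z_i$ by the multiplicity of $i$ in $A$ and increasing it by the multiplicity of $i$ in $B$. The equality $\sum_{i \in A} w_i = \sum_{i \in B} w_i$ gives $\sum_i w_i z'_i = \sum_i w_i z_i \leq t$, and the multiplicity bounds $0 \leq z'_i \leq u_i$ are preserved because the decreases undo items added beyond the greedy quota while the increases restore items within it, so $\vez'$ is feasible. The efficiency ordering $p_i/w_i \geq p_b/w_b$ for $i \in B$ and $p_i/w_i \leq p_b/w_b$ for $i \in A$, combined with the equal-weight property, gives $\sum_{i \in B} p_i \geq (p_b/w_b)\sum_{i \in A} w_i \geq \sum_{i \in A} p_i$, so $\vez'$ is also optimal while $\|\vez' - \veg\|_1 < \|\vez - \veg\|_1$, contradicting the minimality of $\vez$.

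The main obstacle I anticipate is cleanly handling the degenerate cases in which the balancing rule would exhaust $D^+$ or $D^-$ before enough partial sums are produced. If $D^- = \emptyset$, then the maximality of the prefix solution $\veg$ (so that $\sum_i w_i g_i + w_b > t$) combined with the feasibility $\sum_i w_i z_i \leq t$ forces $\sum_{i \in D^+} w_i < w_b \leq w_{\max}$ and hence $|D^+| < w_{\max}$, making the bound trivial; the symmetric case $D^+ = \emptyset$ would imply $\vez \leq \veg$ componentwise with $\vez \neq \veg$, which contradicts the optimality of $\vez$ after aligning any zero-profit items of ${\cal I}_{\geq b}$ with $\veg$. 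Once these edge cases are separated off, both $D^+$ and $D^-$ remain non-empty throughout the Steinitz process so the partial-sum invariant is maintained, yielding the tight constant $2 w_{\max}$.
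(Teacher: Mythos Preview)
The paper does not give its own proof of this lemma; it is quoted from \cite{EW19,PRW21}. So there is nothing to compare against, and the question is simply whether your argument stands on its own. It almost does --- the Steinitz--pigeonhole exchange you outline is exactly the standard route --- but one step is not justified.

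The gap is the sentence ``Once these edge cases are separated off, both $D^+$ and $D^-$ remain non-empty throughout the Steinitz process.'' Handling the cases $D^- = \emptyset$ and $D^+ = \emptyset$ at the outset says nothing about mid-process exhaustion: it is entirely possible that after some number of balanced steps you need a negative element and none remain, even though $D^-$ was non-empty initially. Your induction for the invariant $S_j \in [-w_{\max}, w_{\max}-1]$ relies precisely on always being able to follow the rule, so this is not a cosmetic omission.

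The fix is to first establish $|r| < w_{\max}$ where $r = \sum_i w_i d_i$: you already used maximality of $\veg$ to get $r < w_{\max}$, and maximality of $\vez$ (which follows from optimality once zero-profit items are discarded) gives $\sum_i w_i z_i > t - w_{\max}$ and hence $r > -w_{\max}$. With $|r| < w_{\max}$ in hand, exhaustion is harmless rather than impossible: if $D^-$ runs out with $S_{j-1} \geq 0$, keep appending the remaining positives; the partial sums then increase monotonically to $S_N = r \leq w_{\max}-1$, so they stay in $[-w_{\max}, w_{\max}-1]$, and any collision still yields a segment of mixed sign (a monotone tail has no internal repeats). The symmetric case is identical. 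With this patch the pigeonhole and the exchange go through exactly as you wrote.
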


Throughout the rest of the paper, we fix $\veg$ to be a maximal prefix solution and $\vez$ to be an optimal solution that minimizes $\|\vez - \veg\|_1$. For any subset ${\cal I}'$ of items, we define the weighted $\ell_1$-distance from $\veg$ to $\vez$ on ${\cal I}'$ to be
\[
    \Delta({\cal I}') = \sum_{i \in {\cal I}'} w_i|g_i - z_i|.
\]

\subsection{Proximity-based $(\min, +)$-Convolution Framework}\label{subsec:conv}

 

Both the algorithm of Polak et al.~\cite{PRW21} and our algorithm use the proximity-based $(\min, +)$-convolution framework.
We present a general form of the framework through the following Lemma.  Recall that $W$ is the set of all item weights.
\begin{restatable}{lemma}{lemconvbygroup}
    \label{lem:conv-by-group}
    Let $\veg$ be a maximal prefix solution to Bouned Knapsack. Let $W_1 \cup \cdots \cup W_k$ be a partition of the set $W$. For $j \in \{1,...,k\}$, let $U_j$ be an upper bound for $\Delta({\cal I}^{W_j})$. 
    Then Bounded Knapsack can be solved in $ \widetilde{O}(n + k\sum_{j = 1}^k |W_j|\cdot U_j)$ time.
\end{restatable}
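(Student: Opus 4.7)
The plan is to follow the proximity-based $(\min,+)$-convolution framework of Axiotis--Tzamos and Polak--Rohwedder--W\k{e}grzycki, but instantiated with group-dependent truncation windows induced by the bounds $U_j$. First, I compute the maximal prefix solution $\veg$ in $\widetilde{O}(n)$ time by sorting items by efficiency and performing a greedy scan. By Lemma~\ref{lem:proximity} and the hypotheses of the lemma, there is an optimal $\vez$ whose net weight change on each group $\mathcal{I}^{W_j}$ lies in $[-U_j,U_j]$, so only entries inside these narrow windows are ever relevant when reconstructing $\vez$ from $\veg$.

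For each weight $w\in W$ I build a concave \emph{exchange vector} $f_w$ whose $d$-th entry (for $d$ a multiple of $w$) is the maximum profit gain achievable by changing the selection of weight-$w$ copies so that their total weight differs from the $\veg$-selection by exactly $d$. Since $\veg$ is a maximal prefix solution and, among items of the same weight, the efficiency and profit orderings coincide, every weight-$w$ copy selected by $\veg$ has profit at least as large as every unselected one. Hence increasing $d$ forces adding copies in decreasing profit order while decreasing $d$ forces removing copies in increasing profit order, which yields the concavity of $f_w$. All exchange vectors can be constructed in $\widetilde{O}(n)$ time total.

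For each group $W_j$ I then compute $F_j:=\bigoplus_{w\in W_j}f_w$ by iterated $(\max,+)$-convolution of the $f_w$ with $w\in W_j$, truncating the running result to the window $[-U_j,U_j]$ at every step. Because concave $(\max,+)$-convolution can be carried out in time linear in the output length via SMAWK~\cite{AKM+87} and concavity is preserved by the operation, each merge runs in $O(U_j)$ time, and so $F_j$ is built in $O(|W_j|\cdot U_j)$ time. I then merge $F_1,\ldots,F_k$ one-by-one, truncating at the running total of the $U_j$'s; each merge is again concave and linear in its input length, contributing $O(k\sum_jU_j)$ in total. Reading off the entry at index $t-w(\veg)$ of the final vector yields the optimum profit.

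The main technical obstacle I anticipate is the mismatch of lattices: $f_w$ lives on $w\mathbb{Z}$ while the combined vectors live on $\mathbb{Z}$, so a literal invocation of SMAWK on an integer grid with $-\infty$ padding needs some care. I plan to handle this by the standard device of decomposing the convolution along the arithmetic progressions on which the operands are genuinely concave (as in the earlier Knapsack convolution frameworks), which introduces at most polylogarithmic overhead. Summing the costs gives the overall running time $\widetilde{O}(n+\sum_j|W_j|\cdot U_j+k\sum_jU_j)=\widetilde{O}(n+k\sum_j|W_j|\cdot U_j)$, as claimed.
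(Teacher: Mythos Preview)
Your first stage (building each $F_j$) can be made to work, but your justification for it is wrong, and your second stage fails outright.

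The claim that ``concavity is preserved by the operation'' is false once you combine weights $w_1\neq w_2$: the $(\max,+)$-convolution of a function supported on $w_1\mathbb{Z}$ with one supported on $w_2\mathbb{Z}$ is not concave on $\mathbb{Z}$ (take $f_2(0,2,4)=(0,1,1.5)$ and $f_3(0,3)=(0,10)$; the convolution has slopes $0.5$ then $9$ between $0,2,3$). So after the very first merge inside a group, the running array is an \emph{arbitrary} sequence, not a concave one. You can still do each in-group step in $O(U_j)$ time, but only by the one-sided SMAWK trick: split the arbitrary running array by residue modulo $w$ and convolve each class against the genuinely concave $f_w$. That is exactly the ``standard device'' you allude to; it requires only the \emph{new} operand $f_w$ to be concave, not the accumulated array.

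The real gap is the second stage. The group vectors $F_1,\ldots,F_k$ are arbitrary sequences (each already mixes several distinct weights), so there is no concave structure left to exploit when you $(\max,+)$-convolve $F_1$ with $F_2$, then with $F_3$, etc. No linear-time routine is known for arbitrary $(\max,+)$-convolution, and assuming one would contradict the very hardness conjecture underlying the lower bounds for Knapsack. Hence your ``$O(k\sum_j U_j)$ for the merges'' is unjustified.

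The paper sidesteps this by never forming the $F_j$ separately. It sorts the groups so that $U_1\le\cdots\le U_k$, keeps a \emph{single} running array $\boldsymbol{x}^+$ (and symmetrically $\boldsymbol{x}^-$), and convolves in one long chain over all $w$, processing the $w\in W_1$ first, then $W_2$, and so on. While handling $w\in W_{j}$ the array is truncated to length $j\cdot U_{j}$ (since $\Delta(\mathcal{I}^{W_1\cup\cdots\cup W_{j}})\le jU_{j}$). Every individual step is ``arbitrary $\oplus$ concave'' and hence linear via SMAWK, giving $\sum_j |W_j|\cdot jU_j\le k\sum_j|W_j|U_j$. Your decomposition into per-group subproblems followed by a merge is precisely what creates an ``arbitrary $\oplus$ arbitrary'' step that cannot be done in the claimed time.
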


We only give a sketch of the proof. The complete proof is deferred to Appendix~\ref{Appendix proof lem:conv-by-group}.
\begin{Proof Sketch}
 We construct an optimal solution $\vex$ through modifying $\veg$. Specifically, we have  $\vex = \veg-\vex^-+\vex^+$, where $\vex^-$ represents the copies of items in ${\cal I}_{< b}$ that are deleted from $\veg$, and $\vex^+$ represents the copies of items in ${\cal I}_{\geq b}$ that are added to $\veg$, while noting that $b$ is the break item. We shall compute two sequences, namely: (i). $\boldsymbol{x}^{-} = \langle x^{-}_0,...,x^{-}_{k\cdot U_k}\rangle$, where $x^{-}_{t'}$ is the minimum total profit of copies in $\mathcal{I}_{< b}$ whose total weight is exactly $t'$; (ii). $\boldsymbol{x}^{+} = \langle x^{+}_{ 0},...,x^{+}_{ k\cdot U_k} \rangle$, where $x^{+}_{t'}$ is the maximum total profit of copies in $\mathcal{I}_{\geq b}$ whose total weight is exactly $t'$. The optimal solution can be found easily using these two sequences in $O(kU_k)$ time.  

 The computation of $\boldsymbol{x}^{-}$ and $\boldsymbol{x}^{+}$ follows the same method as Polak et al.~\cite{PRW21}: we consider all copies of items whose weight is exactly $w$, and let $\boldsymbol{s}^w$ denote the sequence whose $t'$-th entry is the maximum total profit of these items when their total weight is exactly $t'$. After computing all $\boldsymbol{s}^w$'s, we iteratively update $\boldsymbol{x}^{-}$ and $\boldsymbol{x}^{+}$ by computing their convolutions with each $\boldsymbol{s}^w$. The key point is that we can strategically pick an order of the  $\boldsymbol{s}^w$'s to perform convolution. In particular,  
assuming that $U_1\leq ...\leq U_k$, we have $\Delta(\mathcal{I}^{W_1\cup ...\cup W_{k'}})\leq \sum_{j=1}^{k'} U_j \leq k' \cdot U_{k'}$ for any $1\leq k'\leq k$. We first convolve $\boldsymbol{s}^w$'s in $W_1$, then $W_2$, etc. When we compute the convolution with $\boldsymbol{s}^w$ where $w\in W_{k'}$, we can truncate the sequence after the $k'\cdot U_{k'}$-th entry, and thus the convolution takes $O(k'\cdot U_{k'})$ time via SMAWK algorithm \cite{AKM+87}.
It is easy to verify that the total time is $\widetilde{O}(n+\sum_{j=1}^{k}|W_j|\cdot j\cdot U_j)=\widetilde{O}(n+k\sum_{j=1}^{k}|W_j|\cdot  U_j) $.
\end{Proof Sketch}

\subsection{Additive Combinatorics}
Additive combinatorics has been a useful tool for Subset Sum. With additive combinatorics tools, Galil and Margalit~\cite{GM91} gave a characterization of the regime within which Subset Sum can be solved in linear time. Their result was later improved by Bringmann and Wellnitz~\cite{BW21}. Our exchange argument in Section~\ref{sec:exchange} heavily utilizes the dense properties by Bringmann and Wellnitz.  We briefly descibe these properties in this subsection.

Let $X$ be a multiset of positive integers. We denote the sum of $X$ as $\Sigma_X$, the maximum element in $X$ as $\texttt{m}_X$, the maximum multiplicity of elements in $X$ as $\mu_X$, and the support of $X$ as $\texttt{supp}_X$. The number of elements in $X$ is denoted by $|X|$ (with multiplicities counted). We say $X$ can \emph{hit} an integer $s$ if some subset of $X$ sums to this integer, i.e., there is a subset $X'\subseteq X$ that $\Sigma_{X'} = s$. 

\begin{Definition}[\normalfont Definition 3.1 in \cite{BW21}]
\label{def:dense}
We say that a multiset $X$ is $\delta$-dense if it satisfies {\normalfont $|X|^2 \geq \delta \cdot \mu_X\texttt{m}_X$}.
\end{Definition}

\begin{Definition}[\normalfont Definition 3.2 in \cite{BW21}]
\label{def:divisor}
Let $X$ be a multiset. We denote by $X(d):= X \cap d\mathbb{Z}$ the multiset of all numbers in $X$ that are divisible by $d$. Further, we write $\overline{X(d)}:= X \backslash X(d)$ to denote the multiset of all numbers in $X$ not divisible by $d$. We say an integer $d > 1$ is an $\alpha$-almost divisor of $X$ if $|\overline{X(d)}|\leq\alpha\ \cdot\mu_X\Sigma_X/|X|^2$.
\end{Definition}

\begin{theorem}[\normalfont Theorem 4.2 in \normalfont \cite{BW21}]
\label{thm:hitrange}
Let $X$ be a multiset of positive integers and set
\begin{align*}
c_\delta &:= 1699200\cdot \log(2|X|)\log^2(2\mu_X), \\
c_\alpha &:= 42480\cdot\log(2\mu_X),\\
c_\lambda &:= 169920\cdot\log(2\mu_X).  
\end{align*}
If X is $c_\delta$-dense and has no $c_\alpha$-almost divisor, then for \normalfont{$\lambda_X := c_\lambda \mu_X\texttt{m}_X{\Sigma}_{X}/|X|^2$}, $X$ can hit all the integers in range
\(
[ \lambda_X, \Sigma_X -\lambda_X ].
\)
\end{theorem}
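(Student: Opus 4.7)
The plan is to attack this via the classical additive-combinatorics route pioneered by S\'ark\"ozy and refined by Freiman, Szemer\'edi--Vu, and others: show that the subset-sum set $\mathcal{S}(X) := \{\Sigma_{X'} : X' \subseteq X\}$ contains a long arithmetic progression, then argue that its common difference must be $1$ (otherwise $X$ would have a small almost-divisor), which yields a full interval, and finally identify this interval with $[\lambda_X, \Sigma_X - \lambda_X]$. The density hypothesis is used to drive the first step, and the no-almost-divisor hypothesis is used to drive the second.

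First I would carry out a \emph{bootstrapping / doubling} argument to produce a long AP inside $\mathcal{S}(X)$. The idea is to split $X$ into $O(\log|X|)$ balanced blocks $X_1, \ldots, X_r$ so that each block individually is still sufficiently dense. Starting from a single block, a Sárközy-type lemma (typically proved via a second-moment or Fourier argument on the partial sums) guarantees that $\mathcal{S}(X_1)$ contains an AP of length $\Omega(|X_1|^2/\mu_X)$ with some common difference $d_1$. Then I would iteratively Minkowski-sum these AP-containing subset-sum sets, using the elementary fact that $\mathrm{AP}(a,d,\ell) + \mathrm{AP}(a',d,\ell') \supseteq \mathrm{AP}(a+a', d, \ell+\ell'-1)$ together with a Pl\"unnecke--Ruzsa style control on the common differences. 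After $O(\log|X|)$ doublings the AP length grows past the critical threshold $\Sigma_X/d$, at which point $\mathcal{S}(X)$ covers a full residue class mod $d$ on a long interval. The factor $\log(2|X|)\log^2(2\mu_X)$ appearing in $c_\delta$ should arise naturally here: one log from the number of doublings, and the squared log from the loss in the density-doubling step when the multiplicity $\mu_X$ is nontrivial (i.e.\ when bounding the sumset-growth of a multiset versus a set).

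Next I would reduce the common difference of this AP to $1$. Suppose $\mathcal{S}(X)$ contains an AP $\{a, a+d, a+2d, \ldots, a + Ld\}$ with $d > 1$ and $L$ large. Every element of $X$ that is \emph{not} a multiple of $d$ shifts this AP off the residue class $a \bmod d$; so if more than $\alpha \mu_X \Sigma_X/|X|^2$ weight of $X$ lies outside $d\mathbb{Z}$, combining such elements with the AP already produced yields every residue class $\bmod\ d$ on a long interval --- and a simple averaging then fills the interval completely. Contrapositively, if we fail to cover, then $d$ is a $c_\alpha$-almost divisor of $X$ with $c_\alpha$ proportional to $\log(2\mu_X)$, contradicting the hypothesis. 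Hence $d=1$, and we obtain a full interval of hit integers. Computing the endpoints of this interval using $|X|$, $\mu_X$, $\texttt{m}_X$, and $\Sigma_X$ then yields the claimed range $[\lambda_X, \Sigma_X - \lambda_X]$ with $\lambda_X = c_\lambda \mu_X \texttt{m}_X \Sigma_X/|X|^2$, by a standard symmetry argument (apply the same reasoning to the complementary set $\Sigma_X - \mathcal{S}(X) = \mathcal{S}(X)$).

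The main obstacle, and the source of essentially all the technical work, is the bootstrapping step in the multiset regime. Classical Sárközy-type results are stated for sets, and porting them to multisets with multiplicity up to $\mu_X$ loses a factor of $\mu_X$ per step if done naively, which would blow up the constants exponentially over the $O(\log|X|)$ iterations. Avoiding this requires a careful block decomposition that keeps the "effective" multiplicity bounded at each level, together with a Pl\"unnecke--Ruzsa inequality applied to \emph{subset-sum} sets (not arbitrary sumsets), so that the doubling constant stays polylogarithmic in $\mu_X$. Once this is set up cleanly, the constants $c_\delta, c_\alpha, c_\lambda$ fall out of tracking the loss at each doubling step, though the explicit numerical constants ($1699200$, $42480$, $169920$) are artifacts of the particular quantitative versions of the lemmas used and are not themselves meaningful.
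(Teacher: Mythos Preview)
This theorem is not proved in the paper at all: it is quoted verbatim as Theorem~4.2 of Bringmann and Wellnitz~\cite{BW21} in the Preliminaries section, and the paper simply uses it as a black box (together with the companion Theorem~\ref{thm:divisor}) inside the proofs of Lemmas~\ref{lem:hit-a} and~\ref{lem:hit-b}. So there is no ``paper's own proof'' to compare your proposal against.

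As for the proposal itself, your outline is broadly the right shape for how results of this type are established in the literature (and in particular in~\cite{BW21}): build a long arithmetic progression in $\mathcal{S}(X)$ via iterated sumset growth, then use the no-almost-divisor hypothesis to force the common difference down to~$1$, and finally symmetrize to get the interval $[\lambda_X,\Sigma_X-\lambda_X]$. That said, what you have written is a high-level plan rather than a proof, and the genuinely hard part---controlling the doubling in the multiset setting so that the losses stay polylogarithmic in $\mu_X$ across $O(\log|X|)$ rounds---is acknowledged but not carried out. If you actually wanted to prove this statement you would need to reproduce a substantial portion of~\cite{BW21}; for the purposes of the present paper, no proof is required and the correct move is simply to cite the result.
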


\begin{theorem}[\normalfont Theorem 4.1 in \normalfont \cite{BW21}]
\label{thm:divisor}
Let $X$ be a multiset of positive integers. 
Let $\delta,\alpha$ be functions of $n$ with $\delta\ge1$ and $0< \alpha\leq\delta/16$. Given a $\delta$-dense set $X$ of size $n$, there exists an integer $d\geq 1$ such that $X':= X(d)/d$ is $\delta$-dense and has no $\alpha$-almost divisor. Moreover, we have the following additional properties:
\begin{enumerate}[label = {\normalfont (\roman*)}]
    \item $d \leq 4\mu_X\Sigma_X/|X|^2$, 
    \item $|X'| \geq 0.75 |X|$,
    \item $\Sigma_{X'}\ge0.75\Sigma_X/d$.
\end{enumerate}
\end{theorem}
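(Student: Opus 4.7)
My plan is to prove Theorem~\ref{thm:divisor} by an iterative peel-off of $\alpha$-almost divisors. Initialize $X_0 := X$ and $d_0 := 1$; while $X_i$ admits an $\alpha$-almost divisor $p_i \geq 2$, set $X_{i+1} := X_i(p_i)/p_i$ and $d_{i+1} := d_i p_i$. The process halts in at most $\log_2 \texttt{m}_X$ rounds because $\texttt{m}_{X_{i+1}} \leq \texttt{m}_{X_i}/p_i \leq \texttt{m}_{X_i}/2$, and the output $X' := X_k$, $d := d_k$ is, by construction, such that $X'$ has no $\alpha$-almost divisor. A useful bookkeeping identity is $X_i = X(d_i)/d_i$, which in particular gives $d_i \, \texttt{m}_{X_i} \leq \texttt{m}_X$ throughout.

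The central invariant to verify is that $X_i$ stays $\delta$-dense. Starting from $\Sigma_{X_i} \leq |X_i|\,\texttt{m}_{X_i}$ and density, I get $M_i := \mu_{X_i}\Sigma_{X_i}/|X_i|^2 \leq |X_i|/\delta$. Plugging this into the $\alpha$-almost divisor bound yields $|X_{i+1}| \geq (1 - \alpha/\delta)|X_i| \geq (15/16)|X_i|$, and combined with $\mu_{X_{i+1}}\texttt{m}_{X_{i+1}} \leq \mu_{X_i}\texttt{m}_{X_i}/p_i$, the density ratio improves by a factor $\geq 2(15/16)^2 > 1$ per round. The same manipulation gives $M_{i+1} \leq M_i / (p_i (15/16)^2)$, so $M_i$ decays geometrically and $\sum_i M_i = O(M_0)$.

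Property (ii) follows at once: the total size loss $\sum_i |\overline{X_i(p_i)}| \leq \alpha \sum_i M_i = O(\alpha M_0) \leq O(\alpha|X|/\delta) \leq O(|X|/16)$, well inside the $0.25|X|$ budget. For property (iii) I would telescope the identity $\Sigma_X - d\,\Sigma_{X'} = \sum_i d_i \Sigma_{\overline{X_i(p_i)}}$; the key observation is that $d_i \Sigma_{\overline{X_i(p_i)}} \leq d_i\, \texttt{m}_{X_i} \cdot |\overline{X_i(p_i)}| \leq \texttt{m}_X \cdot |\overline{X_i(p_i)}|$, so the cumulative $\Sigma$-loss is bounded by $\texttt{m}_X \, \alpha \sum_i M_i = O(\alpha \mu_X \texttt{m}_X \Sigma_X / |X|^2) \leq O(\alpha \Sigma_X / \delta) \leq \Sigma_X/4$, again invoking $\mu_X \texttt{m}_X \leq |X|^2/\delta$. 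Property (i) follows from a separate counting step: writing the distinct values of $X(d)$ as $d y_1 < \cdots < d y_m$ with $y_j \geq j$ and multiplicities bounded by $\mu_X$, the total $\Sigma_{X(d)}$ is minimized when the smallest $y_j$'s carry full multiplicity, yielding $\Sigma_{X(d)} \geq d\,|X(d)|^2/(2\mu_X) \geq d(0.75|X|)^2/(2\mu_X)$; combined with $\Sigma_{X(d)} \leq \Sigma_X$ this gives $d \leq 4\mu_X\Sigma_X/|X|^2$.

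The most delicate step I anticipate is (iii). A naive bound $\Sigma_{X'} \geq (1-\alpha/\delta)^k \Sigma_X / d$ only reaches the required $0.75$ threshold when $k \leq 4$, yet the iteration can genuinely last $\Theta(\log \texttt{m}_X)$ rounds. The telescoping paired with $d_i \texttt{m}_{X_i} \leq \texttt{m}_X$ is the crucial device that converts a round-count-dependent estimate into one governed solely by $\sum_i M_i$, which decays geometrically regardless of how many iterations occur. Once this trick is in place, checking the prescribed $0.75, 0.75, 4$ constants is a routine arithmetic exercise.
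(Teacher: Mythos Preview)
The paper does not prove this theorem at all: it is quoted verbatim as Theorem~4.1 of Bringmann and Wellnitz~\cite{BW21} and used as a black box in the proofs of Lemmas~\ref{lem:hit-a} and~\ref{lem:hit-b}. So there is no in-paper argument to compare your proposal against.

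That said, your outline is essentially the standard proof of this result (and is close to how it is established in~\cite{BW21}): iteratively strip an $\alpha$-almost divisor, maintain the invariant that $X_i = X(d_i)/d_i$ stays $\delta$-dense because the density ratio gains a factor $p_i(15/16)^2 \geq 2(15/16)^2 > 1$ per round, and then control the cumulative losses via the geometric series $\sum_i M_i = O(M_0)$. Your handling of~(iii) through the telescoping identity $\Sigma_X - d\,\Sigma_{X'} = \sum_i d_i\,\Sigma_{\overline{X_i(p_i)}}$ together with $d_i\,\texttt{m}_{X_i} \leq \texttt{m}_X$ is exactly the right device, and your counting argument for~(i) via $\Sigma_{X(d)} \geq d\,|X(d)|^2/(2\mu_X)$ is clean. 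The constants work out (the geometric ratio $256/450$ gives $\sum_i M_i \leq (450/194)M_0 < 2.32\,M_0$, so both the size loss and the weighted sum loss stay below $2.32/16 < 0.25$). I see no gap.
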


\section{A General Exchange Argument}\label{sec:exchange}
In this section, we establish two additive combinatorics results that are crucial to proving our proximity result. Basically we show that given two (multi-)sets $A$ and $B$ of positive integers, if $A$ is dense, and $\Sigma_B$ is large, there must be some non-empty subset $A'$ of $A$ and a subset $B'$ of $B$ such that $\Sigma_{A'} = \Sigma_{B'}$. We first give a result where multiplicities of integers are not utilized.

\begin{lemma}
    \label{lem:hit-a}
    Let $w$ be a positive integer. Let $p$ be an arbitrary real number such that $0.5\leq p < 1$. Let $A$ be a multiset of integers from $\{1,..., w\}$ such that $|\texttt{supp}_A| \geq c_A w^{p} \log w$, $B$ be a multiset of integers from $\{1,..., w\}$ such that $\Sigma_B \geq c_B w^{2-p}$, where $c_A$ and $c_B$ are two sufficiently large constants. Then there must exist a non-empty subset $A'$ of $A$ and a non-empty subset $B'$ of $B$ such that $\Sigma_{A'} = \Sigma_{B'}$. 
\end{lemma}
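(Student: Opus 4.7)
The plan is to apply the Bringmann--Wellnitz dense subset-sum machinery (Theorems~\ref{thm:divisor} and \ref{thm:hitrange}) to $X := \texttt{supp}_A$, viewed as a multiset of multiplicity one, and then match a $B$-subset sum against the resulting long arithmetic progression of $A$-subset sums. The hypothesis $|X|\ge c_A w^p\log w$ with $p\ge 1/2$, plugged into the density requirement $|X|^2\ge c_\delta\mu_X\texttt{m}_X$, is comfortably satisfied for $c_A$ large, since $\mu_X=1$, $\texttt{m}_X\le w$, and $c_\delta=O(\log w)$ in this regime. Theorem~\ref{thm:divisor} then yields an integer $d\ge 1$ and a set $X':=X(d)/d$ that is $c_\delta$-dense with no $c_\alpha$-almost divisor, satisfying $d\le 4\Sigma_X/|X|^2$. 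Theorem~\ref{thm:hitrange} says $X'$ hits every integer in $[\lambda_{X'},\Sigma_{X'}-\lambda_{X'}]$, so rescaling by $d$ tells us that $A\supseteq dX'$ has a subset summing to any multiple of $d$ in $[L_1,L_2]:=[d\lambda_{X'},d\Sigma_{X'}-d\lambda_{X'}]$. The task reduces to exhibiting a non-empty $B'\subseteq B$ whose sum is a multiple of $d$ lying in $[L_1,L_2]$.

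Two estimates about this hit interval drive the rest of the argument. The first, $\lambda_{X'}/\Sigma_{X'}\le c_\lambda w/(d|X'|^2)=O(1/(c_A^2\log^2 w))$ using $p\ge 1/2$, gives $L_2-L_1\ge 0.8\, d\Sigma_{X'}$. The second, and the crucial one, is to rewrite the divisor bound in Theorem~\ref{thm:divisor} as $\Sigma_X\ge d|X|^2/4$ and combine with $d\Sigma_{X'}\ge 0.75\Sigma_X$ to obtain $L_2-L_1=\Omega(d|X|^2)$, which forces
\[
\frac{dw}{L_2-L_1}=O\!\left(\frac{w}{|X|^2}\right)=O\!\left(\frac{1}{c_A^2\log^2 w}\right)\ll 1,
\]
i.e.\ $dw\le L_2-L_1$.

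With these in hand I would split on $d$. If $d=1$, every integer in $[L_1,L_2]$ is an $A$-subset sum, and since prefix sums of $B$ have step $\le w\le L_2-L_1$ with $\Sigma_B\ge c_B w^{2-p}\gg L_1$, a greedy prefix sum of $B$ lies in $[L_1,L_1+w]\subseteq[L_1,L_2]$. If $d>1$, partition $B=B_1\sqcup B_2$ with $B_1=B\cap d\mathbb{Z}$. When $\Sigma_{B_1}\ge L_1+w$, greedy on $B_1$ works since every prefix sum is automatically divisible by $d$. Otherwise $\Sigma_{B_2}\ge c_B w^{2-p}/2$ (for $c_B$ large), so $|B_2|=\Omega(w^{1-p})$, and pigeonhole over the $d-1$ nonzero residues modulo $d$ gives some $r^*$ attained by $N_{r^*}\ge |B_2|/(d-1)$ elements. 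Taking these in batches of $d^*:=d/\gcd(r^*,d)$ elements produces subset sums that are individually divisible by $d$ and lie in $[d^*,d^*w]\subseteq[1,dw]$. Accumulating batches greedily until the running total first crosses $L_1$ lands it in $[L_1,L_1+dw)\subseteq[L_1,L_2]$, where the last containment uses precisely $dw\le L_2-L_1$. A mass count $N_{r^*}\cdot d^*w=\Omega(c_B w^{2-p}/d)\gg L_1$ ensures we do not run out of batches.

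The main obstacle I anticipate is securing the critical inequality $dw\le L_2-L_1$: without it a single batch can leap over the entire hit window, particularly when $p$ is close to $1/2$ and $d$ is near its maximum allowed size $\Theta(w^{1-p}/\log w)$. The naive estimate $\Sigma_X\le w|X|$ yields only $L_2-L_1=\Omega(c_A^2 w^{2p}\log^2 w)$, which is enough only for $p\ge 2/3$; the fix is to invoke the complementary lower bound $\Sigma_X\ge d|X|^2/4$ implicit in Theorem~\ref{thm:divisor}'s divisor inequality, which couples the growth of the hit interval to $d$ itself and thereby dominates $dw$ for all $p\ge 1/2$. Once this is in place, the rest is routine greedy and pigeonhole bookkeeping.
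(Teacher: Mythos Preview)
Your setup through the second paragraph---taking $X=\texttt{supp}_A$, verifying $c_\delta$-density, invoking Theorems~\ref{thm:divisor} and~\ref{thm:hitrange}, and establishing $dw\le L_2-L_1$ via the rearranged divisor bound $\Sigma_X\ge d|X|^2/4$---is exactly the paper's argument, and the estimates are correct.

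The gap is in your final sub-case ($d>1$, $\Sigma_{B_1}$ small). By pigeonholing $B_2$ into a single residue class $r^*$ you retain only a $1/(d-1)$ fraction of the elements, and your mass estimate $N_{r^*}\cdot d^*w=\Omega(c_Bw^{2-p}/d)$ reflects that loss. But this quantity is \emph{not} $\gg L_1$ in general: the bound on $L_1=d\lambda_{X'}$ that comes out of the analysis is $L_1=O\bigl(w^{2-p}/(c_A\log w)\bigr)$, whereas $d$ can be as large as $4w/|X|=\Theta\bigl(w^{1-p}/(c_A\log w)\bigr)$, so your mass is only $\Omega(c_Ac_Bw\log w)$, which for any $p<1$ is eventually dominated by $L_1=\Theta(w^{2-p}/\log w)$. (Even if you pigeonhole on \emph{sum} rather than count, you still get only $\Sigma_{B_2}/(d-1)=\Omega(c_Bw^{2-p}/d)$, so the same obstruction applies.) More concretely, if $p=1/2$ and $d\approx w^{1/2}/\log w$, you need batch mass of order $w^{3/2}/\log w$ but only have order $w\log w$.

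The fix, which is what the paper does, is to avoid committing to a single residue class altogether. Use the classical prefix-sum pigeonhole: among any $d$ elements of $B$ there is a non-empty subsequence whose sum is divisible by $d$ and at most $dw$. Repeatedly extract such subsequences until fewer than $d$ elements remain; the total extracted mass is then at least $\Sigma_B-dw\ge c_Bw^{2-p}-dw\ge (c_B/2)w^{2-p}\ge L_1$, with no factor of $d$ lost. Since each extracted piece has sum at most $dw\le L_2-L_1$, greedily accumulating pieces until the running total first reaches $L_1$ lands you inside $[L_1,L_2]$, finishing the proof.
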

\begin{proof}
Let $S = \texttt{supp}_A$. Note that $|S| \geq c_A w^{p} \log w $. What we actually prove is that there is a non-empty subset $S'$ of $S$ such that some subset $B'$ of $B$ have the same total weight as $S'$.
We first characterize the set of integers that can be hit by $S$. Then we show that $B$ can hit at least one of these integers.

    Note that $\texttt{m}_S \leq w$, $\mu_S = 1$, and $|S| \leq w$.  Since $c_A$ is sufficiently large, we have that
    \[
        |S|^2 \geq c^2_A w^{2p} \log^2 w \geq c^2_A w \log w \geq c^2_A \mu_S 
        \texttt{m}_S \log |S| \geq c_\delta \mu_S \texttt{m}_S.
    \]
    By definition, $S$ is $c_\delta$-dense. By Theorem~\ref{thm:divisor}, there exists an integer $d$ such that  $S':= S(d)/d$ is $c_{\delta}$-dense and has no $c_\alpha$-almost divisor. And the followings hold.
    \begin{enumerate}[label = {\normalfont (\roman*)}]
        \item $d \leq 4 \mu_S \Sigma_S/|S|^2$, 
        \item $|S'| \geq 0.75 |S|$.
        \item $\Sigma_{S'}\geq 0.75\Sigma_S/d$.
    \end{enumerate}
    Note that $\mu_{S'} = 1$, $\texttt{m}_{S'} \leq w/d$, and $\Sigma_{S'} \leq \Sigma_S / d$. Applying Theorem~\ref{thm:hitrange} on $S'$, we get $S'$ can hit any integer in  the range $[\lambda_{S'}, \Sigma_{S'} - \lambda_{S'}]$ where 
    \[
        \lambda_{S'} = \frac{c_{\lambda} \mu_{S'}\texttt{m}_{S'}\Sigma_{S'}}{|S'|^2} \leq \frac{c_{\lambda} w \Sigma_S}{(0.75|S|)^2 d^2} \leq \frac{\min\{c_A, c_B\}}{2} \cdot \frac{w \Sigma_S}{d^2|S|^2}.
    \]
    The last inequality holds since $c_A$ and $c_B$ are sufficiently large constants. We can conclude that $S$ can hit any multiple of $d$ in the range $[d\lambda_{S'}, d(\Sigma_{S'} - \lambda_{S'})]$. We also have that the left endpoint of this interval
    \[
        d\lambda_{S'} \leq \frac{c_B}{2} \cdot \frac{w\Sigma_S}{d |S|^2} \leq \frac{c_B}{2} \cdot \frac{w^2 |S|}{|S|^2} \leq \frac{c_B}{2} \cdot  \frac{w^2}{|S|}\leq \frac{c_B}{2} \cdot w^{2-p},
    \]
    and that the length of the interval 
\begin{align*} 
    d (\Sigma_{S'} - 2\lambda_{S'}) 
    \geq &\frac{3\Sigma_S}{4} - c_A\cdot \frac{w \Sigma_S}{d|S|^2}\\
    \geq &\frac{\Sigma_S}{|S|^2}(\frac{3|S|^2}{4} - c_A\cdot \frac{w}{d})  & &(\text{since } |S| \geq c_Aw^{1/2}\log w\text{ and } d \geq 1)\\
    \geq &\frac{\Sigma_S}{|S|^2}(\frac{3c_A^2}{4}\cdot w - c_A\cdot w)  & &( \text{since $c_A$ is sufficiently large})\\
    \geq &4\cdot \frac{\Sigma_S}{|S|^2}\cdot w  & &(\text{since $ d \leq 4\mu_S\Sigma_S/|S|^2$ and $\mu_S = 1$})\\
    \geq &dw
\end{align*}

To complete the proof, it suffices to show that there is a subset $B'$ of $B$ whose sum is a multiple of $d$ and is within the interval $[d\lambda_{S'}, d(\Sigma_{S'} - \lambda_{S'})]$. We claim that as long as $B$ has at least $d$ numbers, there must be a non-empty subset of $B$ whose sum is at most $dw$ and is a multiple of $d$. Assume the claim is true. We can repeatedly extract such subsets from $B$ until $B$ has less than $d$ numbers. Note that the total sum of these subsets is at least
\[
  \Sigma_B - wd \geq c_B w^{2-p} - w\cdot \frac{4\Sigma_S}{|S|^2}\geq c_B w^{2-p} -  \frac{4w^2}{|S|} \geq c_B w^{2-p} - w^{2-p} \geq \frac{c_Bw^{2-p}}{2}.
\]
That is, the total sum of these subsets is at least the left endpoint of $[d\lambda_{S'}, d(\Sigma_{S'} - \lambda_{S'})]$. Also note that the sum of each subset is at most $dw$, which does not exceed the length of the interval. As a result, there must be a collection of subsets whose total sum is within the interval.  Since the sum of each subset is a multiple of $d$, so is any collection of these subsets.
 
To see why the claim is true, take $d$ arbitrary numbers from $B$. For $i \in \{1,..., d\}$, Let $h_i$ be the sum of the first $i$ numbers. If any of $h_i \equiv 0 \pmod d$, then we are done. Otherwise, by the pigeonhole principle, there must be $i < j$ such that $h_i \equiv h_j \pmod d$. This implies that there is a subset of $j - i$ numbers whose sum is $h_j - h_i \equiv 0 \pmod d$. Note that $0 < j - i \leq d$. So this subset is non-empty and has its sum at most $dw$.
\end{proof}

Lemma~\ref{lem:hit-a} has no requirement on the multiplicities of elements in $A$. Therefore, in order to be dense, $A$ must contain at least $c_A w^{1/2} \log w$ distinct integers. The following Lemma~\ref{lem:hit-b} generalizes Lemma~\ref{lem:hit-a} by taking the multiplicity of integers into consideration. Its proof is similar to that of Lemma~\ref{lem:hit-a}, so we defer it to the appendix~\ref{Appendix proof lem:hit-b}.

\begin{restatable}{lemma}{lemhitb}
  \label{lem:hit-b}
  Let $w$ be a positive integer. Let $A$ and $B$ be two multisets of integers from $\{1,...,w\}$ such that
  \begin{enumerate}[label = {\normalfont (\roman*)}]
    \item at least $c_A w^{2/5} \log^2 w$ distinct integers in $A$ have multiplicity of at least $w^{1/5}$,

    \item $\Sigma_B \geq c_Bw^{8/5}$,
  \end{enumerate}
 where $c_A$ and $c_B$ are two sufficiently large constants.   Then there must exist a non-empty subset $A'$ of $A$ and a non-empty subset $B'$ of $B$ such that $\Sigma_{A'} = \Sigma_{B'}$. 
\end{restatable}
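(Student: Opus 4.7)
The plan is to mirror the proof of Lemma~\ref{lem:hit-a}, but to replace the support $\texttt{supp}_A$ by a sub-multiset that explicitly exploits multiplicity. Let $A_0 \subseteq A$ consist of each distinct integer of multiplicity at least $w^{1/5}$ in $A$, kept to multiplicity exactly $w^{1/5}$; by hypothesis (i) at least $c_A w^{2/5} \log^2 w$ distinct integers qualify, so $\mu_{A_0} = w^{1/5}$, $\texttt{m}_{A_0} \leq w$, and $|A_0| \geq c_A w^{3/5} \log^2 w$. Since $c_\delta = O(\log^3 w)$, the inequality $|A_0|^2 \geq c_A^2 w^{6/5} \log^4 w \geq c_\delta\, \mu_{A_0} \texttt{m}_{A_0}$ holds when $c_A$ is large enough, so $A_0$ is $c_\delta$-dense.

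Next, apply Theorem~\ref{thm:divisor} to $A_0$ to obtain a divisor $d$ and $A_0' := A_0(d)/d$ that is $c_\delta$-dense with no $c_\alpha$-almost divisor, satisfying $d \leq 4\mu_{A_0}\Sigma_{A_0}/|A_0|^2$, $|A_0'| \geq 0.75 |A_0|$, and $\Sigma_{A_0'} \geq 0.75 \Sigma_{A_0}/d$. By Theorem~\ref{thm:hitrange}, $A_0'$ hits every integer in $[\lambda_{A_0'},\, \Sigma_{A_0'} - \lambda_{A_0'}]$, and hence $A_0 \subseteq A$ hits every multiple of $d$ in the interval $I := [d \lambda_{A_0'},\, d(\Sigma_{A_0'} - \lambda_{A_0'})]$. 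Plugging $\mu_{A_0'} \leq w^{1/5}$, $\texttt{m}_{A_0'} \leq w/d$, $\Sigma_{A_0'} \leq \Sigma_{A_0}/d \leq w|A_0|/d$, and $c_\lambda = O(\log w)$ into the definition of $\lambda_{A_0'}$ yields $d \lambda_{A_0'} = O(c_\lambda) \cdot w^{11/5}/|A_0| \leq c_B w^{8/5}/2$ for $c_A$ large. Substituting the upper bound on $d$ into $\Sigma_{A_0'} - 2\lambda_{A_0'} \geq 0.25\, \Sigma_{A_0}/d$ gives $|I|/d \geq |A_0|^2/(16 w^{1/5}) \geq c_A^2 w \log^4 w/16 \geq w$, i.e., $|I| \geq dw$.

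Finally, reuse the pigeonhole extraction from the proof of Lemma~\ref{lem:hit-a}: from any $d$ elements of $B$ one can extract a non-empty subset of size at most $d$ whose sum is divisible by $d$, hence at most $dw$. Iterating until fewer than $d$ elements remain, the extracted mass is at least $\Sigma_B - dw \geq c_B w^{8/5}/2$, which exceeds the left endpoint of $I$; since each extracted block increments the running sum by at most $dw \leq |I|$, the partial sums cannot skip over $I$ and must land on a multiple of $d$ inside it. That partial sum equals $\Sigma_{B'}$ for some non-empty $B' \subseteq B$, and because $A_0$ hits it we get the required $A' \subseteq A_0 \subseteq A$ with $\Sigma_{A'} = \Sigma_{B'}$.

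The main obstacle is the length-of-interval estimate: \emph{a priori} $\Sigma_{A_0}$ could be nearly as small as $|A_0|$, which would make $I$ too short for the pigeonhole step to work. What rescues the argument is that the bound $d \leq 4\mu_{A_0}\Sigma_{A_0}/|A_0|^2$ forces $\Sigma_{A_0}/d \geq |A_0|^2/(4\mu_{A_0})$, so $\Sigma_{A_0}$ cancels and the lower bound on $|I|/d$ depends only on the dense-hypothesis quantity $|A_0|^2/\mu_{A_0}$. The multiplicity threshold $w^{1/5}$ combined with the support threshold $w^{2/5}$ in hypothesis (i) is precisely what balances these quantities so that $\Sigma_B \geq c_B w^{8/5}$ in hypothesis (ii) suffices, rather than the stronger $\Omega(w^{3/2})$ demanded in Lemma~\ref{lem:hit-a}.
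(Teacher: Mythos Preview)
Your proposal is correct and follows essentially the same approach as the paper's proof: your sub-multiset $A_0$ is exactly the paper's $S$, and the density check, application of Theorems~\ref{thm:divisor} and~\ref{thm:hitrange}, bounds on the left endpoint and the interval length, and the final pigeonhole extraction on $B$ all parallel the paper line by line. Your closing remark that the bound $d \leq 4\mu_{A_0}\Sigma_{A_0}/|A_0|^2$ makes $\Sigma_{A_0}$ cancel in the interval-length estimate is precisely the mechanism the paper's chain of inequalities exploits.
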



\section{An $\widetilde{O}(n + w^{5/2}_{\max})$-time Algorithm for Knapsack}\label{sec:Knapsack-2.5}
To illustrate our main idea, we first present a simpler algorithm that runs in $\widetilde{O}(n + w^{5/2}_{\max})$.  The algorithm uses the proximity-based $(\min,+)$-convolution framework (See Subsection~\ref{subsec:conv}). By Lemma~\ref{lem:conv-by-group}, the key to obtaining a fast algorithm is to find a good partition of $W$. 

Polak~et al.~\cite{PRW21} did not partition $W$ at all. That is, they used $k = 1$ and $W_1 = W$. Lemma~\ref{lem:proximity} implies an upper bound of $2w^2_{\max}$ on $\Delta({\cal I}^{W})$. Together with the fact that $|W| \leq w_{\max}$, they obtained a running time of $O(n + w^3_{\max})$.  Although the upper bound of $O(w^2_{\max})$ on $\Delta({\cal I}^{W})$ is actually tight, an improvement in the running time is still possible. Indeed, our first proximity result states that $\Delta({\cal I}^{W})$ concentrates at the small set $W^*$ of roughly $w^{1/2}_{\max}$ weights, and that the rest of the weights together contribute at most $O(w^{3/2}_{\max})$. Moreover, $W^*$ can be identified in $\widetilde{O}(n)$ time. Then an $\widetilde{O}(n + w^{5/2}_{\max})$-time algorithm directly follows by Lemma~\ref{lem:conv-by-group}.

This section is divided into two parts: the structural part and the algorithmic part. The structural part proves the existence of $W^*$, and the algorithmic part gives algorithms for finding $W^*$ and solving Knapsack. 

\subsection{Structural Part -- Fine-Grained Proximity}\label{subsec:2.4-struct}
In the structural part, Bounded Knapsack and 0-1 Knapsack are equivalent, in the sense that every item with $u_i$ copies can be viewed as $u_i$ items.  Therefore, only 0-1 Knapsack is discussed in the structural part. In 0-1 Knapsack, each item $i$ has only one copy, so there is no difference between items and copies of items. We can use them interchangeably. As a result, any solution to 0-1 Knapsack can be represented as a subset of ${\cal I}$. Let $\vex \in \{0,1\}^{|\mathcal{I}|}$ be a solution to 0-1 Knapsack. With slight abuse of notation, we also use $\vex$ to denote the set of items selected by $\vex$.  For any subset ${\cal I}'$ of items, $\vex({\cal I}') = {\cal I}' \cap \vex$ and $\overline{\vex}({\cal I}') = {\cal I}' \setminus \vex$ denote the sets of items from ${\cal I}'$ that are selected and not selected by $\vex$, respectively.

Our first proximity result is the following.
\begin{lemma}\label{lem:2.5-parition-struct}
    There exists a partition $(W^*, \overline{W^*})$ of $W$ such that 
    \begin{enumerate}[label = {\normalfont (\roman*)}]
        \item $|W^*| = 4c_A w^{1/2}_{\max}\log w_{\max}$, and

        \item $\Delta({\cal I}^{\overline{W^*}}) \leq 4c_Bw_{\max}^{3/2}$,
    \end{enumerate}
    where $c_A$ and $c_B$ are two large constants used in Lemma~\ref{lem:hit-a}.
\end{lemma}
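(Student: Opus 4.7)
The plan is to build the partition via an exchange argument driven by Lemma~\ref{lem:hit-a} at parameter $p = 1/2$. Write $D^- = \veg \setminus \vez \subseteq \mathcal{I}_{<b}$ and $D^+ = \vez \setminus \veg \subseteq \mathcal{I}_{\geq b}$, and let $A, B$ be the weight multisets of items in $D^-$ and $D^+$, respectively. The starting observation is that the minimality of $\|\vez - \veg\|_1$ forbids the existence of non-empty $A' \subseteq A$ and $B' \subseteq B$ with $\Sigma_{A'} = \Sigma_{B'}$: such a matched pair corresponds to item sets $S^- \subseteq D^-$ and $S^+ \subseteq D^+$ of equal total weight, and the swap $\vez' := (\vez \setminus S^+) \cup S^-$ remains feasible, has profit at least that of $\vez$ (since every item of $S^-$ has efficiency $\geq p_b/w_b$ and every item of $S^+$ has efficiency $\leq p_b/w_b$), and satisfies $\|\vez' - \veg\|_1 = \|\vez - \veg\|_1 - |S^-| - |S^+|$, contradicting the choice of $\vez$.

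I then invoke the contrapositive of Lemma~\ref{lem:hit-a} with $w = w_{\max}$ and $p = 1/2$ on the pairs $(A, B)$ and $(B, A)$, obtaining the disjunctions (I) $|\texttt{supp}_A| < c_A w_{\max}^{1/2}\log w_{\max}$ or $\Sigma_B < c_B w_{\max}^{3/2}$, and (II) $|\texttt{supp}_B| < c_A w_{\max}^{1/2}\log w_{\max}$ or $\Sigma_A < c_B w_{\max}^{3/2}$. A simple capacity-balance inequality $|\Sigma_A - \Sigma_B| < w_{\max}$ is also used: it follows from the fact that both $\veg$ (a maximal prefix solution) and any optimal $\vez$ (after the WLOG reduction that discards zero-profit items, which makes $\vez$ packing-maximal) leave at most $w_{\max}$ of the capacity unused.

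From here the construction of $W^*$ is a short case split. If $\Sigma_B < c_B w_{\max}^{3/2}$, capacity balance gives $\Sigma_A < 2c_B w_{\max}^{3/2}$, and $W^* = \emptyset$ already certifies $\Delta(\mathcal{I}) \leq \Sigma_A + \Sigma_B < 3c_B w_{\max}^{3/2}$. Otherwise (I) forces $|\texttt{supp}_A| < c_A w_{\max}^{1/2}\log w_{\max}$, and I seed $W^*$ with $\texttt{supp}_A$ to eliminate $D^-$'s contribution to $\Delta(\mathcal{I}^{\overline{W^*}})$. It then remains to handle $D^+$ using (II): either $\Sigma_A < c_B w_{\max}^{3/2}$ (so capacity balance bounds $\Sigma_B$, and the $D^+$ contribution is $O(w_{\max}^{3/2})$), or $|\texttt{supp}_B| < c_A w_{\max}^{1/2}\log w_{\max}$, in which case I also adjoin $\texttt{supp}_B$ to $W^*$, after which $\mathcal{I}^{\overline{W^*}}$ contains no item of $D^- \cup D^+$ and so $\Delta(\mathcal{I}^{\overline{W^*}}) = 0$. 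In every branch $|W^*| < 2c_A w_{\max}^{1/2}\log w_{\max}$ and $\Delta(\mathcal{I}^{\overline{W^*}}) \leq 3c_B w_{\max}^{3/2}$, well within the claimed bounds; padding $W^*$ with arbitrary weights from $W$ brings it to the stated exact size.

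I expect the main obstacle to be the exchange step: one needs to realize the multiset subsets produced by Lemma~\ref{lem:hit-a} as honest item subsets $S^\pm \subseteq D^\pm$, and to use the efficiency ordering across the break item together with $\Sigma_{S^-} = \Sigma_{S^+}$ to argue that the swap does not decrease the profit (while strictly reducing $\|\vez - \veg\|_1$). The capacity-balance inequality is standard but does rely on the WLOG reduction removing zero-profit items so that both $\veg$ and $\vez$ are packing-maximal.
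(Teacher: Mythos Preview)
Your argument is correct for the stated existential lemma and is in fact more direct than the paper's: once you observe that no non-empty equal-sum pair can exist between the weight multisets of $D^-$ and $D^+$ (else the swap strictly reduces $\|\vez-\veg\|_1$ without decreasing profit), two applications of the contrapositive of Lemma~\ref{lem:hit-a} plus capacity balance finish the job. The paper instead defines $W^*$ \emph{structurally}, as the set of weights occurring among the items immediately adjacent to the break item --- blocks $\mathcal{I}_2$ and $\mathcal{I}_3$, each built to contain exactly $2c_A w_{\max}^{1/2}\log w_{\max}$ distinct weights --- and then runs a two-case contradiction (according to whether $\vez(\mathcal{I}_2)$ or $\overline{\vez}(\mathcal{I}_2)$ retains at least $c_A w_{\max}^{1/2}\log w_{\max}$ distinct weights), invoking two different exchanges (one entirely inside $\mathcal{I}_{<b}$, one across $b$).

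The essential difference is \emph{computability}. Your $W^*$ is (a padding of) $\texttt{supp}_A \cup \texttt{supp}_B$, which depends on the unknown optimal solution $\vez$; it certifies existence but cannot be produced by an algorithm that has not already solved the instance. The paper's $W^*$ depends only on the item list and the break item and can therefore be read off in $\widetilde{O}(n)$ time --- this is exactly what the companion Lemma~\ref{lem:2.5-parition-alg} asserts and what the $\widetilde{O}(n+w_{\max}^{5/2})$ algorithm relies on via Lemma~\ref{lem:conv-by-group}. So your route buys a shorter proof of the structural lemma in isolation; the paper's route buys the algorithm. If you want to recover the full section you will still need a computable $W^*$, at which point something like the paper's construction becomes unavoidable.
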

\begin{proof}
    We will define $W^*$ via a partition of ${\cal I}$, and then show that $W^*$ satisfies properties in the lemma.

    \paragraph{Defining $W^*$ via a partition of ${\cal I}$.} Recall that we label the items in decreasing order of efficiency. That is, $p_1 / w_1 \geq \ldots \geq p_n/ w_n$.  Without loss of generality, we assume that for any two items $i < j < b$, if $i$ and $j$ have the same efficiency, then $w_i \leq w_j$, and that for any two items $b < i < j$, if $i$ and $j$ have the same efficiency, then $w_i \geq w_j$. We partition the items into four groups $({\cal I}_1, {\cal I}_2, {\cal I}_3, {\cal I}_4)$ according to their indices. See Figure~\ref{fig:partition} for an illustration.  We shall define ${\cal I}_2$ and ${\cal I}_3$ in such a way that (i) they are ``close'' to the break item $b$ and that (ii) items within them have $\Theta(w_{\max}^{1/2}\log w_{\max})$ distinct weights separately. Because the items in ${\cal I}_2$ and ${\cal I}_3$ have their efficiency close to that of $b$, it is difficult to tell how many of them should be selected by the optimal solution. In other words, $\vez$ and $\veg$ may differ a lot in these items. In contrast, the items in ${\cal I}_1$ and ${\cal I}_4$ have very high and very low efficiency, respectively, so $\vez$ tends to select most of ${\cal I}_1$ and few of ${\cal I}_4$ . As a result, $\vez$ and $\veg$ are similar in ${\cal I}_1$ and ${\cal I}_4$.
    
    \begin{figure}[ht]
        \centering
        \begin{tikzpicture}
        \draw [thick] (0,0) -- (10,0);
        \draw [thick] (0, 0.1) -- (0, -0.1);
        \draw [thick] (3, 0.1) -- (3, -0.1);
        \draw [thick] (5, 0.1) -- (5, -0.1);
        \draw [thick] (7, 0.1) -- (7, -0.1);
        \draw [thick] (10, 0.1) -- (10, -0.1);  

        \node [align =center, below] at (5, -0.2) {$b$};
        \node [align =center, below] at (0, -0.2) {$0$};
        \node [align =center, below] at (10, -0.2) {$n$};
        \node [align = center, above] at (1.5, 0) {${\cal I}_{1}$};
        \node [align = center, above] at (4, 0) {${\cal I}_{2}$};
        \node [align = center, above] at (6, 0) {${\cal I}_{3}$};
        \node [align = center, above] at (8.5, 0) {${\cal I}_{4}$};   

        \draw [thick, ->] (2, - 1) -- (8, - 1);
        \node [align = center, below] at (5,-1.2) {larger index and lower efficiency}; 
        \end{tikzpicture}
        \caption{${\cal I}_{1}$, ${\cal I}_{2}$, ${\cal I}_{3}$, and ${\cal I}_{4}$}
        \label{fig:partition}
    \end{figure}

    Now we formally describe $({\cal I}_1, {\cal I}_2, {\cal I}_3, {\cal I}_4)$.  For any $i < j$, let ${\cal I}_{[i,j)}$ be the set of items $\{i, \ldots, j-1\}$. Let $i^*$ be the minimum index $i$ such that the items in ${\cal I}_{[i,b)}$ have exactly $2c_Aw_{\max}^{1/2}\log w_{\max}$ distinct weights. Let ${\cal I}_2 = {\cal I}_{[i^*, b)}$, and let ${\cal I}_1 = {\cal I}_{< i^*}$. When no such $i^*$ exists, let ${\cal I}_2 = {\cal I}_{<b}$, and let ${\cal I}_1 = \emptyset$. ${\cal I}_3$ and ${\cal I}_4$ are defined similarly as follows. Let $j^*$ be the maximum index $j$ such that the items in ${\cal I}_{[b,j)}$ have exactly $2c_Aw_{\max}^{1/2}\log w_{\max}$ distinct weights. Let ${\cal I}_3 = {\cal I}_{[b,j^*)}$, and let ${\cal I}_4 = {\cal I}_{\geq j^*}$. When no such $j^*$ exists, let ${\cal I}_3 = {\cal I}_{\geq b}$, and let ${\cal I}_4 = \emptyset$. 

    $W^*$ is defined as the set of the weights of the items in ${\cal I}_2 \cup {\cal I}_3$. 

    \paragraph{Verifying properties.} It is straightforward that $|W^*| \leq 4c_A w^{1/2}_{\max}\log w_{\max}$. We are left to show $\Delta({\cal I}^{\overline{W^*}}) \leq 2c_B w_{\max}^{3/2}$.  Note that ${\cal I}^{\overline{W^*}} \subseteq {\cal I}_1 \cup {\cal I}_4$, so ${\cal I}^{\overline{W^*}}$ can be partitioned into ${\cal I}^{\overline{W^*}}_1 = {\cal I}^{\overline{W^*}} \cap {\cal I}_1$ and ${\cal I}^{\overline{W^*}}_4 = {\cal I}^{\overline{W^*}} \cap {\cal I}_4$, we have $\Delta({\cal I}^{\overline{W^*}}) = \Delta({\cal I}^{\overline{W^*}}_1) + \Delta({\cal I}^{\overline{W^*}}_4)$.  It suffices to show that $\Delta({\cal I}^{\overline{W^*}}_1)$ and $\Delta({\cal I}^{\overline{W^*}}_4)$ are both bounded by $2c_B w^{3/2}_{\max}$. In the following, we only provide proof for $\Delta({\cal I}^{\overline{W^*}}_1)$. Bounds for $\Delta({\cal I}^{\overline{W^*}}_4)$ can be proved similarly due to symmetry.
    
    Suppose, for the sake of contradiction, that $\Delta({\cal I}_1^{\overline{W^*}}) > 2c_B w^{3/2}_{\max}$. That is, a great volume of items in ${\cal I}_1^{\overline{W^*}}$ are deleted when obtaining the optimal solution $\vez$ from the greedy solution $\veg$. (Note that $\veg$ selects all items in ${\cal I}_1^{\overline{W^*}}$ as ${\cal I}^{\overline{W^*}}_1 \subseteq {\cal I}_{< b}$, so $\Delta({\cal I}^{\overline{W^*}}_1)$ is exactly the total weight of items in ${\cal I}_1$ that are deleted.)  Obviously, ${\cal I}^{\overline{W^*}}_1$ is non-empty since otherwise $\Delta({\cal I}^{\overline{W^*}}_1)$ would be $0$. This implies that ${\cal I}_2$ has exactly $2c_A w_{\max}^{1/2} \log w_{\max}$ distinct weights. The greedy solution $\veg$ picks all the items in ${\cal I}_2$ as ${\cal I}_2 \subseteq {\cal I}_{< b}$. When obtaining $\vez$ from $\veg$, at least one of the following two cases must be true.
    \begin{enumerate}[label={(\roman*)}]
        \item $\vez$ keeps lots of items in ${\cal I}_2$ so that $\vez({\cal I}_2)$ have at least $c_A w_{\max}^{1/2} \log w_{\max}$ distinct weights.
  
        \item lots of the items in ${\cal I}_2$ are deleted so that $\vez({\cal I}_2)$ have at most $c_A w_{\max}^{1/2} \log w_{\max}$ distinct weights. In other words, $\overline{\vez}({\cal I}_2)$ have at least $c_A w_{\max}^{1/2} \log w_{\max}$ distinct weights.
    \end{enumerate}

    We will show that both cases lead to a contradiction. The high-level idea is the following.
    In case (i), when obtaining $\vez$, lots of items in ${\cal I}_2$ are kept while a great volume of items in ${\cal I}^{\overline{W^*}}_1$ are deleted. We will show that there is a subset ${\cal Z} \subseteq \vez({\cal I}_2)$ of items kept by $\vez$ and a subset ${\cal D} \subseteq \overline{\vez}({\cal I}^{\overline{W^*}}_1)$ of the deleted items such that ${\cal D}$ and ${\cal Z}$ have the same total weight. Note that items in ${\cal D}$ have higher efficiency than those in ${\cal Z}$ as ${\cal D}$ is to the left of ${\cal Z}$. But then, when obtaining $\vez$ from $\veg$, why not delete ${\cal Z}$ rather than ${\cal D}$? In case (ii), lots of items in ${\cal I}_2$ are deleted. Also, since a great volume of items (in ${\cal I}^{\overline{W^*}}_1$) are deleted, at least the same volume of items should be added in order to obtain the optimal solution $\vez$.  We will show that there is a subset ${\cal D} \subseteq \overline{\vez}({\cal I}_2)$ of the deleted items and a subset ${\cal A}\subseteq \vez({\cal I}_{\geq b})$ of the added items such that ${\cal A}$ and ${\cal D}$ have the same total weight. Since the items in ${\cal D}$ have higher efficiency than those in ${\cal A}$, when obtaining $\vez$ from $\veg$, why bother to delete ${\cal D}$ and add ${\cal A}$?

    \paragraph{Case (i).} $\vez({\cal I}_2)$ has at least $c_A w_{\max}^{1/2} \log w_{\max}$ distinct weights. Note that $\overline{\vez}({\cal I}^{\overline{W^*}}_1)$ is exactly the set of items in ${\cal I}^{\overline{W^*}}_1$ that are deleted when obtaining $\vez$ and the total weight of $\overline{\vez}({\cal I}^{\overline{W^*}}_1)$ is $\Delta({\cal I}^{\overline{W^*}}_1) > c_B w^{3/2}_{\max}$. The weight multisets of $\vez({\cal I}_2)$ and $\overline{\vez}({\cal I}^{\overline{W^*}}_1)$ satisfy the condition of Lemma~\ref{lem:hit-a}, which implies that there is a non-empty subset ${\cal Z}$ of $\vez({\cal I}_2)$ and a non-empty subset ${\cal D}$ of $\overline{\vez}({\cal I}^{\overline{W^*}}_1)$ such that ${\cal Z}$ and ${\cal D}$ have the same total weight. Let $\hat{\vez} = (\vez \setminus {\cal Z}) \cup {\cal D}$ be the solution obtained from $\vez$ by deleting ${\cal Z}$ and adding ${\cal D}$ back. Clearly $\hat{\vez}$ is feasible. Note ${\cal D}$ is to the left of ${\cal Z}$ (see Figure~\ref{fig:partition} for an illustration). According to the way we label items and break ties, we have that ${\cal D}$ has either strictly higher average efficiency than ${\cal Z}$, or --- if the efficiency is the same --- strictly smaller average weight than ${\cal Z}$. In the former case, $\hat{\vez}$ has a larger profit than $\vez$. In the latter case, $|{\cal D}| > |{\cal Z}|$. Since $\veg$ selects all items in ${\cal D}$ and ${\cal Z}$, it follows that $\hat{\vez}$ is an optimal solution with $\|\hat{\vez} - \veg\|_1 < \|\vez-\veg\|_1$. Both cases contradict our choice of $\vez$.

    \paragraph{Case (ii).} $\overline{\vez}({\cal I}_2)$ has at least $c_Aw_{\max}^{1/2} \log w_{\max}$ distinct weights. When obtaining $\vez$ from $\veg$, the total weight of items that are added is exactly $\Delta({\cal I}_{\geq b})$, while the total weight of items that are deleted is at least $\Delta({\cal I}^{\overline{W^*}}_{1})$. Since the total weight of $\vez$ and $\veg$ differ by at most $w_{\max}$ (they are both maximal solutions), we have the following.
    \[
        \Delta({\cal I}_{\geq b}) \geq \Delta({\cal I}^{\overline{W^*}}_{1}) - w_{\max} \geq (2c_B-1) w^{3/2}_{\max} \geq c_Bw^{3/2}_{\max}.
    \]
    Also note that $\Delta({\cal I}_{\geq b})$ is exactly the total weight of items in $z({\cal I}_{\geq b})$.  Again, the weight multisets of $\overline{\vez}({\cal I}_2)$ and $\vez({\cal I}_{\geq b})$ satisfies the conditions of Lemma~\ref{lem:hit-a}.  By Lemma~\ref{lem:hit-a}, there is a non-empty subset ${\cal D}$ of $\overline{\vez}({\cal I}_2)$ and a non-empty subset ${\cal Z}$ of $\vez({\cal I}_{\geq b})$ such that ${\cal D}$ and ${\cal Z}$ have the same total weight. Moreover, ${\cal D}$ is to the left of ${\cal Z}$. By an argument similar to that in case (i), we can obtain a solution $\hat{\vez} = (\vez \setminus {\cal Z})\cup {\cal D}$ that is better than $\vez$, which contradicts with our choice of $\vez$.  This completes the proof.
\end{proof}

\subsection{Algorithmic Part}
\begin{lemma}
\label{lem:2.5-parition-alg}
In $\widetilde{O}(n)$ time, we can compute the partition $(W^*, \overline{W^*})$ of $W$ in Lemma~\ref{lem:2.5-parition-struct}.
\end{lemma}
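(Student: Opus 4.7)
The plan is algorithmic and simply implements the explicit construction of $(W^*,\overline{W^*})$ given in the proof of Lemma~\ref{lem:2.5-parition-struct}. Recall that there $W^*$ is determined by two indices $i^*$ and $j^*$: $i^*$ is the smallest index such that ${\cal I}_{[i^*,b)}$ contains exactly $2c_A w_{\max}^{1/2}\log w_{\max}$ distinct weights (or $1$ if no such index exists), and $j^*$ is defined symmetrically to the right of $b$; then $W^*$ is the set of weights of items in ${\cal I}_{[i^*,b)}\cup{\cal I}_{[b,j^*)}$. So I only need to locate $b$, $i^*$, $j^*$ efficiently.

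First I would sort the items in decreasing order of efficiency $p_i/w_i$, using the tie-breaking convention fixed in the structural proof (ties to the left of $b$ broken by increasing weight, ties to the right of $b$ broken by decreasing weight); this is a single $O(n\log n)$ sort. Then I would compute the maximal prefix solution $\veg$ by a left-to-right linear scan of the sorted list, packing copies into the knapsack greedily and stopping at the first item that cannot be fully added; the stopping item is the break item $b$. This takes $O(n)$ time.

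Next, I would find $i^*$ by scanning items leftward starting from index $b-1$, maintaining a hash set $H_L$ of encountered weights. The scan halts as soon as $|H_L|$ reaches $2c_A w_{\max}^{1/2}\log w_{\max}$, at which point $i^*$ is the current index; if the scan exhausts all items to the left of $b$ first, I set $i^* = 1$ and ${\cal I}_1 = \emptyset$ as in the structural proof. The index $j^*$ is computed symmetrically by scanning rightward from $b$ with an analogous hash set $H_R$. Finally, $W^* := H_L\cup H_R$ and $\overline{W^*} := W\setminus W^*$, which can be assembled in one further pass over the item weights.

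For the time bound, the sort costs $O(n\log n)$, the greedy scan is $O(n)$, and each of the two directional scans is $O(n)$ while performing only $\widetilde O(1)$ work per item via standard hashing. The hash sets never exceed size $O(w_{\max}^{1/2}\log w_{\max})$, so the final assembly of $W^*$ and $\overline{W^*}$ is dominated. The total running time is therefore $\widetilde{O}(n)$. I do not anticipate a genuine obstacle: the only delicacy is respecting the tie-breaking rule during sorting so that the indices $i^*,j^*$ match those used in the structural argument, which is a routine comparator modification.
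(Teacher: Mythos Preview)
Your proposal is correct and follows exactly the approach the paper intends: the paper's own proof is a single sentence observing that the construction of $W^*$ in the structural proof (locate the break item, scan outward until enough distinct weights are collected) can be carried out in $\widetilde{O}(n)$ time, and you have simply spelled out those steps. The only point the paper mentions that you leave implicit is that the procedure must work for Bounded Knapsack (not the conceptual 0\nobreakdash-1 expansion), but your scans are over the $n$ items rather than the $N$ copies, so this is already handled.
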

It is easy to see that the process of identifying $W^*$ in the proof of Lemma~\ref{lem:2.5-parition-struct} can be generalized to Bounded Knapsack and can be done in $\widetilde{O}(n)$ time.

The following theorem directly follows by Lemma~\ref{lem:2.5-parition-alg} and Lemma~\ref{lem:conv-by-group}.

\begin{theorem}
    There is an $\widetilde{O}(n + w^{5/2}_{\max})$-time algorithm for Bounded Knapsack.
\end{theorem}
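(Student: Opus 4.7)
The plan is to combine the fine-grained partition of Lemma~\ref{lem:2.5-parition-alg} with the proximity-based $(\min,+)$-convolution framework of Lemma~\ref{lem:conv-by-group}, since the proof should be essentially immediate. First, in $\widetilde{O}(n)$ time, I would compute the partition $(W^*,\overline{W^*})$ of $W$ provided by Lemma~\ref{lem:2.5-parition-alg} (a routine linear-pass variant of the structural construction in the proof of Lemma~\ref{lem:2.5-parition-struct}). Then I would invoke Lemma~\ref{lem:conv-by-group} with the two-group partition $W_1 = \overline{W^*}$ and $W_2 = W^*$, $k=2$.

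To supply the upper bounds $U_j$ that the framework requires, I would set $U_1 = 4c_B w_{\max}^{3/2}$, which is valid by Lemma~\ref{lem:2.5-parition-struct}(ii), and $U_2 = 2 w_{\max}^2$, which follows from Lemma~\ref{lem:proximity}: since every item has weight at most $w_{\max}$, we have
\[
\Delta({\cal I}^{W^*}) \;\le\; w_{\max}\cdot\|\vez-\veg\|_1 \;\le\; 2 w_{\max}^2.
\]
Note that $U_1 \le U_2$, matching the ordering hypothesis used in the proof sketch of Lemma~\ref{lem:conv-by-group}. The group sizes satisfy $|W_1| \le |W| \le w_{\max}$ and $|W_2| \le 4c_A w_{\max}^{1/2}\log w_{\max}$ by Lemma~\ref{lem:2.5-parition-struct}(i).

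Plugging these into the running time bound $\widetilde{O}\bigl(n + k\sum_{j=1}^{k}|W_j|\cdot U_j\bigr)$ of Lemma~\ref{lem:conv-by-group} yields
\[
\widetilde{O}\!\left(n + 2\bigl(|W_1|\cdot U_1 + |W_2|\cdot U_2\bigr)\right)
\;=\; \widetilde{O}\!\left(n + w_{\max}\cdot w_{\max}^{3/2} + w_{\max}^{1/2}\cdot w_{\max}^{2}\right)
\;=\; \widetilde{O}(n + w_{\max}^{5/2}),
\]
as claimed. Since the two terms $|W_1|\cdot U_1$ and $|W_2|\cdot U_2$ balance at exactly $w_{\max}^{5/2}$, the choice of partition is essentially tight for this approach.

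There is no substantive obstacle: all of the real work is already encapsulated in Lemma~\ref{lem:2.5-parition-struct} (the fine-grained proximity result proved via the additive-combinatorics exchange argument) and Lemma~\ref{lem:conv-by-group} (the SMAWK-based convolution framework). The only things to verify are (a) that Lemma~\ref{lem:2.5-parition-alg} delivers the partition in $\widetilde{O}(n)$ time, which is a direct algorithmic translation of the structural construction, and (b) that the ordering $U_1 \le U_2$ and the balance of the two terms produce the advertised exponent $5/2$; both checks are immediate.
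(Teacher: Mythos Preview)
Your proposal is correct and matches the paper's approach exactly: the paper simply states that the theorem ``directly follows by Lemma~\ref{lem:2.5-parition-alg} and Lemma~\ref{lem:conv-by-group},'' and your write-up spells out precisely the intended instantiation (the two-part partition, the bounds $U_1=\Theta(w_{\max}^{3/2})$ and $U_2=\Theta(w_{\max}^{2})$, and the balance at $w_{\max}^{5/2}$).
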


\section{An $\widetilde{O}(n + w^{12/5})$-time Algorithm for Knapsack}\label{sec:Knapsack-2.4}
In the structural part of the $\widetilde{O}(n + w^{5/2}_{\max})$-time algorithm, in order to make the weight multiset of $\vez({\cal I}_2)$ (or $\overline{\vez}({\cal I}_2)$) dense, we require that items in it have $\Theta(w^{1/2}_{\max}\log w_{\max})$ distinct weights. The multiplicities of these weights are not considered. In this section, we develop a better proximity result by exploiting the multiplicities of weights, and therefore obtain an algorithm with improved running time.  As before, this section is divided into the structural part and the algorithmic part.

\subsection{Structural Part - Fine-Grained Proximity}
As in Subsection~\ref{subsec:2.4-struct}, it suffices to consider 0-1 Knapsack, and the notations defined at the beginning of that subsection will be used.

\begin{lemma}\label{lem:2.4-partition-a}
There exists a partition of $W$ into $W^*$ and $\overline{W^*}$ such that 
    \begin{enumerate}[label = {\normalfont (\roman*)}]
        \item $|W^*| \leq 4c_Aw^{3/5}_{\max} \log w_{\max}$,
        \item $\Delta({\cal I}^{\overline{W^*}}) \leq 4c_Bw^{7/5}_{\max}$,     
    \end{enumerate}
where $c_A$ and $c_B$ are two large constants used in Lemma~\ref{lem:hit-a}.
\end{lemma}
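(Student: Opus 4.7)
The plan is to mimic the proof of Lemma~\ref{lem:2.5-parition-struct} essentially verbatim, replacing the exponent $p=1/2$ by $p=3/5$ throughout and invoking Lemma~\ref{lem:hit-a} with this new value. Since Lemma~\ref{lem:hit-a} is stated for any $p\in[1/2,1)$, and the pair $(3/5,\,7/5)$ still satisfies $p+(2-p)=2$, the same structural template will go through and deliver the stronger numerical bounds claimed here. It is worth noting that Lemma~\ref{lem:hit-b} is \emph{not} used at this stage; its role, together with a further multiplicity-aware partition, is reserved for a subsequent proximity lemma that drives the final running time down to $\widetilde{O}(n+w_{\max}^{12/5})$.

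First I would define $W^*$ via an index partition $({\cal I}_1,{\cal I}_2,{\cal I}_3,{\cal I}_4)$ of the form used in Lemma~\ref{lem:2.5-parition-struct}, but with the distinct-weight threshold raised from $2c_A w_{\max}^{1/2}\log w_{\max}$ to $2c_A w_{\max}^{3/5}\log w_{\max}$. Concretely, let $i^*$ be the minimum index such that ${\cal I}_{[i^*,b)}$ realises exactly $2c_A w_{\max}^{3/5}\log w_{\max}$ distinct weights; set ${\cal I}_2={\cal I}_{[i^*,b)}$ and ${\cal I}_1={\cal I}_{<i^*}$ (or ${\cal I}_2={\cal I}_{<b}$, ${\cal I}_1=\emptyset$ if no such $i^*$ exists), and define $j^*,{\cal I}_3,{\cal I}_4$ symmetrically. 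Taking $W^*$ to be the set of weights appearing in ${\cal I}_2\cup{\cal I}_3$ immediately yields property~(i), $|W^*|\leq 4c_A w_{\max}^{3/5}\log w_{\max}$. The same tie-breaking convention (ascending weight in ${\cal I}_{<b}$, descending weight in ${\cal I}_{\geq b}$ when efficiencies agree) is inherited without change.

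For property~(ii) it suffices, by symmetry, to bound $\Delta({\cal I}^{\overline{W^*}}_1)\leq 2c_B w_{\max}^{7/5}$, which I would show by contradiction. Assume $\Delta({\cal I}^{\overline{W^*}}_1)>2c_B w_{\max}^{7/5}$. Since this quantity is positive, ${\cal I}^{\overline{W^*}}_1$ is non-empty, and hence ${\cal I}_2$ contains the full $2c_A w_{\max}^{3/5}\log w_{\max}$ distinct weights, which by pigeonhole split into two sub-cases: either (i) $\vez({\cal I}_2)$ has at least $c_A w_{\max}^{3/5}\log w_{\max}$ distinct weights, or (ii) $\overline{\vez}({\cal I}_2)$ does. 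In case~(i) I would apply Lemma~\ref{lem:hit-a} with $p=3/5$ to $A=\vez({\cal I}_2)$ and $B=\overline{\vez}({\cal I}^{\overline{W^*}}_1)$, using $\Sigma_B=\Delta({\cal I}^{\overline{W^*}}_1)>2c_B w_{\max}^{7/5}\geq c_B w_{\max}^{2-p}$. In case~(ii) maximality of $\veg$ and $\vez$ gives $\Delta({\cal I}_{\geq b})\geq\Delta({\cal I}^{\overline{W^*}}_1)-w_{\max}\geq c_B w_{\max}^{7/5}$, and Lemma~\ref{lem:hit-a} then applies to $A=\overline{\vez}({\cal I}_2)$ and $B=\vez({\cal I}_{\geq b})$. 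In either case the resulting non-empty subsets $A'\subseteq A$ and $B'\subseteq B$ of equal total weight, combined with the tie-breaking-assisted exchange from Lemma~\ref{lem:2.5-parition-struct}, produce a feasible $\hat{\vez}$ with strictly larger profit than $\vez$, or with the same profit but strictly smaller $\|\hat{\vez}-\veg\|_1$, contradicting the choice of $\vez$. The main obstacle is essentially bookkeeping, namely ensuring that the constants $c_A,c_B$ inherited from Lemma~\ref{lem:hit-a} are large enough to absorb the factor-of-$2$ slacks and to dominate the $w_{\max}$ correction in case~(ii); no genuinely new technical ingredient is required.
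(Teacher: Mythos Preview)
Your proposal is correct and matches the paper's approach exactly: the paper states that Lemma~\ref{lem:2.4-partition-a} ``can be proved in exactly the same way as Lemma~\ref{lem:2.5-parition-struct},'' with the trade-off coming from Lemma~\ref{lem:hit-a} applied with exponent $p=3/5$ instead of $p=1/2$. Your observation that Lemma~\ref{lem:hit-b} is reserved for the subsequent partition of $W^*$ is also correct.
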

Lemma~\ref{lem:2.4-partition-a} does not exploit the multiplicities of weights, and can be proved in exactly the same way as Lemma~\ref{lem:2.5-parition-struct}. The trade-off between $|W^*|$ and $\Delta({\cal I}^{\overline{W^*}})$ essentially results from Lemma~\ref{lem:hit-a}. When $|W^*|$ is $\widetilde{\Theta}(w^p_{\max})$ for some $p \geq 1/2$,  $\Delta({\cal I}^{\overline{W^*}})$ is $O(w^{2-p}_{\max})$.

$\overline{W^*}$ already meets the requirement since
\(
     |\overline{W^*}|\Delta({\cal I}^{\overline{W^*}}) \leq c_Bw^{12/5}_{\max}.   
\)
$W^*$, however, may be problematic since $\Delta({\cal I}^{W^*})$ can be as large as $O(w^2_{\max})$.  We further partition $W^*$ into two sets and prove a similar result as before. But this time, we will take advantage of the multiplicities of weights and use Lemma~\ref{lem:hit-b} instead of Lemma~\ref{lem:hit-a}. 
\begin{restatable}{lemma}{lempartitionb}
    \label{lem:2.4-partition-b}
    There exits a partition of $W^*$ into $W^+$ and $\overline{W^+} = W^* \setminus W^+$ such that \begin{enumerate}[label = {\normalfont (\roman*)}]
        \item $|W^+| \leq 4c_A w^{2/5}_{\max}\log^2 w_{\max}$,
        \item $\Delta({\cal I}^{\overline{W^+}}) \leq 8c_Bw^{9/5}_{\max}$,
    \end{enumerate}
    where $c_A$ and $c_B$ are two large constants used in Lemma~\ref{lem:hit-b}.
\end{restatable}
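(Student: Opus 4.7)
\textbf{Proof proposal for Lemma~\ref{lem:2.4-partition-b}.} The plan is to mimic the structure of the proof of Lemma~\ref{lem:2.5-parition-struct}, but swap Lemma~\ref{lem:hit-a} for the stronger Lemma~\ref{lem:hit-b}. This lets us trade distinct-weight count for weight multiplicity: instead of demanding $\widetilde{\Theta}(w_{\max}^{1/2})$ distinct weights in the middle region around the break item, we demand only $\widetilde{\Theta}(w_{\max}^{2/5})$ distinct weights, but each appearing with multiplicity at least $\widetilde{\Theta}(w_{\max}^{1/5})$, which matches the density hypothesis of Lemma~\ref{lem:hit-b}.

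First I would restrict attention to ${\cal I}^{W^*}$ and define a ``heavy-weight'' middle region around the break item $b$. Let $i^*$ be the largest index such that the multiset of weights of ${\cal I}_2' := {\cal I}^{W^*}_{[i^*,b)}$ contains exactly $2c_A w_{\max}^{2/5}\log^2 w_{\max}$ distinct weights with multiplicity at least $2 w_{\max}^{1/5}$; if the leftward sweep from $b$ never reaches this count, set ${\cal I}_2' = {\cal I}^{W^*}_{<b}$. Put ${\cal I}_1' = {\cal I}^{W^*}_{<b}\setminus{\cal I}_2'$. Define ${\cal I}_3'$ and ${\cal I}_4'$ symmetrically on the right of $b$. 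Let $W^+$ be the set of weights $w\in W^*$ with multiplicity at least $2 w_{\max}^{1/5}$ in ${\cal I}_2'\cup{\cal I}_3'$. By construction $|W^+| \leq 4c_A w_{\max}^{2/5}\log^2 w_{\max}$, giving (i).

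For (ii), decompose $\Delta({\cal I}^{\overline{W^+}}) = \Delta({\cal I}^{\overline{W^+}}\cap{\cal I}_{<b}) + \Delta({\cal I}^{\overline{W^+}}\cap{\cal I}_{\geq b})$ and bound each half by symmetry. An item on the left with weight in $\overline{W^+}$ is either in ${\cal I}_1'$ or lies in ${\cal I}_2'$ but carries a weight appearing fewer than $2w_{\max}^{1/5}$ times in ${\cal I}_2'\cup{\cal I}_3'$. Since $|W^*|\leq 4c_Aw_{\max}^{3/5}\log w_{\max}$, the second kind contributes at most $|W^*|\cdot 2w_{\max}^{1/5}\cdot w_{\max} = \widetilde{O}(w_{\max}^{9/5})$ to $\Delta$. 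For the first kind, suppose for contradiction that $\Delta({\cal I}_1'\cap{\cal I}^{\overline{W^+}}) > 4c_Bw_{\max}^{9/5}$; this forces the leftward sweep to have met its target, so ${\cal I}_2'$ indeed has $2c_Aw_{\max}^{2/5}\log^2 w_{\max}$ weights of multiplicity $\geq 2w_{\max}^{1/5}$. For each such weight, either $\vez({\cal I}_2')$ or $\overline{\vez}({\cal I}_2')$ retains at least $w_{\max}^{1/5}$ copies of it; by pigeon-hole, one of the two sides contains at least $c_Aw_{\max}^{2/5}\log^2 w_{\max}$ distinct weights each with multiplicity $\geq w_{\max}^{1/5}$, meeting the density hypothesis of Lemma~\ref{lem:hit-b}.

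In the subcase where $\vez({\cal I}_2')$ is dense (case i), apply Lemma~\ref{lem:hit-b} with $A$ the weight multiset of $\vez({\cal I}_2')$ and $B$ the weight multiset of $\overline{\vez}({\cal I}_1'\cap{\cal I}^{\overline{W^+}})$, whose sum equals $\Delta({\cal I}_1'\cap{\cal I}^{\overline{W^+}}) > 4c_Bw_{\max}^{9/5}\geq c_Bw_{\max}^{8/5}$. In the subcase where $\overline{\vez}({\cal I}_2')$ is dense (case ii), apply Lemma~\ref{lem:hit-b} with $A$ the weight multiset of $\overline{\vez}({\cal I}_2')$ and $B$ the weight multiset of $\vez({\cal I}_{\geq b})$; using the fact that the total weights of $\veg$ and $\vez$ differ by at most $w_{\max}$, $\Sigma_B \geq \Delta({\cal I}_1'\cap{\cal I}^{\overline{W^+}}) - w_{\max} \geq c_Bw_{\max}^{8/5}$. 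In either subcase Lemma~\ref{lem:hit-b} produces nonempty ${\cal D}$ and ${\cal Z}$ of equal total weight with ${\cal D}$ strictly to the left of ${\cal Z}$, and the exchange $\hat{\vez} := (\vez\setminus{\cal Z})\cup{\cal D}$ is feasible; the tie-breaking on efficiency and weight exactly as in the proof of Lemma~\ref{lem:2.5-parition-struct} yields either $p(\hat{\vez})>p(\vez)$ or $\|\hat{\vez}-\veg\|_1<\|\vez-\veg\|_1$, contradicting the choice of $\vez$. The main obstacle is calibrating the multiplicity threshold so that (a) the rare-weight trivial contribution $\widetilde{O}(w_{\max}^{9/5})$ stays within the target bound (addressed by choosing the threshold as a small constant times $w_{\max}^{1/5}$ and absorbing logs into $c_B$), and (b) the $\vez/\overline{\vez}$ split loses at most a factor of two in multiplicity, which is why the construction uses threshold $2w_{\max}^{1/5}$ so the post-split threshold is still $\geq w_{\max}^{1/5}$ as Lemma~\ref{lem:hit-b} demands.
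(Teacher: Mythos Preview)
Your proposal is essentially the paper's own argument: restrict to ${\cal I}^{W^*}$, grow ${\cal I}_2',{\cal I}_3'$ around $b$ until $2c_Aw_{\max}^{2/5}\log^2 w_{\max}$ weights reach multiplicity $\ge 2w_{\max}^{1/5}$, let $W^+$ be those ``frequent'' weights, bound the rare-weight contribution in ${\cal I}_2',{\cal I}_3'$ trivially via $|W^*|\le \widetilde O(w_{\max}^{3/5})$, and for ${\cal I}_1',{\cal I}_4'$ run the pigeonhole-plus-Lemma~\ref{lem:hit-b} exchange exactly as in Lemma~\ref{lem:2.5-parition-struct}. The case split, the choice of $A$ and $B$ in each subcase, and the final exchange are all the same.

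One small slip to fix: you define $W^+$ as the weights with multiplicity $\ge 2w_{\max}^{1/5}$ in the \emph{union} ${\cal I}_2'\cup{\cal I}_3'$, and then assert $|W^+|\le 4c_Aw_{\max}^{2/5}\log^2 w_{\max}$ ``by construction''. That does not follow: a weight could have, say, $1.5\,w_{\max}^{1/5}$ copies in ${\cal I}_2'$ and $0.5\,w_{\max}^{1/5}$ in ${\cal I}_3'$, hence qualify for your $W^+$ while being frequent in neither ${\cal I}_2'$ nor ${\cal I}_3'$; the construction only controls the number of weights frequent in each piece separately. Define $W^+$ instead as the weights that are frequent in ${\cal I}_2'$ \emph{or} frequent in ${\cal I}_3'$ (this is what the paper does); then $|W^+|\le 2c_A\cdots + 2c_A\cdots$ is immediate, your rare-weight bound on ${\cal I}_2'\cap{\cal I}^{\overline{W^+}}$ still holds (non-frequent in ${\cal I}_2'$ alone already gives the $2w_{\max}^{1/5}$ cap), and the rest of your argument goes through unchanged.
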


\subsection{Algorithmic Part}
The construction of $W^*$ and $W^+$ can be easily generalized to Bounded Knapsack and can be done in $\widetilde{O}(n)$ time. So we have: 
\begin{lemma}
\label{lem:2.4-partition-alg}
In $\widetilde{O}(n)$ time, we can compute the partition $(W^+, \overline{W^+}, \overline{W^*})$ of $W$ such that
    \begin{enumerate}[label = {\normalfont (\roman*)}]
        \item $|W^+| \leq 4c_A w^{2/5}_{\max}\log^2 w_{\max}$,
        \item $|\overline{W^+}| \leq 4c_Aw^{3/5}_{\max} \log w_{\max}$ and $\Delta({\cal I}^{\overline{W^+}}) \leq 8c_Bw^{9/5}_{\max}$,
        \item  $\Delta({\cal I}^{\overline{W^*}}) \leq 4c_Bw^{7/5}_{\max}$.
    \end{enumerate}
\end{lemma}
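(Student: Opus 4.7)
The plan is to realize the combinatorial constructions used in the proofs of Lemmas~\ref{lem:2.4-partition-a} and~\ref{lem:2.4-partition-b} as explicit $\widetilde{O}(n)$-time procedures. Both structural results define the partition pieces by simple counting-type descriptions: $W^*$ is the set of weights of items lying in a contiguous index window around the break item $b$, whose width is controlled by a distinct-weight threshold; $W^+$ is then a further refinement of $W^*$ formed by the weights that are supported by sufficiently many item copies inside that window. Each such description can be realized by one or two linear scans of the item list, which is what I intend to do.

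First I would preprocess the input: sort items in decreasing order of efficiency $p_i/w_i$ (breaking ties as specified in the proof of Lemma~\ref{lem:2.5-parition-struct}) in $O(n\log n)$ time, then compute the maximal prefix solution $\veg$ and identify the break item $b$ with a single linear scan. Next I would determine the window $({\cal I}_2, {\cal I}_3)$ corresponding to the $\widetilde{O}(w_{\max}^{3/5})$ distinct-weight threshold of Lemma~\ref{lem:2.4-partition-a}. To find $i^*$, I scan items in decreasing order of index starting from $b-1$, maintaining a running count of distinct weights seen so far by means of an $O(w_{\max})$ bit-array; I stop at the first index for which this count reaches $2c_A w_{\max}^{3/5}\log w_{\max}$. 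The index $j^*$ is found symmetrically by scanning rightward from $b$. Setting $W^* = \{w_i : i \in [i^*, j^*)\}$ and $\overline{W^*} = W\setminus W^*$, Lemma~\ref{lem:2.4-partition-a} immediately gives $|W^*| \leq 4c_A w_{\max}^{3/5}\log w_{\max}$ and $\Delta({\cal I}^{\overline{W^*}}) \leq 4c_B w_{\max}^{7/5}$.

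To refine $W^*$ into $(W^+, \overline{W^+})$, I follow the multiplicity-based construction from the proof of Lemma~\ref{lem:2.4-partition-b}: for each $w \in W^*$, compute the copy-multiplicity $\mu(w) = \sum_{i \in {\cal I}_2 \cup {\cal I}_3,\, w_i = w} u_i$ with a single pass over the items in $[i^*, j^*)$, and let $W^+$ be the subset of $W^*$ consisting of those weights whose $\mu(w)$ exceeds the threshold (of order $w_{\max}^{1/5}$) dictated by the hypothesis of Lemma~\ref{lem:hit-b}. The bounds $|W^+| \leq 4c_A w_{\max}^{2/5}\log^2 w_{\max}$ and $\Delta({\cal I}^{\overline{W^+}}) \leq 8c_B w_{\max}^{9/5}$ then follow directly from Lemma~\ref{lem:2.4-partition-b}. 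All steps together use $O(n\log n)$ time plus $O(n)$ counter operations and $O(w_{\max})$ auxiliary space, so the overall running time is $\widetilde{O}(n)$ (noting that $w_{\max}$ is already absorbed by the overall $\widetilde{O}(n+w_{\max}^{12/5})$ budget of the main algorithm).

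The main, though mild, obstacle is correctly bridging the item-level description used by the algorithm with the copy-level description used in the structural proofs, where each bounded item with multiplicity $u_i$ is conceptually expanded into $u_i$ unit-copy items. Since all copies of a single item share the same weight and efficiency, the distinct-weight count over any index range is unaffected by this expansion, and the copy-multiplicity inside ${\cal I}_2 \cup {\cal I}_3$ is exactly what the weighted sum $\sum u_i$ computes at the item level. Consequently, the item-level scans described above faithfully emulate the structural construction, and the bounds guaranteed by Lemmas~\ref{lem:2.4-partition-a} and~\ref{lem:2.4-partition-b} transfer unchanged to the partition the algorithm produces.
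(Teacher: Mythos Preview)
Your treatment of $W^*$ is fine and matches the paper, but your construction of $W^+$ does not follow the proof of Lemma~\ref{lem:2.4-partition-b} and in general fails to deliver bound~(i). In that proof, $W^+$ is \emph{not} defined by thresholding multiplicities inside the window $[i^*,j^*)$ that produced $W^*$. Instead, a \emph{second} partition $({\cal I}_1,{\cal I}_2,{\cal I}_3,{\cal I}_4)$ of ${\cal I}^{W^*}$ is built by scanning outward from $b$ again, this time stopping once exactly $2c_A w_{\max}^{2/5}\log^2 w_{\max}$ weights have become \emph{frequent} (i.e.\ have accumulated at least $2w_{\max}^{1/5}$ copies) in the scanned range; $W^+$ is then the set of frequent weights inside this new ${\cal I}_2\cup{\cal I}_3$. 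The size bound $|W^+|\le 4c_A w_{\max}^{2/5}\log^2 w_{\max}$ comes precisely from this stopping rule.

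Under your procedure, you take $W^+$ to be all $w\in W^*$ whose copy-count inside the original window exceeds roughly $w_{\max}^{1/5}$. Nothing prevents every weight in $W^*$ from meeting that threshold (for instance if each of the $\widetilde{\Theta}(w_{\max}^{3/5})$ weights in $W^*$ has $\Theta(w_{\max}^{1/5})$ copies there), in which case $|W^+|=|W^*|=\widetilde{\Theta}(w_{\max}^{3/5})$, violating~(i). The fix is straightforward and still $\widetilde{O}(n)$: after computing $W^*$, perform a second linear scan restricted to items with weight in $W^*$, maintaining per-weight copy counters and a running count of weights that have just crossed the $2w_{\max}^{1/5}$ threshold; stop when this count reaches $2c_A w_{\max}^{2/5}\log^2 w_{\max}$ on each side of $b$, and declare $W^+$ to be the set of weights that became frequent. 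With this correction your running-time analysis goes through unchanged.
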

It is easy to verify that
\[
    |W^+|\Delta({\cal I}^{W+}) + |\overline{W^+}|\Delta({\cal I}^{\overline{W^+}}) + |\overline{W^*}|\Delta({\cal I}^{\overline{W^*}}) \leq \widetilde{O}(w^{12/5}_{\max}).
\]
Therefore, Theorem~\ref{thm:knapsack} follows by Lemma~\ref{lem:2.4-partition-alg} and~\ref{lem:conv-by-group}.

\section{Algorithms for Bounded Subset Sum}\label{sec:Subset Sum}

We consider Bounded Subset Sum. We have the following observation by Lemma~\ref{lem:proximity}.

\begin{Observation}\label{obs:subset-sum-u}
    Without loss of generality, we may assume $u_i\leq 4w_{\max}$ for all $1\leq i\leq n$.
\end{Observation}

\subsection{An $\widetilde{O}(nw_{\max})$-time Algorithm}
In this subsection, we present an $\widetilde{O}(nw_{\max})$-time algorithm. It builds upon two ingredients: (i). a faster algorithm for computing the sumset of sets with a special structure, and (ii). a layering technique that transforms the input into a special structure.

\subsubsection{Faster sumset algorithm for sets with a special structure}
We first introduce some notations for integer sets. Given an integer (multi-)set $X$ and a number $a$, we let $X-a:=\{x-a:x\in X\}$. We let $\texttt{m}_{X}$ denote the largest number in $X$. We let $rX:=\{rx:x\in X\}$ for any real number $r$. Given two integer multisets $A,B$, we let $A+B := \{a+b: a\in A\cup\{0\}, b\in B\cup\{0\} \}$. It is easy to verify that $r(A+B)=rA+rB$. 

It is known that the sumset $A+B$ can be computed via Fast Fourier Transform (FFT) in $\widetilde{O}(\texttt{m}_{A}+\texttt{m}_{B})$ time. More generally, the following is true: 

\begin{lemma}\label{lemma:sumset}{\normalfont \cite{BN21}}
	Given sets $S_1, S_2,\cdots,S_{\ell} \subset \mathbb{N}$, we can compute $S_1 + S_2 + \cdots + S_{\ell}$ in $O(\sigma\log\sigma \log\ell)$ time, where $\sigma= \texttt{m}_{S_0}+\texttt{m}_{S_1}+\cdots+\texttt{m}_{S_{\ell}}$.
\end{lemma}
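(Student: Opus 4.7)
The natural approach is a balanced divide-and-conquer over the $\ell$ input sets, using FFT at each merge. Build a binary tree whose $\ell$ leaves are $S_1,\ldots,S_\ell$, and whose internal nodes store the sumset of the leaves in their subtree. To compute the sumset at an internal node with children holding sets $A$ and $B$, I would invoke a single FFT-based convolution to obtain $A+B$; this costs $O((\texttt{m}_A+\texttt{m}_B)\log(\texttt{m}_A+\texttt{m}_B))$ time.

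The crucial structural observation is that $\texttt{m}_{A+B}\le \texttt{m}_A+\texttt{m}_B$, so if we write $M(v)$ for the value $\texttt{m}$ of the set stored at node $v$, then $M(v)\le \sum_{\text{leaf }u\preceq v} \texttt{m}_{S_u}$. Consequently, at any fixed level of the tree the values $M(v)$ summed over all nodes at that level are bounded by $\sigma$. The total FFT work at a single level is thus at most $O(\sigma\log\sigma)$. Since the tree is balanced, it has $O(\log \ell)$ levels, which multiplies out to the claimed $O(\sigma\log\sigma\,\log\ell)$ bound.

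The only mildly subtle point is handling the case when $\ell$ is not a power of two; I would simply pad the list with copies of $\{0\}$ to reach the next power of two, which does not change the sumset (since $\{0\}$ is an identity under the $+$ operation as defined in the excerpt) and at most doubles $\ell$. The main (and really only) obstacle is verifying the level-by-level accounting carefully: one must check that the inequality $\texttt{m}_{A+B}\le\texttt{m}_A+\texttt{m}_B$ gives a telescoping bound so that no level's cost exceeds $O(\sigma\log\sigma)$, and that the $\log\sigma$ factor from each FFT is never larger than $\log\sigma$ at the root (true because every intermediate set's max is at most $\sigma$). Nothing else enters; the result then follows by summing over the $O(\log \ell)$ levels.
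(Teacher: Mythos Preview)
Your approach is correct and is in fact the standard argument for this folklore result. Note, however, that the paper does not supply its own proof of this lemma: it is stated with a citation to \cite{BN21} and used as a black box. So there is nothing in the paper to compare against; your balanced divide-and-conquer with FFT at each internal node, together with the level-wise accounting via $\texttt{m}_{A+B}\le \texttt{m}_A+\texttt{m}_B$, is exactly the intended proof and matches the cited source.
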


For our purpose, given a target value $t$, we want to decide whether $t\in A+kB$ quickly for some positive integer $k$. Using FFT, it takes $O(\texttt{m}_{A}+k\texttt{m}_{B})$ time, which is too much when $k$ is large. We show that $O(\texttt{m}_{A}+\texttt{m}_{B})$ time is sufficient, and the result can be further generalized by the following lemma.
  
\begin{lemma}\label{lemma:k-sum}
			Given sets 
			$S_0, S_1,\cdots,S_{\ell} \subset \mathbb{N}$, an integer $k\in\mathbb{Z}_{>0}$ and a target number $t$, we can determine whether $t\in S_0+ k S_1+ \cdots+ k^{\ell} S_{\ell}$ in $O(\sigma (\ell+1)\log\sigma)$ time, where $\sigma= \texttt{m}_{S_0}+\texttt{m}_{S_1}+\cdots+\texttt{m}_{S_{\ell}}$.
		\end{lemma}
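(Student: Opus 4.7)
The plan is to iteratively peel off the outermost factor $S_0$, reducing the $(\ell+1)$-factor problem to an $\ell$-factor one by a single FFT per step. The crucial observation is that a residue-based decomposition of the polynomial $A_0(x) := \sum_{s \in S_0 \cup \{0\}} x^s$ lets me collapse the factor $A_0$ together with $A_1(x^k)$ into a single new polynomial of controlled degree, so that every intermediate polynomial stays of size $O(\sigma)$.

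Concretely, I would set $A_i(x) := \sum_{s \in S_i \cup \{0\}} x^s$; then $t \in S_0 + kS_1 + \cdots + k^\ell S_\ell$ is equivalent to $[x^t]\, A_0(x)\,A_1(x^k)\cdots A_\ell(x^{k^\ell}) > 0$, since this coefficient counts valid decompositions $(s_0,\ldots,s_\ell)$. Writing $r := t \bmod k$ and splitting $A_0(x) = \sum_{r'=0}^{k-1} x^{r'} A_0^{(r')}(x^k)$ with $A_0^{(r')}(y) := \sum_{q \geq 0,\ r' + kq \in S_0 \cup \{0\}} y^q$, every factor after $A_0$ contributes only multiples of $k$ to the exponent, so only the residue $r' = r$ contributes to $[x^t]$, giving
\[
[x^t]\, A_0(x)\,A_1(x^k)\cdots A_\ell(x^{k^\ell}) \;=\; [y^{\lfloor t/k\rfloor}]\, A_0^{(r)}(y)\,A_1(y)\,A_2(y^k)\cdots A_\ell(y^{k^{\ell-1}}).
\]
Setting $\hat B_1(y) := A_0^{(r)}(y)\cdot A_1(y)$ via one FFT, the right-hand side has exactly the same shape as the original problem but with one fewer factor and a smaller target $\lfloor t/k\rfloor$.

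Iterating this reduction yields the full algorithm: maintain a polynomial $\hat B_i$ and a target $t_i$, initialized to $\hat B_0 := A_0$ and $t_0 := t$; at each step, extract the residue-$r_i$ subseries $\hat B_i^{(r_i)}$ for $r_i := t_i \bmod k$, form $\hat B_{i+1} := \hat B_i^{(r_i)} \cdot A_{i+1}$ by FFT, and set $t_{i+1} := \lfloor t_i/k\rfloor$. After $\ell$ steps, accept iff $[x^{t_\ell}]\, \hat B_\ell(x) > 0$.

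For the running time, the key claim to verify by induction is the telescoping degree bound
\[
\deg \hat B_i \;\leq\; \sum_{j=0}^{i} \frac{\texttt{m}_{S_j}}{k^{i-j}} \;\leq\; \sigma,
\]
which follows from $\deg \hat B_i^{(r_i)} \leq \deg \hat B_i / k$ and $\deg A_{i+1} = \texttt{m}_{S_{i+1}}$. Each FFT multiplication is then between polynomials of degree $O(\sigma)$ and costs $O(\sigma \log \sigma)$; over $\ell$ iterations, plus the final coefficient lookup, the total cost is $O((\ell+1)\,\sigma\log\sigma)$. The main technical point will be stating the residue-based polynomial identity precisely and controlling the degree via the geometric-series telescoping above; once these are in place, both correctness and the running time follow directly from standard FFT-based polynomial multiplication.
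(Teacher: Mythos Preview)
Your proposal is correct and takes essentially the same approach as the paper's proof: both extract the residue class of $t$ modulo $k$ from $S_0$, divide through by $k$, merge the result with $S_1$ via one FFT, and recurse on an $\ell$-factor instance, bounding each intermediate object by $O(\sigma)$. The only difference is cosmetic---you phrase the argument in generating-function language (coefficient extraction from $A_0(x)A_1(x^k)\cdots$) whereas the paper works directly with sets and sumsets, but the underlying operations and the per-step cost analysis are identical.
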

\begin{proof}
We prove by induction. The lemma is obviously true for $\ell=0$. Suppose it is true for $\ell=h-1$. That is, we can determine whether $t\in S_0+ k S_1+ \cdots+ k^{h-1} S_{h-1}$ in $c_1(\sum_{i=0}^{h-1}\texttt{m}_{S_i})(h-1)\log(\sum_{i=0}^{h-1}\texttt{m}_{S_i})$ time for some sufficiently large constant $c_1$. Now we consider the case $\ell=h$. 

Note that $t\in S_0+ k S_1+ \cdots+ k^{h} S_{h}$ if and only if $t=t_0+t'$ where $t_0\in S_0$ and $t'\in k S_1+ \cdots+ k^{h} S_{h}$. It follows that $t'$ is a multiple of $k$. Therefore, $t\equiv t_0 \pmod k$. Let $r_0\in [0,k-1]$ be the residue of $t$ modulo $k$, $\bar{S}_0\subseteq S_0$ be the subset of integers in $S_0$ whose residue is $r_0$. It is clear that every element in $\bar{S}_0-r_0$ is a multiple of $k$, so $\frac{1}{k}(\bar{S}_0-r_0)$ is an integer set. 
Moreover, it follows that $t\in S_0+ k S_1+ \cdots+ k^{h} S_{h}$ if and only if $t\in \bar{S}_0+ k S_1+ \cdots+ k^{h} S_{h}$, or equivalently, if and only if
\[
    \frac{t-r_0}{k}\in \frac{1}{k}\left((\bar{S}_0-r_0)+ k S_1+ \cdots+ k^{h} S_{h}\right)=\frac{1}{k}(\bar{S}_0-r_0)+  S_1+ kS_2+ \cdots+ k^{h-1} S_{h}.
\]
Now we compute $\hat{S}_1:=\frac{1}{k}(\bar{S}_0-r_0)+  S_1$ via FFT, which takes $c_2\sigma_2\log \sigma_2$ time for some constant $c_2$ and $\sigma_2\leq \texttt{m}_{S_0}/k+\texttt{m}_{S_1}$. It thus remains to determine whether
\[
    \frac{t-r_0}{k}\in \hat{S}_1+ kS_2+ \cdots+ k^{h-1} S_{h}.
\]
By the induction hypothesis, this takes $c_1(h-1)\sigma'\log \sigma'$ time where $\sigma'=\texttt{m}_{\hat{S}_1}+\texttt{m}_{S_2}+\cdots+\texttt{m}_{S_{h}}\leq \texttt{m}_{S_0}/k+\texttt{m}_{S_1}+\cdots+\texttt{m}_{S_{h}}$. Therefore, it is easy to verify that the overall running time is
bounded by $c_2\sigma_2\log\sigma_2+c_1(h-1)\sigma'\log \sigma'\leq c_1h \sigma\log\sigma$ when $c_1\geq c_2$.
\end{proof}

\subsubsection{Item Grouping}
We follow a standard idea to bundle item copies into groups. A solution to the grouped instance will take item copies within a group as a whole. Thus,  the key point is to show that, after grouping, we do not lose any solution, and thus the optimal solution. 


Recall that $u_j$ refers to the copy number of item $j$. 
It is obvious that $u_j$ can be represented as $u_j[0]+u_j[1]\cdot 2^1+\cdots+u_j[l_j]\cdot 2^{l_j}$ where $2\leq u_j[i]<4$ for all $0\leq i\leq {\ell}_j-1$ and $u_j[\ell_j]<4$. This can be easily achieved recursively: let $r$ be the residue of $u_j$ modulo $2$; set $u_j[0]=2+r$, $u_j\leftarrow \frac{u_j-(2+r)}{2}$ and repeat the above procedure. 

As $u_j \leq 4w_{\max}$ for all $1\leq j\leq n$, ${\ell}_j \leq 2+\log w_{\max}$.
For ease of discussion, we let $u_j[i]=0$ for $i>{\ell}_j$ and $\ell=\max_j {\ell}_j$.
Now we define $X_i$ as a multiset that consists of exactly $u_j[i]$ copies of weight $w_j$ for every $j$, and note that by $2^iX_i$ we mean the multiset that consists of exact $u_j[i]$ copies of weight $2^iw_j$, which represents a multiset of groups (where each group contains $2^i$ copies of weight $w_j$ but has to be selected as a whole). 

Let $X$ denote the multiset where every $w_j$ occurs $u_j$ times. Recall that ${\cal S}(X)$ denotes the set of all possible subset-sums of multiset $X$.  
We show the following in the appendix.

\begin{restatable}{lemma}{lemsumunion}
\label{lemma:sum-union}
    \[{\cal S}(X)= \sum_{i=0}^{\ell}2^i{\cal S}(X_i) = 2^0{\cal S}(X_0) + \cdots + 2^{\ell}{\cal S}(X_{\ell}).\] 
    That is, for any $t\in {\cal S}(X)$, there exist $t_i\in {\cal S}(X_i)$ such that $t=\sum_{i=0}^{\ell}2^it_i$.
\end{restatable}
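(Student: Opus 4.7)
I will establish the two containments $\sum_{i=0}^{\ell} 2^i {\cal S}(X_i) \subseteq {\cal S}(X)$ and ${\cal S}(X) \subseteq \sum_{i=0}^{\ell} 2^i {\cal S}(X_i)$ separately. The inclusion ``$\supseteq$'' is straightforward: if $t_i = \sum_j c_{j,i} w_j \in {\cal S}(X_i)$ with $0 \le c_{j,i} \le u_j[i]$, then $\sum_i 2^i t_i = \sum_j \bigl(\sum_i 2^i c_{j,i}\bigr) w_j$, and since $\sum_i 2^i c_{j,i} \le \sum_i 2^i u_j[i] = u_j$, this expression is a subset sum of $X$.

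The nontrivial direction is ``$\subseteq$''. Given $t \in {\cal S}(X)$, write $t = \sum_j c_j w_j$ with $0 \le c_j \le u_j$. The argument reduces to the following per-item claim: for every $j$, one can decompose $c_j = \sum_i 2^i c_{j,i}$ with $0 \le c_{j,i} \le u_j[i]$. Once this is in hand, setting $t_i := \sum_j c_{j,i} w_j$ gives $t_i \in {\cal S}(X_i)$ and $t = \sum_i 2^i t_i$, as required.

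I would prove the per-item claim by induction on $\ell_j$. Dropping the subscript $j$: given $u = \sum_{i=0}^{\ell} 2^i u[i]$ with $u[i] \in \{2,3\}$ for $i < \ell$ and $u[\ell] \in \{1,2,3\}$, any $c$ with $0 \le c \le u$ admits a representation $c = \sum_i 2^i c[i]$ with $c[i] \in [0, u[i]]$. The base case $\ell = 0$ is immediate. For the inductive step, I plan to choose $c[0]$ subject to (i) $c[0] \equiv c \pmod 2$, (ii) $0 \le c[0] \le u[0]$, and (iii) the residual $c' := (c - c[0])/2$ satisfies $c' \le u' := (u - u[0])/2$, i.e., $c[0] \ge u[0] - (u - c)$. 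Then the inductive hypothesis applied to $(c', u')$ produces $c[1], \ldots, c[\ell]$.

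The main obstacle is verifying that a valid $c[0]$ always exists, and this is precisely where the structural condition $u[0] \ge 2$ comes in. With $r := c \bmod 2$, the admissible values form $\{r, r+2\} \cap [0, u[0]]$, which is always nonempty. A brief case analysis on the slack $u - c$ then shows (iii) can be met: when $u - c \ge u[0]$ the smaller choice $c[0] = r$ works; when $0 < u - c < u[0]$, picking the larger $c[0] = r + 2$ (which is available because $u[0] \ge 2$) suffices; and the boundary case $c = u$ is handled directly by setting $c[i] = u[i]$ for all $i$, noting that $u \equiv u[0] \pmod 2$. The only subtle subcase is $u[0] = 2, r = 1$, where only $c[0] = 1$ is admissible; but then $u$ is even while $c$ is odd, so $c \le u - 1$ and (iii) holds automatically. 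This completes the plan.
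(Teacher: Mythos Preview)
Your proposal is correct and follows essentially the same route as the paper: reduce the nontrivial inclusion to a per-item decomposition claim and prove that claim by induction on $\ell_j$, selecting the bottom digit $c[0]$ with the right parity and checking that the residual $(c',u')$ lies in the range covered by the inductive hypothesis. Your case split on the slack $u-c$ is organized a little differently from the paper's (which always takes $c[0]=r+2$ whenever $r\le u[0]-2$), and is in fact slightly more careful about small $c$; the one point you leave implicit is that $c'\ge 0$, which in Case~B follows from $u\ge u[0]+2$ (since $\ell\ge 1$), forcing $c>u-u[0]\ge 2$ and hence $c\ge r+2$.
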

Polak et al.~\cite{PRW21} derived a similar lemma for $\ell = 2$.



\subsubsection{Finalizing the $\widetilde{O}(nw_{\max})$-time algorithm}
Now we are ready to present our $\widetilde{O}(nw_{\max})$-time algorithm. 
By Observation~\ref{obs:subset-sum-u}, we have $u_j \leq 4w_{\max}$. 
Consequently, for $\ell=O(\log w_{\max})$, the given instance can be grouped as
$\cup_{i=0}^{\ell} 2^iX_i$. Note that $X_i$ contains at most $4n$ elements as there are $n$ item weights, and each item has  $u_j[i]< 4$ copies. 
Our goal is to determine whether $t\in {\cal S}(\cup_{i=0}^{\ell} 2^iX_i)$ for an arbitrary target value $t$. By Lemma~\ref{lemma:sum-union}, it suffices to determine whether $t\in \sum_{i=0}^{\ell}2^i{\cal S}(X_i)$. We first compute each ${\cal S}(X_i)$. This is equivalent to computing sumset $\sum_{e\in X}\{0,e\}$. According to Lemma~\ref{lemma:sumset}, this can be achieved in $\widetilde{O}(\sigma_1)$ time, where $\sigma_1\leq 4nw_{\max}$. Next,  we apply Lemma~\ref{lemma:k-sum} to determine whether $t\in \sum_{i=0}^{\ell}2^i{\cal S}(X_i)$, which takes $\widetilde{O}(\sigma \ell)$ time, where $\sigma=\sum_{i=0}^{\ell} \texttt{m}_{{\cal S}(X_i)}$, where $\texttt{m}_{{\cal S}(X_i)}$ refers to the largest integer in ${\cal S}(X_i)$, which is bounded by $4nw_{\max}$. Consequently, the overall running time is $\widetilde{O}(\ell^2nw_{\max})=\widetilde{O}(nw_{\max})$. 

\smallskip
\begin{Remark}
We do not require that in the input instance, item weights are distinct.
If, however, there are items of the same weight, we can simply merge them and update $u_j$'s. After preprocessing, we apply our algorithm. Hence, the running time of our algorithm can also be bounded by $\widetilde{O}(n+|W|w_{\max})$ where $W$ stands for the set of distinct weights.
\end{Remark}


\subsection{An $\widetilde{O}(n+w_{\max}^{3/2})$-time Algorithm for Subset Sum}
We present an alternative algorithm for Bounded Subset Sum of running time $\widetilde{O}(n+w_{\max}^{3/2})$. We start with the following observation.  Recall that $W$ is the set of all distinct weights.

\begin{Observation}
    We may assume without loss of generality that $|W|\geq 4c_A w^{1/2}_{\max}\log w_{\max}$, and $2c_A w^{3/2}_{\max}\log w_{\max}\leq t\leq \Sigma_{\cal I}/2$.
\end{Observation}
\begin{proof}
    If $|W|\leq 4c_A w^{1/2}_{\max}\log w_{\max}$, our $\widetilde{O}(n+|W|w_{\max})$ algorithm in the previous subsection already runs in $\widetilde{O}(n+w_{\max}^{3/2})$ time. It remains to consider the case when $|W| \geq 4c_A w^{1/2}_{\max}\log w_{\max}$.

    If $t\geq \Sigma_{\cal I}/2$, we let $t' = \Sigma_{\cal I} - t$.  It is straightforward that the instance $({\cal I}, t)$ is equivalent to the instance $({\cal I}, t')$ and that $t' \leq \Sigma_{\cal I}/2$. If $t < 2c_A w^{3/2}_{\max}\log w_{\max}$, the $\widetilde{O}(n + t)$-time algorithm already runs in $\widetilde{O}(n + w^{3/2}_{\max})$ time. 
\end{proof}


We will use our technique developed for Knapsack to deal with Subset Sum. 
Like Knapsack, this whole subsection is divided into two parts: the structural part and the algorithm part. In the structural part, we will conceptually take the given instance as a special Knapsack instance, and transform it into a 0-1 Subset Sum instance. This allows us to carry over the proximity results from Knapsack. In the algorithm part, we take the original instance and show that it is possible to leverage the proximity results to design our algorithm without involving the instance transformation. 

\subsubsection{The structural part}
In this part, we conceptually view Bounded Subset Sum as a 0-1 Knapsack.
Recall that our Knapsack algorithm starts with a greedy solution. For Subset Sum, every item has the same efficiency, which means we may construct a greedy solution with respect to any order of items. For technical reasons, however, we need to start with a special greedy solution $\veg$. Towards this, we will first define item sets ${\cal I}_2$ and ${\cal I}_3$, use them to define $\veg$, and then use $\veg$ to further define item sets ${\cal I}_1$ and ${\cal I}_4$. These four sets will be treated in a similar way as we treat ${\cal I}_1$ to ${\cal I}_4$ in Knapsack. 

Let $W$ be the set of all item weights. Let
\begin{itemize}
    \item $W_2$ be the set of the largest $2c_A w^{1/2}_{\max} \log w_{\max}$ weights in $W$;
    \item $W_3$ be the set of the smallest $2c_A w^{1/2}_{\max} \log w_{\max}$ weights in $W$.
\end{itemize}
Clearly, $W_2\cap W_3 = \emptyset$. For each $w$ in $W_2$, we put an arbitrary item of weight $w$ into ${\cal I}_2$.  For each $w$ in $W_3$, we put an arbitrary item of weight $w$ into ${\cal I}_3$. The following statements hold.
\begin{enumerate}[label = {\normalfont (\roman*)}]
    \item the total weight of items in ${\cal I}_2$ is at most $2c_A w^{3/2}_{\max}\log w_{\max} \leq t$.

    \item the total weight of items not in ${\cal I}_3$ is at least $\Sigma_{\cal I}/2 \geq t$.
\end{enumerate}
Let $\veg$ be an arbitrary maximal solution such that $\veg$ selects every item in ${\cal I}_2$ but no item in ${\cal I}_3$. In the algorithm part, we shall construct such a solution $\veg$. 

Assuming $\veg$, let ${\cal I}_1 = \veg({\cal I}) \setminus {\cal I}_2$, and let ${\cal I}_4 = \overline{\veg}({\cal I}) \setminus {\cal I}_3$. We relabel the items so that for any $i < j$, every item in ${\cal I}_i$ has smaller index than any item in ${\cal I}_j$.

Recall that Subset Sum is a decision problem. Since we treat it as a special Knapsack problem,  we define its optimal solution as the solution that returns a subset-sum $t^*\leq t$ and is closest to $t$. In particular, if the answer to Subset Sum is ``yes'', then $t^*=t$. Let $\vez$ be the optimal solution with the largest lexicographical order. That is, $\vez$ always prefers items with smaller indices.
We have the following fine-grained proximity between $\veg$ and $\vez$.

\begin{lemma}\label{lemma:subset-proximity}
    $\Delta({\cal I}) = \sum_{i \in {\cal I} }w_i|g_i - z_i| \leq (4c_A + 2c_B)w^{3/2}_{\max}\log w_{\max}$.
\end{lemma}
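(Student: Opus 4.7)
The plan is to adapt the case analysis in the proof of Lemma~\ref{lem:2.5-parition-struct} to the Subset Sum setting, replacing the efficiency-based exchange argument there by a lexicographic one. First I decompose
\[
    \Delta({\cal I}) = \Delta({\cal I}_1) + \Delta({\cal I}_2) + \Delta({\cal I}_3) + \Delta({\cal I}_4).
\]
Each of ${\cal I}_2$ and ${\cal I}_3$ contains exactly $2c_A w^{1/2}_{\max}\log w_{\max}$ items of distinct weights, and since $|g_i - z_i|\le 1$ in the 0-1 view, the trivial bound immediately gives $\Delta({\cal I}_2), \Delta({\cal I}_3) \le 2c_A w^{3/2}_{\max}\log w_{\max}$. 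It thus remains to prove $\Delta({\cal I}_1), \Delta({\cal I}_4) = O(c_B w^{3/2}_{\max})$.

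To bound $\Delta({\cal I}_1)$, I would assume $\Delta({\cal I}_1) > c_B w^{3/2}_{\max}$ for contradiction. All of ${\cal I}_2$ is selected by $\veg$, so either $\vez({\cal I}_2)$ or $\overline{\vez}({\cal I}_2)$ contains $\ge c_A w^{1/2}_{\max}\log w_{\max}$ distinct weights. In the first case I invoke Lemma~\ref{lem:hit-a} with the dense set $\vez({\cal I}_2)$ and the large-sum set $\overline{\vez}({\cal I}_1)$ (whose total weight is exactly $\Delta({\cal I}_1)$), and perform the resulting swap: the new solution $\hat{\vez}$ has the same total weight as $\vez$ but replaces some ${\cal I}_2$-items by ${\cal I}_1$-items, so it is strictly lex-larger than $\vez$, contradicting the choice of $\vez$. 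In the second case, optimality of $\vez$ gives $\sum_{i\in \vez\setminus \veg} w_i \ge \sum_{i\in \veg\setminus \vez} w_i \ge \Delta({\cal I}_1) > c_B w^{3/2}_{\max}$, and $\vez\setminus \veg \subseteq {\cal I}_3\cup{\cal I}_4$; applying Lemma~\ref{lem:hit-a} with dense set $\overline{\vez}({\cal I}_2)$ and large-sum set $\vez\setminus \veg$ and swapping again produces a strictly lex-larger optimal solution.

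The bound on $\Delta({\cal I}_4)$ is obtained by a mirror analysis with ${\cal I}_3$ playing the role of ${\cal I}_2$. Assuming $\Delta({\cal I}_4) = \sum_{i\in \vez({\cal I}_4)} w_i > c_B w^{3/2}_{\max}$, I split on whether $\overline{\vez}({\cal I}_3)$ or $\vez({\cal I}_3)$ is the dense half. In the first subcase, Lemma~\ref{lem:hit-a} applied to $(\overline{\vez}({\cal I}_3), \vez({\cal I}_4))$ provides a swap replacing ${\cal I}_4$-items in $\vez$ by ${\cal I}_3$-items, which is lex-larger. In the second subcase I use maximality of $\veg$ together with optimality of $\vez$ to bound $0 \le \sum_{i\in \vez}w_i - \sum_{i\in \veg}w_i < w_{\max}$, hence $\sum_{i\in \veg\setminus\vez} w_i > \sum_{i\in \vez\setminus\veg} w_i - w_{\max} \ge \Delta({\cal I}_4) - w_{\max} = \Omega(c_B w^{3/2}_{\max})$. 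Then Lemma~\ref{lem:hit-a} applied to the dense set $\vez({\cal I}_3)$ and the large-sum set $\veg\setminus\vez \subseteq \overline{\vez}({\cal I}_1\cup{\cal I}_2)$ yields a swap that removes ${\cal I}_3$-items from $\vez$ and inserts items of ${\cal I}_1\cup{\cal I}_2$, again lex-larger.

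The main obstacle I anticipate is enforcing the lex-improvement direction in every constructed swap. In Knapsack the analogous proof relies on an efficiency comparison between the incoming and outgoing items, but in Subset Sum all items have identical efficiency, so the direction of each swap is controlled solely by the index ordering ${\cal I}_1 < {\cal I}_2 < {\cal I}_3 < {\cal I}_4$. This is precisely why $\vez$ must be chosen as the lex-largest optimal solution and why the items are relabeled so that $\veg$ selects a prefix containing all of ${\cal I}_2$ and none of ${\cal I}_3$; every swap I construct then brings in items of strictly smaller index than those it removes. A minor technicality, appearing in the second subcase of the ${\cal I}_4$ bound, is absorbing the $w_{\max}$ additive slack between $\sum w_i g_i$ and $\sum w_i z_i$; this can be handled by enlarging $c_B$ by an additive constant and affects only the hidden constant in the final bound.
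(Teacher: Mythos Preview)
Your proposal is correct and follows essentially the same approach as the paper's proof: decompose $\Delta({\cal I})$ over ${\cal I}_1,\ldots,{\cal I}_4$, bound $\Delta({\cal I}_2)+\Delta({\cal I}_3)$ trivially by the total weight in those sets, and for $\Delta({\cal I}_1)$ (respectively $\Delta({\cal I}_4)$) run the two-case split of Lemma~\ref{lem:2.5-parition-struct} with Lemma~\ref{lem:hit-a}, concluding via a lexicographic-improvement swap rather than an efficiency-based one. The paper's own proof is terser (it invokes the argument of Lemma~\ref{lem:2.5-parition-struct} by reference and handles $\Delta({\cal I}_4)$ only by citing symmetry), but your expanded treatment, including the $w_{\max}$ slack in the ${\cal I}_4$ case, matches it exactly.
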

\begin{proof}
   Note that $\Delta({\cal I}) = \sum_{i = 1}^4 \Delta({\cal I}_i)$. It is obvious that $\Delta({\cal I}_2) + \Delta({\cal I}_3)$ is at most $4c_A w^{3/2}_{\max}\log w_{\max}$ since the total weight of items in ${\cal I}_2 \cup {\cal I}_3$ is bounded by this amount.
    Next we show that $\Delta({\cal I}_1) \leq c_Bw_{\max}^{3/2}$. $\Delta({\cal I}_4)$ can be proved similarly due to symmetry.

    Suppose, for the sake of contradiction, that $\Delta({\cal I}_1) > c_B w_{\max}^{3/2}$. As in the proof of Lemma~\ref{lem:2.5-parition-struct}, we can show that either there is a subset ${\cal D}$ of $\overline{\vez}({\cal I}_1)$ and a subset ${\cal Z}$ of $\vez({\cal I}_2)$ such that ${\cal D}$ and ${\cal Z}$ have the same total weight, or there is a subset ${\cal D}$ of $\overline{\vez}({\cal I}_2)$ and a subset ${\cal A}$ of $\vez({\cal I}_3 \cup {\cal I}_4)$ such that ${\cal D}$ and ${\cal A}$ have the same total weight. Note that in Subset Sum, every item has the same efficiency of $1$. In the former case, $\hat{\vez} = (\vez\setminus {\cal Z}) \cup {\cal D}$ is an optimal solution whose lexicographical order is larger than $\vez$. In the latter case, $\hat{\vez} = (\vez\setminus {\cal A}) \cup {\cal D}$ is an optimal solution whose lexicographical order is larger than $\vez$. Both contradict our choice of $\vez$.
\end{proof}

\subsubsection{The algorithmic part}
Note that the input consists of $n$ pairs $(w_i,u_i)$, meaning that item $i$ has a weight $w_i$ and has $u_i$ copies. 
Despite that $\veg$ is defined on the transformed 0-1 instance, as we can get and sort $W$ in $O(n\log n)$, constructing it can be done efficiently.  
\begin{lemma}
    $\veg$ can be computed in $\widetilde{O}(n)$ time.
\end{lemma}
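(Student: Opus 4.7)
The plan is to compute $\veg$ by a sort-then-greedy-scan procedure whose running time is dominated by a single sort on distinct weights. The three properties demanded of $\veg$ by the structural part are: (a) it selects every item of ${\cal I}_2$, (b) it selects no item of ${\cal I}_3$, and (c) it is maximal. At the item-copy level, this just says that for every $w\in W_2$ at least one copy of weight $w$ is chosen, for every $w\in W_3$ at least one copy is left out, and no further copy of any weight can be added without exceeding $t$.

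First I would sort the $n$ input pairs $(w_i,u_i)$ by weight in $O(n\log n)$ time and collapse items of equal weight into a single bucket carrying the aggregate count $c_w=\sum_{i:w_i=w}u_i$. Given $W$ in sorted order, $W_2$ and $W_3$ are the top and bottom $2c_A w_{\max}^{1/2}\log w_{\max}$ entries, read off in $\widetilde O(w_{\max}^{1/2})$ additional time. For each $w\in W_2$ I would earmark one mandatory copy (the image of ${\cal I}_2$ at weight $w$), and for each $w\in W_3$ I would earmark one forbidden copy (the image of ${\cal I}_3$ at weight $w$); using $W_2\cap W_3=\emptyset$, this leaves a ``free pool'' of size $a_w\in\{c_w,c_w-1\}$ of copies of each weight.

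Next I would initialize $\veg$ with the $|W_2|$ mandatory copies, whose total weight is at most $t$ by condition~(i) of the structural part, and then scan the distinct weights in ascending order: at weight $w$ with current running total $s$, add $\min\!\bigl(a_w,\lfloor(t-s)/w\rfloor\bigr)$ copies and update $s$ accordingly. This costs $O(1)$ per distinct weight, hence $O(n)$ overall. Properties (a) and (b) hold by construction. For (c), if after the pass some copy of weight $w'$ could still be added, i.e.\ $w'\le t-s$, then because weights were processed in ascending order the greedy rule would already have included at least one further copy when weight $w'$ was visited, contradicting the update. Summing the three stages gives the claimed $\widetilde O(n)$ bound. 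There is no substantive obstacle; the only point that merits a short argument is that a single scan produces a \emph{maximal} solution rather than merely a feasible one, and this is precisely why the scan is carried out in ascending weight order.
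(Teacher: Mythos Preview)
Your approach matches what the paper intends: the paper gives no proof for this lemma beyond the preceding one-sentence remark that $W$ can be sorted in $O(n\log n)$ time and $\veg$ then constructed efficiently. Your sort--initialize--ascending-fill procedure is the natural instantiation of that hint, and the running-time analysis is fine.

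There is, however, a gap in your argument for property~(c). You claim that if a copy of weight $w'$ still fits at the end then the greedy would have added ``at least one further copy'' when visiting $w'$; but the greedy draws only from the \emph{free} pool of size $a_{w'}$. For $w'\in W_3$ strictly smaller than the first weight $w^*$ at which the free pool is not exhausted, the greedy takes all $a_{w'}=c_{w'}-1$ free copies while the residual is still at least $w'$, and after the full pass the residual is below $w^*$ but may well be at least $w'$---in which case the single \emph{forbidden} copy at $w'$ still fits and $\veg$ is not maximal. In fact a $\veg$ that is maximal in the strict sense and yet avoids ${\cal I}_3$ need not exist at all (if $c_{w_{\min}}=1$, excluding the unique item of minimum weight forces the residual to be~$0$, i.e.\ an exact hit, which may be unattainable), so this is an imprecision shared by the paper. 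The painless fix is to note that the only downstream use of maximality---inside the proximity argument of Lemma~\ref{lemma:subset-proximity}---is the bound $\bigl|\sum_i w_i g_i - \sum_i w_i z_i\bigr|\le w_{\max}$, and your construction does guarantee $t-\sum_i w_i g_i < w^*\le w_{\max}$, which is all that argument needs.
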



\begin{lemma}\label{lemma:dense-algo}
  Assuming $\veg$,  with $1-o(1)$ probability we can determine whether there exists $\vez\in\mathbb{Z}^n$, $0\leq z_i\leq u_i$ such that $\sum_{i\in \mathcal{I}}w_i z_i=t$ in $\widetilde{O}(w^{3/2}_{\max})$ time.
\end{lemma}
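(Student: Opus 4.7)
The plan is to exploit the proximity bound $\Delta(\mathcal{I}) \leq T$, where $T := O(w_{\max}^{3/2}\log w_{\max})$ by Lemma~\ref{lemma:subset-proximity}, and reduce the decision to two bounded subset sum subproblems with target $T$, each dispatched by Bringmann's $\widetilde{O}(n+t)$-time randomized algorithm \cite{Bri17}. I assume that items sharing a common weight have been merged and their $u_i$'s combined in the preprocessing that produced $\veg$, so that the ``effective'' number of items is at most $|W|\leq w_{\max}$.

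The first step is to rewrite the target. Compute $S_g := \sum_i w_i g_i$ and set $\Delta_t := t - S_g$. Any feasible $\vez$ with $\sum_i w_i z_i = t$ can be written uniquely as $\vez = \veg + \vex^+ - \vex^-$ with $0\le x_i^+ \le u_i - g_i$ and $0\le x_i^- \le g_i$, and the weight equation becomes
\[
    \sum_i w_i x_i^+ \;-\; \sum_i w_i x_i^- \;=\; \Delta_t.
\]
By Lemma~\ref{lemma:subset-proximity}, whenever such a $\vez$ exists we have $\sum_i w_i x_i^+ \le T$ and $\sum_i w_i x_i^- \le T$; in particular, if $|\Delta_t| > T$ we immediately output ``no''.

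The second step is to compute the achievable sum sets
\[
    A := \Bigl\{\textstyle\sum_i w_i x_i^+ \;:\; 0 \le x_i^+ \le u_i - g_i\Bigr\} \cap [0,T], \qquad
    B := \Bigl\{\textstyle\sum_i w_i x_i^- \;:\; 0 \le x_i^- \le g_i\Bigr\} \cap [0,T],
\]
each being a bounded subset sum problem truncated at $T$. For each, apply binary bundling to every weight $w_i$: replace the multiplicity bound (either $u_i - g_i$ or $g_i$, both $\le u_i \le 4w_{\max}$) by a collection of $O(\log w_{\max})$ 0-1 items of weights $w_i, 2w_i, 4w_i, \ldots$ whose copy counts form a binary decomposition of the bound, and discard any bundle of weight larger than $T$. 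This discarding is lossless, because no sum bounded by $T$ can use a bundle exceeding $T$. The result is a 0-1 subset sum instance with $O(|W|\log w_{\max})$ items and target $T$; running Bringmann's algorithm returns the full characteristic vector of $A$ (resp.\ $B$) in $\widetilde{O}(|W| + T) = \widetilde{O}(w_{\max}^{3/2})$ time with high probability.

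The final step is to decide whether $\Delta_t \in A - B$. Using the characteristic vectors, iterate over every $b$ with $b \in B$ and look up whether $b + \Delta_t \in A$, in $O(T)$ additional time. A union bound over the two invocations of Bringmann's algorithm yields a success probability of $1-o(1)$, and the total running time is $\widetilde{O}(w_{\max}^{3/2})$. The main subtlety is verifying that both the restriction of $A, B$ to $[0,T]$ and the truncation of oversized bundles are lossless with respect to the original decision problem; both reductions are justified by the proximity bound $\Delta(\mathcal{I}) \leq T$ of Lemma~\ref{lemma:subset-proximity}, and the remaining steps are routine applications of binary encoding and FFT-based subset sum.
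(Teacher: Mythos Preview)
Your proof is correct and follows essentially the same approach as the paper: split the adjustment from $\veg$ into an ``add'' part and a ``delete'' part, invoke the proximity bound of Lemma~\ref{lemma:subset-proximity} to cap each part at $T=\widetilde{O}(w_{\max}^{3/2})$, compute the two truncated subset-sum sets via Bringmann's algorithm, and then test membership of $\Delta_t$ in their difference. The only cosmetic differences are that the paper applies Bringmann's $\widetilde{O}(n+t)$ algorithm directly to the bounded instances $G^+,G^-$ (rather than first doing an explicit binary bundling to 0-1 items) and uses a single FFT to form the difference set (rather than a linear scan with lookups); neither changes the argument or the running time.
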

\begin{proof}
From $\veg$ we define two sets as follows: $G^-=\{(w_i,u_i^-):u_i^-=g_i\}$, $G^+=\{(w_i,u_i^+):u_i^+=u_i-g_i\}$. Intuitively, when changing $\veg$ to $\vez$, $G^-$ and $G^+$ represent copies that may need to be deleted and added, respectively. By Lemma~\ref{lemma:subset-proximity}, the total weight of item copies to be deleted in $G^-$ is at most $\widetilde{O}(w^{3/2}_{\max})$, and the total weight of copies to be added from $G^+$ is also at most $\widetilde{O}(w^{3/2}_{\max})$.

Let $t'=(4c_A + 2c_B)w_{\max}^{3/2}\log w_{\max}$ and $t_0=\sum_{i\in \mathcal{I}}w_i g_i$. Define $\mathcal{S}(G^+,t')=\{y\leq t':y=\sum_{i=1}^nw_ix_i,0\leq x_i\leq u_i^+\}$ and $\mathcal{S}(G^-,t')=\{y\leq t':y=\sum_{i=1}^nw_ix_i,0\leq x_i\leq u_i^-\}$.

We first compute $\mathcal{S}(G^+,t')$ and $\mathcal{S}(G^-,t')$ using Bringmann's algorithm \cite{Bri17}. Note that it is a randomized algorithm. We will compute $S^+\subseteq \mathcal{S}(G^+,t')$ and $S^-\subseteq \mathcal{S}(G^-,t')$ in $\widetilde{O}(t')=\widetilde{O}(w_{\max}^{3/2})$ time such that every number in $\mathcal{S}(G^-,t')$ and $\mathcal{S}(G^+,t')$ is contained in $S^-$ and $S^+$ with probability $1-(n+t')^{-\Omega(1)}$, respectively. Therefore, with probability $1-o(1)$ we get $\mathcal{S}(G^-,t')$ and $\mathcal{S}(G^+,t')$. 
Next, we compute $\mathcal{S}(G^+,t')-\mathcal{S}(G^-,t')$ via FFT in $\widetilde{O}(t')=\widetilde{O}(w_{\max}^{3/2})$ time. Finally, we test whether $t\in t_0+\mathcal{S}(G^+,t')-\mathcal{S}(G^-,t')$, which takes $\widetilde{O}(t')=\widetilde{O}(w_{\max}^{3/2})$ time. Thus, Lemma~\ref{lemma:dense-algo} is proved. \end{proof}

\section*{Remark}
We note that very recently, Bringmann~\cite{bringmann2023knapsack} and Jin~\cite{jin202301} independently improved the running time to $\widetilde{O}(n+w^2_{\max})$ by generalizing our technique.

\newpage
\appendix

\section{Proof of Lemma~\ref{lem:conv-by-group}}\label{Appendix proof lem:conv-by-group}
\lemconvbygroup*
\begin{proof}
Without loss of generality, assume that $U_1\leq ...\leq U_k$.
We construct an optimal solution $\vex$ from $\veg$, which can be interpreted as $\vex = \veg-\vex^-+\vex^+$. Here $\vex^-$ stands for the copies in $\veg$ but not in $\vex$, which must be copies in $\mathcal{I}_{< b}$. Likewise,  $\vex^+$ stands for copies in $\vex$ but not in $\veg$, which must be copies in $\mathcal{I}_{\geq b}$. It is obvious that $\Delta(\mathcal{I}_{< b}) = \sum_{j=1}^k \Delta(\mathcal{I}_{< b}^{W_j})\leq k\cdot  U_k$ and $\Delta(\mathcal{I}_{\geq b}) \leq k\cdot U_k $. 

From now on we shall use the bold symbol $\boldsymbol{x}$ to represent a sequence. We shall compute 3 sequences:
\begin{itemize}
    \item $\boldsymbol{x}^{-} = \langle x^{-}_0,...,x^{-}_{k\cdot U_k}\rangle$, where $x^{-}_{t'}$ is the minimum total profit of copies in $\mathcal{I}_{< b}$ whose total weight is exactly $t'$;
    \item $\boldsymbol{x}^{+} = \langle x^{+}_{ 0},...,x^{+}_{ k\cdot U_k} \rangle$, where $x^{+}_{t'}$ is the maximum total profit of copies in $\mathcal{I}_{\geq b}$ whose total weight is exactly $t'$.
    \item $\boldsymbol{x}^+_{\leq} = \langle x^{+}_{\leq 0},...,x^{+}_{\leq k\cdot U_k} \rangle$, where $x^+_{\leq t'}$ means the maximum value we can obtain by choosing copies from $\mathcal{I}_{\geq b}$ whose total weight does not exceed $t'$. To simply the writing up, we let $x^+_{\leq k\cdot U_k+h}=x^+_{\leq k\cdot U_k}$ for all $1\leq h\leq \Delta$, where $\Delta = t-\sum_{i\in{\mathcal{I}}}w_i g_i$.
\end{itemize}

Suppose we have computed the above three sequences, then Knapsack can be solved as follows: For every $t'\in\{0,...,k\cdot  U_k\}$, we compute $y_{t'} = \sum_{i\in{\mathcal{I}}}v_i g_i-x^-_{t'}+x^+_{\leq(t'+ \Delta)}$, and select $\max_{t'}y_{t'}$.

It remains to compute the three sequences. We show that, $\boldsymbol{x}^{+} = \langle x^{+}_{ 0},...,x^{+}_{ k\cdot U_k} \rangle$ can be computed in $ \widetilde{O}(n+k\sum_{j=1}^{k}|W_j|U_j)$ time. The computation of $\boldsymbol{x}^{-}$ is similar. Moreover, once we have computed $\boldsymbol{x}^{+}$, $\boldsymbol{x}^{+}_{\leq}$ can be computed in a straightforward way using that $x^+_{\leq t'}=\max_{1\leq h\leq t'}x^+_h$.


Now we describe our algorithm for computing $\boldsymbol{x}^{+} = \langle x^{+}_{ 0},...,x^{+}_{ k\cdot U_k} \rangle$.  Towards this, we use the fast structured $(\min, +)$-convolution as a basic operation.
The $(\max, +)$-convolution $\boldsymbol{a}\oplus \boldsymbol{b}$ between two sequence $\boldsymbol{a}$ and $\boldsymbol{b}$ is a sequence $c_0,..., c_{n+m}$ such that for any $i$,
\[
c_i = \max_{0\leq j\leq i}(a_j +b_{i-j}).
\]
Let $\boldsymbol{s}^w=\langle s^{w}_0,...,s^{w}_{U_j}\rangle$, where $s^w_{t'}$ denotes the optimal objective value of copies in $\mathcal{I}_{\geq b}^{w}$ whose total weight is exactly $t'$. Note that copies in $\mathcal{I}_{\geq b}^{w}$ have the same weight $w$. Hence, $s^w_{t'}$ equals the total profit of the $i$ most valuable copies in $\mathcal{I}_{\geq b}^{w}$ when $t' = i\cdot w$ for some $i$, or equals $-\infty$ otherwise. 

Given $\boldsymbol{s}^w$, $\boldsymbol{x}^{+}$ can be computed by iteratively convolving $\boldsymbol{s}^w$, see Algorithm~\ref{alg:lemma3}. Due to the special structure of $\boldsymbol{s}^w$, a single convolution step $\boldsymbol{x}^+ \oplus \boldsymbol{s}^w$ can be performed in linear time by SMAWK algorithm \cite{AKM+87} (also see Lemma 2.2 in \cite{PRW21}).
As $\Delta(\mathcal{I}_{\geq b}^{W_1\cup ...\cup W_{k'}}) \leq \sum_{j=1}^{k'} U_j \leq k'\cdot U_{k'}$, we can truncate $\boldsymbol{x}^+$ after the $k'\cdot U_{k'}$-th entry when we compute the weights in $W_{k'}$. We can compute $ \boldsymbol{x}^{+}$ in time:
$
\widetilde{O}(n+\sum_{j=1}^{k} j\cdot |W_j|\cdot U_j) \leq \widetilde{O}(n+k\sum_{j=1}^{k} |W_j| U_j).
$
\end{proof}
\begin{algorithm}[H]
\caption{The Algorithm to Compute $\boldsymbol{x}^+$ }
\label{alg:lemma3}
    \begin{algorithmic}[1]
    \State sort $W_1,...,W_k$ that $U_1\leq ...\leq U_k$\;
    \State $\boldsymbol{x}^+:=  $  empty sequence\;
    \For{$j= 1,...,k$ }
    \For{$w\in W_j$}
        \State $\boldsymbol{s}^{w} := $ the sequence of maximum value get from $\mathcal{I}_{\geq b}^{w}$\;
        \State $\boldsymbol{x}^+ :=  \boldsymbol{x}^+ \oplus \boldsymbol{s}^{w}$ \Comment{using SMAWK algorithm}\;
        \State Truncate $\boldsymbol{x}^+$ after the $j\cdot U_j$-th entry\;
    \EndFor
  \EndFor
  \State \Return $\boldsymbol{x}^+$\;
  \end{algorithmic}
\end{algorithm}

\section{Proof of Lemma~\ref{lem:hit-b}}\label{Appendix proof lem:hit-b}
\lemhitb*
\begin{proof}
Let $S$ a maximal subset of $A$ such that each integer has multiplicity $w^{1/5}$ in $S$. Note that $|\texttt{supp}_S| \geq c_A w^{2/5} \log^2 w$. What we actually prove is that there is a non-empty subset $S'$ of $S$ such that some subset $B'$ of $B$ have the same total weight as $S$.
We characterize the set of integers that can be hit by $S$. Then we show that $B$ can hit at least one of these integers. 

    Since the items in $S$ has the same multiplicity, we obtain the following two equalities that will be frequently used in the proof.
    \begin{align*}
        &\mu_S\Sigma_{\texttt{supp}_S} = \Sigma_S,\\
        &\mu_S |\texttt{supp}_S| = |S|.
        \label{eq:A-vs-supp}
    \end{align*}
    Note that $\texttt{m}_S \leq w$, $\mu_S = w^{1/5}$, $|\texttt{supp}_S| = |\texttt{supp}_A|\geq c_A w^{2/5} \log^2 w$.  Since $c_A$ is sufficiently large, we have that
    \[
        |S|^2 = \mu^2_S|\texttt{supp}_S|^2 \geq \mu_S^2 \cdot c^2_A w^{4/5} \log^4 w \geq c^2_A\log^4w\cdot \texttt{m}_S\mu_S.
    \]
    The last inequality is due to $\mu_A\geq w^{1/5}$.
    Since $|\texttt{supp}_A| \leq w$ and $\mu_A \leq w$,
    \begin{align*}
        c_\delta &= \Theta\left( \log(2|S|)\log^2(2\mu_S)\right)\\
            &= \Theta\left( \log(\mu_S|\texttt{supp}_S|)\log^2(\mu_S)\right)\\
            & = \Theta\left( \log(w^2)\log^2(w)\right)\\
            & = \Theta(\log^3 w).
    \end{align*}
    When $c_A$ is sufficiently large, $c^2_S\log^4w \geq c_\delta$, so $S$ is $c_\delta$-dense.

    By Theorem~\ref{thm:divisor}, there exists an integer $d$ such that  $S':= S(d)/d$ is $c_{\delta}$-dense and has no $c_\alpha$-almost divisor. And the followings hold.
    \begin{enumerate}[label = {\normalfont (\roman*)}]
        \item $d \leq 4\mu_S\Sigma_S/|S|^2$.
        \item $|S'| \geq 0.75 |S|$.
        \item $\Sigma_{S'}\geq 0.75\Sigma_S/d$.
    \end{enumerate}
    Note that $\mu_{S'} \leq \mu_S$, $\texttt{m}_{S'} \leq w/d$, and $\Sigma_{S'} \leq \Sigma_S / d$. Applying Theorem~\ref{thm:hitrange} on $S'$, we get $S'$ can hit any integer in $[\lambda_{S'}, \Sigma_{S'} - \lambda_{S'}]$ where 
    \[
        \lambda_{S'} = \frac{c_{\lambda} \mu_{S'}\texttt{m}_{S'}\Sigma_{S'}}{|S'|^2} \leq \frac{c_{\lambda} \mu_S w \Sigma_S}{(0.75|S|)^2 d^2} \leq \frac{\min\{c_A, c_B\}}{2} \cdot\frac{w}{d^2} \cdot \frac{\mu_S \Sigma_S}{|S|^2}.
    \]
    The last inequality holds since $c_A$ and $c_B$ are sufficiently large constants. We can conclude that $S$ can hit any multiple of $d$ in $[d\lambda_{S'}, d(\Sigma_{S'} - \lambda_{S'})]$. We also have that the left endpoint of this interval
    \[
        d\lambda_{S'} \leq \frac{c_B}{2} \cdot \frac{w}{d} \cdot \frac{\mu_S\Sigma_S}{|S|^2} = \frac{c_B}{2} \cdot \frac{w}{d} \cdot \frac{\Sigma_{\texttt{supp}_S}}{|\texttt{supp}_S|^2} \leq \frac{c_B}{2} \cdot  \frac{w^2}{|\texttt{supp}_S|}\leq \frac{c_B}{2} \cdot w^{8/5},
    \]
    and that the length of the interval 
    \begin{align*} 
        d (\Sigma_{S'} - 2\lambda_{S'}) 
        \geq &\frac{3\Sigma_S}{4} - c_A\cdot \frac{w \mu_S\Sigma_S}{d|S|^2}\\
        = &\frac{\mu_S\Sigma_S}{|S|^2}(\frac{3|S|^2}{4\mu_S} - c_A\cdot \frac{w}{d}) & &(\text{since $|S| = \mu_S|\texttt{supp}_S|$ and $d \geq 1$})\\
        \geq &\frac{\mu_S\Sigma_S}{|S|^2}(\frac{3|\texttt{supp}_S|^2\mu_S^2}{4\mu_S} - c_A\cdot w) & &(\text{since } |\texttt{supp}_S| \geq c_Aw^{2/5}\log^2 w\text{ and } \mu_S = w^{1/5})\\
        \geq &\frac{\mu_S\Sigma_S}{|S|^2}(\frac{3c^2_Aw}{4} - c_Aw) & &(\text{since $c_A$ is sufficiently large})\\
        \geq &4\cdot \frac{\mu_S\Sigma_S}{|S|^2}\cdot w & &(\text{since } d \leq 4\mu_S\Sigma_S/|S|^2)\\
        \geq &dw.
    \end{align*}

    To complete the proof, it suffices to show that there is a subset $B'$ of $B$ whose sum is a multiple of $d$ and is within the interval $[d\lambda_{S'}, d(\Sigma_{S'} - \lambda_{S'})]$. We claim that as long as $B$ has at least $d$ numbers, there must be a non-empty subset of $B$ whose sum is at most $dw$ and is a multiple of $d$. Assume the claim is true. We can repeatedly extract such subsets from $B$ until $B$ has less than $d$ numbers. Note that the total sum of these subsets is at least
    \[
        \Sigma_B - wd \geq c_B w^{8/5} - w\cdot \frac{4\mu_S\Sigma_S}{|S|^2}\geq c_B w^{8/5} -  \frac{16w\Sigma_{\texttt{supp}_S}}{|\texttt{supp}_S|^2} \geq c_B w^{8/5} - 16w^{8/5} \geq \frac{c_B}{2}w^{8/5}.
    \]
    That is, the total sum of these subset is at least the left endpoint of $[d\lambda_{S'}, d(\Sigma_{S'} - \lambda_{S'})]$. Also note that the sum of each subset is at most $dw$, which does not exceed the length of the interval. As a result, there must be a collection of subsets whose total sum is within the interval.  Since the sum of each subset is a multiple of $d$, so is the any collection of these subset.
\end{proof}

\section{Proof of Lemma~\ref{lem:2.4-partition-b}}
\lempartitionb*
\begin{proof}
Recall that we label the items in decreasing order of efficiency. Without loss of generality, we assume that for any two items $i < j < b$, if $i$ and $j$ have the same efficiency, then $w_i \leq w_j$, and that for any two items $b < i < j$, if $i$ and $j$ have the same efficiency, then $w_i \geq w_j$.  For a set ${\cal I}'$ of items, we say a weight $w$ is \emph{frequent} in ${\cal I}'$ if ${\cal I}'$ contains at least $2w^{1/5}_{\max}$ items with weight $w$.

\paragraph{Defining $W^+$ via Partition of ${\cal I}^{W^*}$.} $W^+$ will be defined via a partition of ${\cal I}^{W^*}$. We partition them into four subsets $({\cal I}_1, {\cal I}_2, {\cal I}_3, {\cal I}_4)$ as follows. For $i < b$, let ${\cal I}_{[i, b)}$ be the set of items $\{i ,\ldots, b-1\}$ whose weight is in $W^*$. Let $i^*$ be the minimum index $i$ such that exactly $2c_Aw^{2/5}_{\max}\log^2 w_{\max}$ weights are frequent in ${\cal I}_{[i, b)}$. Let ${\cal I}_2 = {\cal I}_{[i^*, b)}$, and let ${\cal I}_1 = {\cal I}_{< i^*}$. When no such $i^*$ exists, let ${\cal I}_2 = {\cal I}_{<b}$, and let ${\cal I}_1 = \emptyset$. ${\cal I}_3$ and ${\cal I}_4$ are defined similarly as follows. For any $j \geq b$, define ${\cal I}_{[b,j]}$ to be the set of items $\{b, \ldots, j\}$ whose weight is in $W^*$. Let $j^*$ be the maximum index $j$ such that exactly $2c_A w^{2/5}_{\max}\log^2 w_{\max}$ weights are frequent. Let ${\cal I}_3 = {\cal I}_{[b,j^*]}$, and let ${\cal I}_4 = {\cal I}_{> j^*}$. When no such $j^*$ exists, let ${\cal I}_3 = {\cal I}_{\geq b}$, and let ${\cal I}_4 = \emptyset$. 

$W^+$ is define to be the set of weights that are frequent in ${\cal I}_2$ or ${\cal I}_3$, and $\overline{W^+} = W^*\setminus W^+$.

\paragraph{Verifying Properties.} It is straightforward that $|W^+| \leq 4c_A w_{\max}^{2/5}\log^2 w_{\max}$. It is left to show $\Delta({\cal I}^{\overline{W^+}}) \leq 8c_Bw^{9/5}_{\max}$. We first partition ${\cal I}^{\overline{W^+}}$ into ${\cal I}^{\overline{W^+}}_k = {\cal I}^{\overline{W^+}} \cap {\cal I}_k$ for $k \in \{1,2,3,4\}$. Next we show $\Delta({\cal I}^{\overline{W^+}}_k) \leq 2c_B w^{9/5}_{\max}$ for $k \in \{1,2\}$. 

Consider ${\cal I}^{\overline{W^+}}_2$ first. For any $w \in \overline{W^+}$, $w$ is not frequent in ${\cal I}_2$. That is, $|{\cal I}_2 \cap {\cal I}^w| < w^{1/5}_{\max}$. We have
    \[
        \Delta({\cal I}^{\overline{W^+}}_2) \leq \sum_{i \in {\cal I}^{\overline{W^+}}_2} w_i = \sum_{w \in \overline{W^+}} w|{\cal I}_2 \cap {\cal I}^w| \leq |\overline{W^+}|w_{\max}w^{1/5}_{\max} = w^{9/5}_{\max}.
    \]

Now consider ${\cal I}^{\overline{W^+}}_1$. Suppose, for the sake of contradiction, that $\Delta({\cal I}^{\overline{W^+}}_1) > 2c_Bw^{9/5}_{\max} > 2c_Bw^{8/5}_{\max}$. That is, we delete a large volume of items in ${\cal I}^{\overline{W^+}}_1$. Clearly, ${\cal I}^{\overline{W^+}}_1$ is not empty, so there are exactly $2c_Aw^{2/5}_{\max}\log^2 w_{\max}$ weights that are frequent in ${\cal I}_2$. Recall that, by frequent, we mean $|{\cal I}^w \cap {\cal I}_2| \geq 2w^{1/5}_{\max}$. Then at least one of the following two cases are true.
    \begin{enumerate}[label={(\roman*)}]
        \item there are at least $c_Aw^{2/5}_{\max}\log^2 w_{\max}$ weights $w$ such that  $|{\cal I}^w \cap \vez({\cal I}_2)| \geq w^{1/5}_{\max}$.

        \item there are at least $c_Aw^{2/5}_{\max}\log^2 w_{\max}$ weights $w$ such that  $|{\cal I}^w \cap \overline{\vez}({\cal I}_2)| \geq w^{1/5}_{\max}$.
    \end{enumerate}

\paragraph{Case (i).} $\overline{\vez}({\cal I}^{\overline{W^+}}_1)$ is a set of deleted items whose total weight is $\Delta({\cal I}^{\overline{W^+}}_1) > 2c_Bw^{8/5}_{\max}$. $\vez({\cal I}_2)$ is a set of items remaining in $\vez$, and there are at least $c_Aw^{2/5}_{\max}\log^2 w_{\max}$ weights $w$ such that  $|{\cal I}^w \cap \vez({\cal I}_2)| \geq w^{1/5}_{\max}$. One can verify that the weight multisets of $\overline{\vez}({\cal I}^{\overline{W^+}}_1)$ and $\vez({\cal I}_2)$ satisfy the conditions of Lemma~\ref{lem:hit-b}. By Lemma~\ref{lem:hit-b}, there is a subset ${\cal D}$ of $\overline{\vez}({\cal I}^{\overline{W^+}}_1)$ and a subset ${\cal Z}$ of $\vez({\cal I}_2)$ such that ${\cal D}$ and ${\cal Z}$ have the same total weight. Then $\hat{\vez} = (\vez \setminus {\cal Z}) \cup {\cal D}$ would be a better solution than $\vez$. Contradiction.

\paragraph{Case (ii).} there are at least $c_Aw^{2/5}_{\max}\log^2 w_{\max}$ weights $w$ such that  $|{\cal I}^w \cap \overline{\vez}({\cal I}_2)| \geq w^{1/5}_{\max}$. $\Delta({\cal I}^{\overline{W^+}}_1) > 2c_Bw^{8/5}_{\max}$ implies that 
    \[
        \Delta({\cal I}_{\geq b}) \geq \Delta({\cal I}^{\overline{W^+}}_1) - w_{\max} \geq c_Bw^{8/5}_{\max}.
    \]
    Note that the total weight of $\vez({\cal I}_{\geq b})$ is exactly $\Delta({\cal I}_{\geq b})$. One can verify that the weight multisets of $\overline{\vez}({\cal I}_2)$ and $\vez({\cal I}_{\geq b})$ satisfy the condition of Lemma~\ref{lem:hit-b}. By Lemma~\ref{lem:hit-b}, there is a subset ${\cal D}$ of $\overline{\vez}({\cal I}_2)$ and a subset ${\cal A}$ of $\vez({\cal I}_{\geq b})$ such that ${\cal D}$ and ${\cal A}$ have the same total weight. Then $\hat{\vez} = (\vez \setminus {\cal A}) \cup {\cal D}$ would be a better solution than $\vez$. Contradiction.

   Due to symmetry, it can be similarly proved that $\Delta({\cal I}^{\overline{W^+}}_k) \leq 2c_B w^{9/5}_{\max}$ for $k \in \{3,4\}$. We provide a full proof for completeness. Consider ${\cal I}^{\overline{W^+}}_3$. For any $w \in \overline{W^+}$, $w$ is not frequent in ${\cal I}_2$. That is, $|{\cal I}_3 \cap {\cal I}^w| < w^{1/5}_{\max}$.
    \[
        \Delta({\cal I}^{\overline{W^+}}_3) \leq \sum_{i \in {\cal I}^{\overline{W^+}}_3} w_i = \sum_{w \in \overline{W^+}} w|{\cal I}_3 \cap {\cal I}^w| \leq |\overline{W^+}|w_{\max}w^{1/5}_{\max} = w^{9/5}_{\max}.
    \]
   Consider ${\cal I}^{\overline{W^+}}_4$. Suppose, for the sake of contradiction, that $\Delta({\cal I}^{\overline{W^+}}_4) > 2c_Bw^{9/5}_{\max}>2c_Bw^{8/5}_{\max}$. That is, we add a large volume of items from ${\cal I}^{\overline{W^+}}_4$ when obtaining $\vez$ from $\veg$. Clearly, ${\cal I}^{\overline{W^+}}_4$ is not empty, so there are exactly $2c_Aw^{2/5}_{\max}\log^2 w_{\max}$ weights that are frequent in ${\cal I}_3$. Recall that, by frequent, we mean $|{\cal I}^w \cap {\cal I}_3| \geq 2w^{1/5}_{\max}$. Then at least one of the following two cases are true.
    \begin{enumerate}[resume*]
        \item there are at least $c_Aw^{2/5}_{\max}\log^2 w_{\max}$ weights $w$ such that  $|{\cal I}^w \cap \overline{\vez}({\cal I}_3)| \geq w^{1/5}_{\max}$.

        \item there are at least $c_Aw^{2/5}_{\max}\log^2 w_{\max}$ weights $w$ such that  $|{\cal I}^w \cap \vez({\cal I}_3)| \geq w^{1/5}_{\max}$.
    \end{enumerate}

    \paragraph{Case (iii).} $\vez({\cal I}^{\overline{W^+}}_4)$ has a total weight of $\Delta({\cal I}^{\overline{W^+}}_4) > 2c_Bw^{8/5}_{\max}$. For $\overline{\vez}({\cal I}_3)$, there are at least $c_Aw^{2/5}_{\max}\log^2 w_{\max}$ weights $w$ such that  $|{\cal I}^w \cap \overline{\vez}({\cal I}_3)| \geq w^{1/5}_{\max}$. One can verify that the weight multisets of $\vez({\cal I}^{\overline{W^+}}_4)$ and $\overline{\vez}({\cal I}_3)$ satisfy the conditions of Lemma~\ref{lem:hit-b}. By Lemma~\ref{lem:hit-b}, there is a subset ${\cal Z}$ of $\vez({\cal I}^{\overline{W^+}}_4)$ and a subset ${\cal D}$ of $\overline{\vez}({\cal I}_3)$ such that ${\cal D}$ and ${\cal Z}$ have the same total weight. Then $\hat{\vez} = (\vez \setminus {\cal Z}) \cup {\cal D}$ would be a better solution than $\vez$. Contradiction.

    \paragraph{Case (iv).} there are at least $c_Aw^{2/5}_{\max}\log^2 w_{\max}$ weights $w$ such that  $|{\cal I}^w \cap \vez({\cal I}_3)| \geq w^{1/5}_{\max}$. $\Delta({\cal I}^{\overline{W^+}}_4) > 2c_Bw^{8/5}_{\max}$ implies that 
    \[
        \Delta({\cal I}_{< b}) \geq \Delta({\cal I}^{\overline{W^+}}_4) - w_{\max} \geq c_Bw^{8/5}_{\max}.
    \]
    Note that the total weight of $\overline{\vez}({\cal I}_{< b})$ is exactly $\Delta({\cal I}_{< b})$. One can verify that the weight multisets of $\vez({\cal I}_3)$ and $\overline{\vez}({\cal I}_{< b})$ satisfy the conditions of Lemma~\ref{lem:hit-b}. By Lemma~\ref{lem:hit-b}, there is a subset ${\cal A}$ of $\vez({\cal I}_3)$ and a subset ${\cal D}$ of $\overline{\vez}({\cal I}_{< b})$ such that ${\cal D}$ and ${\cal A}$ have the same total weight. Then $\hat{\vez} = (\vez \setminus {\cal A}) \cup {\cal D}$ would be a better solution than $\vez$. Contradiction.
\end{proof}

\section{Proof of Lemma~\ref{lemma:sum-union}}
\lemsumunion*
\begin{proof}
    It is obvious that ${\cal S}(X)\supseteq  \sum_{i=0}^{\ell}2^i{\cal S}(X_i)$. We prove below that ${\cal S}(X)\subseteq  \sum_{i=0}^{\ell}2^i{\cal S}(X_i)$, and Lemma~\ref{lemma:sum-union} follows. Take an arbitrary $t'\in {\cal S}(X)$, by definition there exist $\eta_j\leq u_j=\sum_{i=0}^{\ell}u_j[i]\cdot 2^i$ for all $1\leq j\leq n$ such that
    \[t'=\sum_j w_j\eta_j.\]

We claim that 
\begin{Claim}\label{claim:1}
If $\eta_j\leq u_j=\sum_{i=0}^{{\ell}_j}u_j[i]\cdot 2^i$, then there exist $\eta_j[i]$'s such that $\eta_j=\sum_{i=0}^{{\ell}_j}\eta_j[i]\cdot 2^i$ and $\eta_j[i]\leq u_j[i]$ for all $0\leq i\leq {\ell}_j$. 
\end{Claim}
Suppose the claim is true, then since $X_i$ contains $u_j[i]$ copies of weight $w_j$, we know that $\sum_j\eta_j[i]w_j\in {\cal S}(X_i)$. Thus, $\sum_j\eta_j[i]\cdot 2^iw_j\in 2^i{\cal S}(X_i)$, and consequently $t'=\sum_i\sum_j\eta_j[i]\cdot 2^iw_j\in \sum_i2^i{\cal S}(X_i)$, and Lemma~\ref{lemma:sum-union} is proved.

It remains to prove Claim~\ref{claim:1}. We prove by induction on ${\ell}_j$. Claim~\ref{claim:1} is obviously true for ${\ell}_j=0$. Suppose it is true for ${\ell}_j=h-1$, we prove that it is also true for ${\ell}_j=h$. Towards this, let $r_t\in \{0,1\}$ be the residue of $t$ modulo $2$. Recall that $u_j[i]\in [2,4)$ for $0\leq i\leq {\ell}_j-1$. There are two possibilities: (i). If $r_t\leq u_j[0]-2$, then we let $\eta_j[0]=r_t+2$; (ii). If  $u_j[0]-2<r_t\leq u_j[0]$, then we let $\eta_j[0]=r_t$. It is clear that $\eta_j[0]\leq u_j[0]$ is always true. Meanwhile, we also have $\eta_j[0]+2>u_j[0]$. 

By the fact that $\eta_j[0]\equiv\eta_j\pmod 2$ and $u_j[0]\equiv u_j \pmod 2$, $\eta_j-\eta_j[0]$ and $u_j-u_j[0]$ are multiples of $2$. Using that $\eta_j\leq u_j$ and $\eta_j[0]+2>u_j[0]$, we have
\[\eta_j-\eta_j[0]<u_j-u_j[0]+2.\]
Thus,
\[\frac{\eta_j-\eta_j[0]}{2}<\frac{u_j-u_j[0]}{2}+1.\]
Since $\frac{\eta_j-\eta_j[0]}{2}$ and $\frac{u_j-u_j[0]}{2}$ are both integers, we have that 
\[\frac{\eta_j-\eta_j[0]}{2}\leq \frac{u_j-u_j[0]}{2}=\sum_{i=0}^{\ell-1}u_j[i+1]\cdot 2^i.\]
Using the induction hypothesis, there exists $\bar{\eta}_j[i]$ such that
$\frac{\eta_j-\eta_j[0]}{2}=\sum_{i=0}^{\ell-1}\bar{\eta}_j[i]\cdot 2^i$ where $\bar{\eta}_j[i]\leq u_j[i+1]$. Recall that $\eta_j[0]\leq u_j[0]$. Consequently, Claim~\ref{claim:1} is true.
\end{proof}

\bibliographystyle{plainurl}
\bibliography{ref}

\end{document}